\documentclass[12pt]{article}
\usepackage{amsmath}
\usepackage{graphicx}
\usepackage{natbib}
\usepackage{url} 
\usepackage{amsmath, amsthm, amssymb, natbib, latexsym,algorithm,multirow,enumerate,url,booktabs,xcolor,bbm,soul, sectsty, epsfig, cancel, makecell}

\newcommand{\blind}{0}
\usepackage[title]{appendix}
\linespread{1.1}
\usepackage{xcolor}
\usepackage{soul}
\usepackage{bm}

\usepackage{caption}
\captionsetup{font=small}

\addtolength{\oddsidemargin}{-.5in}%
\addtolength{\evensidemargin}{-1in}%
\addtolength{\textwidth}{1in}%
\addtolength{\textheight}{1.7in}%
\addtolength{\topmargin}{-1in}%



\setlength{\bibsep}{0pt plus 0.3ex}
\def\e{{\rm e}}                        
\def\tr{\text{\rm tr}}
\def\vec{\text{\rm vec}}
\def\d{\text{\rm d}}

\def\Var{\text{\rm Var}}
\def\logit{\text{\rm logit}}

\def\diag{\text{\rm diag}}
\def\dg{\text{\rm dg}}
\def\vech{\text{\rm vech}}

\newcommand\Poisson{\text{Poisson}}

\def\mG{{\mathcal{G}}}

\def\N{{\text{N}}}
\def\IG{{\text{IG}}}
\def\E{{\text{E}}}
\def\SN{{\text{SN}}}

\def\CSN{{\text{CSN}}}
\def\bfone{{\bf 1}}

\def\K{\mathsf{K}}
\def\Ed{\mathsf{E}_d}
\def\El{\mathsf{E}_\ell}
\def\Eu{\mathsf{E}_u}

\newcommand\mL{{\mathcal{L}}}

\newtheorem{theorem}{Theorem}

\newtheorem{lemma}{Lemma}

\setlength{\leftmargini}{8pt}
\setlength{\leftmarginii}{13pt}
\sloppy

\graphicspath{{figures/}}

\begin{document}

\def\spacingset#1{\renewcommand{\baselinestretch}%
{#1}\small\normalsize} \spacingset{1}


\if0\blind
{
  \title{\bf Variational inference based on \\[1mm]
a subclass of closed skew normals}
  \author{Linda S. L. Tan  (statsll@nus.edu.sg)\thanks{
    The authors gratefully acknowledge the support of this research by the Ministry of Education, Singapore, under its Academic Research Fund Tier 2 (Award MOE-T2EP20222-0002).}\hspace{.2cm}\\
    Department of Statistics and Data Science, National University of Singapore\\
    and \\
    Aoxiang Chen (e0572388@u.nus.edu) \\
    Department of Statistics and Data Science, National University of Singapore}
  \maketitle
} \fi

\if1\blind
{
  \bigskip
  \bigskip
  \bigskip
  \begin{center}
    {\LARGE\bf Title}
\end{center}
  \medskip
} \fi

\bigskip
\begin{abstract}
Gaussian distributions are widely used in Bayesian variational inference to approximate intractable posterior densities, but the ability to accommodate skewness can improve approximation accuracy significantly, when data or prior information is scarce. We study the properties of a subclass of closed skew normals constructed using affine transformation of independent standardized univariate skew normals as the variational density, and illustrate how it provides increased flexibility and accuracy in approximating the joint posterior in various applications, by overcoming limitations in existing skew normal variational approximations. The evidence lower bound is optimized using stochastic gradient ascent, where analytic natural gradient updates are derived. We also demonstrate how problems in maximum likelihood estimation of skew normal parameters occur similarly in stochastic variational inference, and can be resolved using the centered parametrization. Supplemental materials are available online.
\end{abstract}

\noindent%
{\it Keywords:}  Closed skew normal; Gaussian variational approximation; natural gradient; centered parametrization; LU decomposition
\vfill

\newpage
\spacingset{1.75} 

\section{Introduction}
Variational inference \citep{Blei2017, Zhang2018} is a popular and scalable alternative to Markov chain Monte Carlo (MCMC) methods for approximate Bayesian inference. Given observed data $y$, the intractable posterior density $p(\theta|y)$ of the variables $\theta$ is approximated by a variational density $q(\theta)$, which is assumed to satisfy some restrictions, such as lying in a parametric family, or being of a factorized form \cite[variational Bayes,][]{Attias1999}. The Kullback-Leibler (KL) divergence between the true posterior and variational density is then minimized under these constraints. As 
\begin{equation} \label{KL div}
\log p(y) = \underbrace{\int q(\theta) \log \frac{p(y, \theta)}{q(\theta)} d\theta}_{\text{Evidence lower bound}} +  \underbrace{\int q(\theta) \log \frac{q(\theta)}{p(\theta|y)} d\theta}_{\text{KL divergence}},
\end{equation}
this is equivalent to maximizing an evidence lower bound on the log marginal likelihood. 

Variational Bayes is widely applied due to the availability of analytic coordinate ascent updates for conditionally conjugate models \citep{Durante2019, Ray2022}, and scalability to massive data sets via subsampling \citep{Hoffman2013}. \cite{Wang2019} established frequentist consistency and asymptotic normality of variational Bayes, but assuming independence among strongly correlated variables can cause underestimation of posterior variance \citep{Turner2011}. Dependencies among variables can be restored through structured \citep{Salimans2013, Hoffman2015} and copula variational inference \citep{Han2016}. \cite{Saha2020} achieved a tighter lower bound by minimizing the R\'{e}nyi-$\alpha$ divergence instead on a nonparametric manifold, while \cite{Tan2021} developed reparametrized variational Bayes for hierarchical models by transforming local variables to be approximately independent of global variables. A partially factorized approach for high-dimensional Bayesian probit regression \citep{Fasano2022}, allowed global variables to depend on independent latent variables. \cite{Loaiza2022} proposed a hybrid variational approximation for state space models, using a Gaussian copula for global variables and sampling latent variables conditionally via MCMC.

Gaussian variational approximation \citep{Opper2009}, where the posterior density is approximated by a Gaussian, is widely used and captures correlation among variables \citep{Quiroz2023}. It is offered via automatic differentiation variational inference in Stan \citep{Kucukelbir2016}, and transformations can improve the normality of constrained or skewed variables \citep{Yeo2000, Yan2019}. As the number of variational parameters scale quadratically with the dimension, \cite{Tan2018} captured posterior conditional independence via a sparse precision matrix, while \cite{Ong2018} used a factor covariance structure, and \cite{Zhou2021} applied Stiefel and Grassmann manifold constraints. From the Bernstein-von Mises theorem, posteriors in parametric models converge to a Gaussian at the rate of $\mathcal{O}(1/\sqrt{n})$ \citep{Doob1949, vandervaart2000}, but large sample sizes may be required for close resemblance.
 
\cite{Anceshi2023} showed that posteriors of the probit, tobit and multinomial probit models belong to unified skew normals \citep{Arellano2006}. \cite{Durante2023} proved the skewed Bernstein-von Mises theorem, showing that posteriors in regular parametric models converge to the generalized skew normal \citep{Genton2005} at a faster rate of $\mathcal{O}(1/n)$, and obtained a skew-modal approximation. \cite{Ormerod2011}, \cite{Lin2019b} and \cite{Zhou2024} employed the multivariate skew normal \citep{Azzalini1999} as variational density, while \cite{Smith2020} used it to construct implicit copulas. \cite{Ormerod2011} used the BFGS algorithm \citep{Nocedal1999} to maximize the lower bound while \cite{Lin2019b} used stochastic natural gradients, and \cite{Zhou2024} matched key posterior statistic estimates. \cite{Salomone2024} used skew decomposable graphical models \citep{Zareifard2016} for variational inference.

We consider a subclass of closed skew normals \citep[CSNs,][]{Gonzalez2004}, constructed via affine transformations of independent univariate skew normals as variational density. This subclass has closed forms for its moments and marginal densities, and is flexible in approximating joint posteriors, as a bounding line is permitted in each dimension, unlike the skew normal whose tail is bounded by a single line \citep{Sahu2003}. Several novel contributions are made. First, we show that the lower bound of this subclass is stationary when the skewness parameter, $\lambda=0$, and that problems in maximum likelihood estimation of skew normal parameters due to this stationary point \citep{Azzalini1999} persist in variational inference.  Second, we show that a ``centered parametrization" composed of the mean, transformed skewness and decomposition of the covariance matrix, akin to that of \cite{Arellano2008}, can resolve optimization issues due to the stationary point. Third, we derive analytic natural gradients \citep{Amari2016} for improving optimization of the lower bound using stochastic gradient ascent, by considering a data augmentation scheme and a decomposition that ensures positive definiteness of the covariance matrix. Finally, we demonstrate that this subclass exhibits flexibility and improved accuracy in approximating the joint posterior in various applications. 

Research in more expressive variational densities is ongoing actively. Mixture models \citep{Jaakkola1998, Campbell2019, Daudel2021} are very flexible and can capture multimodality, but are expensive computationally and may pose challenges in scalability. In amortized variational inference \citep{Kingma2014, Rezende2014}, an inference function (taken as a deep neural network in variational autoencoders) maps each data point to variational parameters of the corresponding latent variable. This approach is scalable as parameters of the deep neural network are shared across data points, but may be suboptimal to variational Bayes \citep{Margossian2024}. Normalizing flows \citep{Papamakarios2021} mold a base density into a more expressive one by applying a sequence of bijective and differentiable transformations. Efficiency of computing the inverse transformation and Jacobian determinant is application dependent, and examples include the planar and radial \citep{Rezende2015}, real-valued non-volume preserving \citep[real NVP,][]{Dinh2017} and neural spline \citep{Durkan2019} flows.

First, we review existing multivariate skew normals and discuss properties of the proposed CSN subclass in Section \ref{Sec_skew normal}. In Section \ref{Sec optimization}, we illustrate and address challenges in optimizing the lower bound when it can be evaluated almost exactly. We then discuss the intractable lower bound and present natural gradients for stochastic inference in Section \ref{Sec gradients}. Performance of the CSN subclass is evaluated across various applications and compared to normalizing flows in Section \ref{Sec: Applications}. Finally, Section \ref{Sec conclusion} concludes with a discussion.

\section{Multivariate skew normal distributions} \label{Sec_skew normal}
The multivariate skew normal \citep[SN,][]{Azzalini1999} is well-studied and the probability density function (pdf) of $\theta \sim \SN_d(\mu, \Sigma, \lambda)$ is
\begin{equation} \label{SN}
p(\theta) = 2 \, \phi_d(\theta|\mu, \Sigma) \, \Phi \{\lambda^\top (\theta - \mu) \},
\end{equation}
where $\mu \in \mathbb{R}^d$ and $\lambda \in \mathbb{R}^d$ are the location and shape parameters respectively, and $\Sigma$ is a $d \times d$ symmetric positive definite matrix. Let $\phi_d(\cdot| \mu, \Sigma)$ denote the $d$-dimensional Gaussian density with mean $\mu$ and covariance matrix $\Sigma$, and $\Phi(\cdot)$ the cumulative distribution function (cdf) of the univariate standard normal. We have $\E(\theta) = \mu + b\delta_s$ and $\Var(\theta) = \Sigma - b^2 \delta_s \delta_s^\top$, where $b = \sqrt{2/\pi}$ and $\delta_s = \Sigma \lambda/\sqrt{1 + \lambda^\top \Sigma \lambda}$. If
\[
\begin{bmatrix} \theta_0 \\  \theta_1 \end{bmatrix} \sim \N(0, \Sigma_s), \quad
\Sigma_s = \begin{bmatrix} 1 &\delta_s^\top \\  \delta_s &\Sigma \end{bmatrix}, \quad 
\theta_0 \in \mathbb{R}, \; 
\theta_1 \in \mathbb{R}^d,
\]
then $\theta$ can be represented as $\theta = \theta_1|\theta_0 > 0$ \citep{Arellano2006}. To analyze the role of $\lambda = (\lambda_1, \dots, \lambda_d)^\top$, we set $\mu = 0$ and $\Sigma = I_d$. In Figure \ref{F1}, the univariate densities become increasingly skewed as $|\lambda|$ increases. For contours plots of the bivariate densities, the angle of inclination is captured by the ratio of $\lambda_1$ and $\lambda_2$ while the degree of flattening against the bounding line increases with their magnitude. 
\begin{figure}[htb!]
\centering
\includegraphics[width=\textwidth]{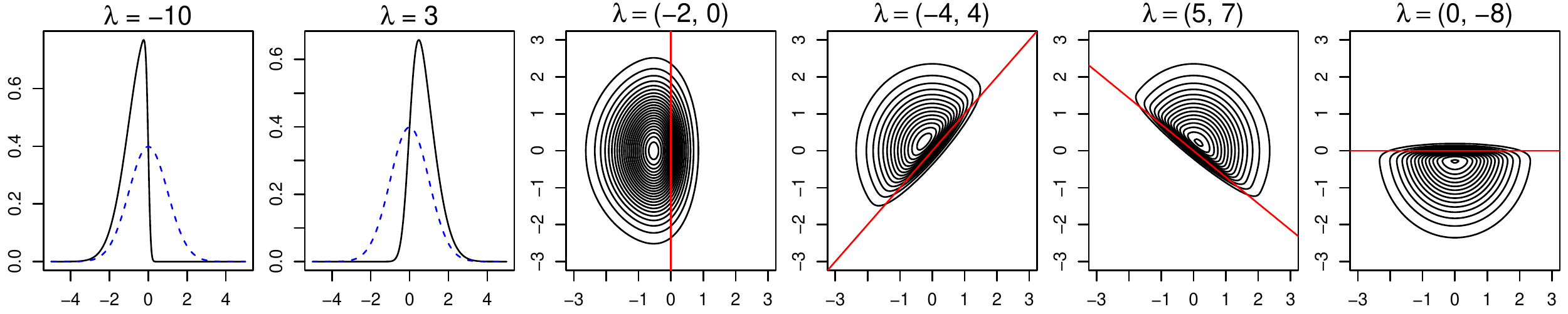}
\caption{First two plots show univariate SN densities. Standard normal is added in dashed lines. Last four are contour plots of bivariate SN densities (bounding line has gradient $-\lambda_1/\lambda_2$).}\label{F1}
\end{figure}
Any inclination angle can be achieved by varying $\lambda$, but densities bounded in more than one direction cannot be captured. This feature of the SN becomes more limiting as the dimension increases, and impedes its ability to provide accurate variational approximations. This constraint stems from $\theta$ being constructed as $\theta_1$ conditioned on a single random variable $\theta_0$ being positive, and \cite{Sahu2003} propose conditioning on $d$ random variables instead.

More generally, \cite{Gonzalez2004} introduced the CSN, which conditions on $q$ random variables. The pdf of $\theta \sim \CSN_{d, q}(\mu, \Sigma, D, \nu, \Delta)$ is
\[
p(\theta) = \phi_d(\theta| \mu, \Sigma) \, \Phi_q(D(\theta-\mu)|\nu, \Delta) / \Phi_q(0| \nu, \Delta + D \Sigma D^\top),
\]
where $\mu \in \mathbb{R}^d$, $\nu \in \mathbb{R}^q$, $D$ is a $q \times d$ matrix, $\Sigma$ and $\Delta$ are $d \times d$ and $q \times q$ positive definite matrices respectively, and $\Phi_q(\cdot|\mu, \Sigma)$ denotes cdf of a normal random vector in $\mathbb{R}^q$ with mean $\mu$ and covariance matrix $\Sigma$. As its name suggests, the CSN enjoys many closure properties, made possible by inclusion of $\Phi_q(\cdot)$ and extra parameters $\nu$ and $\Delta$. For instance, it is closed under the sum of independent CSN random vectors (a property absent for the skew normal) and affine transformation. From Proposition 2.3.1 of \cite{Gonzalez2004}, if $A$ is a $p \times d$ matrix of rank $p \leq d$ and $c \in \mathbb{R}^p$, then 
\begin{equation} \label{affine transf}
A \theta + c \sim \CSN_{p,q}( A \mu + c, \, \Sigma_A, \, D_A, \, \nu, \, \Delta_A),
\end{equation}
where $\Sigma_A = A \Sigma A^\top$, $D_A = D \Sigma A^\top \Sigma_A^{-1}$ and $\Delta_A = \Delta + D \Sigma D^\top - D \Sigma A^\top \Sigma_A^{-1} A \Sigma D^\top $. If $p=d$ and $A$ is invertible, then $D_A = DA^{-1}$ and $\Delta_A = \Delta$. \cite{Arellano2006} proposed an alternative formulation of the CSN known as the ``unified skew normal".

The CSN is attractive for its flexibility and closure properties, but evaluation of $\Phi_q(\cdot)$ is challenging for large $q$ \citep{Genton2018}. For a fast and flexible variational approximation, we consider CSNs where $q=d$, and both $\Delta$ and $\Delta + D\Sigma D^\top$ are diagonal matrices. This setting maintains flexibility, as it can capture densities with a bounding line in each dimension. It can also be optimized efficiently since $\Phi_q(\cdot|\mu, \Sigma)$ can be computed as a product of $q$ univariate cdfs when $\Sigma$ is diagonal. Motivated by \cite{Urbina2022}, we construct a CSN subclass via transformation of independent univariate skew normals for the variational approximation in the next section.

\subsection{A closed skew normal subclass}

For $i=1, \dots, d$, let $v_i \sim \SN_1(0, 1, \lambda_i)$ independently. Then the pdf of $v=(v_1, \dots, v_d)^\top$ is 
\[
p(v) = 2^d \, \phi_d(v|0, I_d) \, \Phi_d(D_\lambda v),
\]
where $D_\lambda = \diag(\lambda)$ and $\Phi_d(x) = \Phi_d(x|0, I_d) = \prod_{i=1}^d \Phi(x_i)$ for any $x=(x_1, \dots, x_d)^\top$. Let $\delta = (\delta_1, \dots, \delta_d)^\top$ and $\tau = (\tau_1, \dots, \tau_d)^\top$ where $\delta_i = \lambda_i/\sqrt{1 + \lambda_i^2}$ and $\tau_i = \sqrt{1-b^2\delta_i^2}$, and $D_\tau = \diag(\tau)$. Then $v \sim \CSN_{d, d}(0, I_d, D_\lambda, 0, I_d)$, where $\E(v) = b\delta$ and $\Var(v) = D_\tau^2$. First we standardize $v$ by defining $z = D_\tau^{-1} (v - b\delta)$, so that $\E(z) = 0$ and $ \Var(z) = I_d$. To construct a CSN random vector $\theta$ with mean $\mu$ and covariance matrix $\Sigma = CC^\top$, we define
\[
\begin{aligned}
\theta &= \mu + C z,
\end{aligned}
\]
where $\mu \in \mathbb{R}^d$ denotes the translation and $C$ is a $d \times d$ invertible matrix representing the linear map. Since the CSN is closed under affine transformation, $\theta \sim \CSN_{d,d}( \mu^* , \Sigma^*, D^*, 0, I_d)$ from \eqref{affine transf}, where $\mu^* = \mu - b C \alpha$, $\Sigma^* = C D_{\tau}^{-2} C^\top$,  $D^* = D_\lambda D_\tau C^{-1}$ and $\alpha = D_\tau^{-1} \delta$. The pdf is 
\begin{equation} \label{Proposed CSN}
q(\theta) = 2^d \, \phi(\theta|\mu^*, \Sigma^*) \, \Phi_d \{D^*(\theta - \mu^*) \},
\end{equation}
and $q(\theta)$ reduces to $\SN_1(\mu - b \sigma \alpha, \sigma^2 / \tau^2, \lambda \tau / \sigma)$, where we write $C$ as $\sigma$ if $d=1$. For $d > 1$, $q(\theta)$ does not coincide with the skew normal. Based on the transformation of $\theta$ from $v$, we can write $q(\theta) = p(v) |C^{-1}| |D_\tau|$. Then 
\begin{equation} \label{log q}
\log q(\theta) = d\log(2) - \frac{d}{2}\log(2\pi) - \frac{v^\top v}{2} - \log |C| + \sum_{i=1}^d \{\log \Phi (\lambda_i v_i) + \log \tau_i\},
\end{equation}
where $v = D_\tau C^{-1} (\theta - \mu) + b \delta$. This CSN subclass falls in the affine independent variational inference framework of \cite{Challis2012}, who used fast Fourier transform to fit variational approximations to a class of nonconjugate models. In contrast, we do not impose any model restrictions and the lower bound is optimized using stochastic (natural) gradient ascent. Our variational approximation is more parsimonious compared to \cite{Smith2020}, who transformed the target distribution elementwise before approximating with Gaussian or skew normal distributions. Another related approach by \cite{Salomone2024} uses a Cholesky decomposition of the precision instead of covariance matrix.

\subsection{Decomposition of covariance matrix} \label{Sec decomposition}

Figure \ref{F2} shows some bivariate densities of the CSN subclass. Compared to Figure \ref{F1}, fan-shaped densities with two bounding lines can now be captured, but the number of bounding lines decreases if $\lambda_1$ or $\lambda_2$ is zero. Figure \ref{F2} also highlights a limitation in the role of $\lambda$: it is not possible to rotate the densities without altering the degree of flattening using $\lambda$. The role of rotation (and scaling) have thus fallen onto the linear map $C$.  

\begin{figure}[htb!]
\centering
\includegraphics[width=\textwidth]{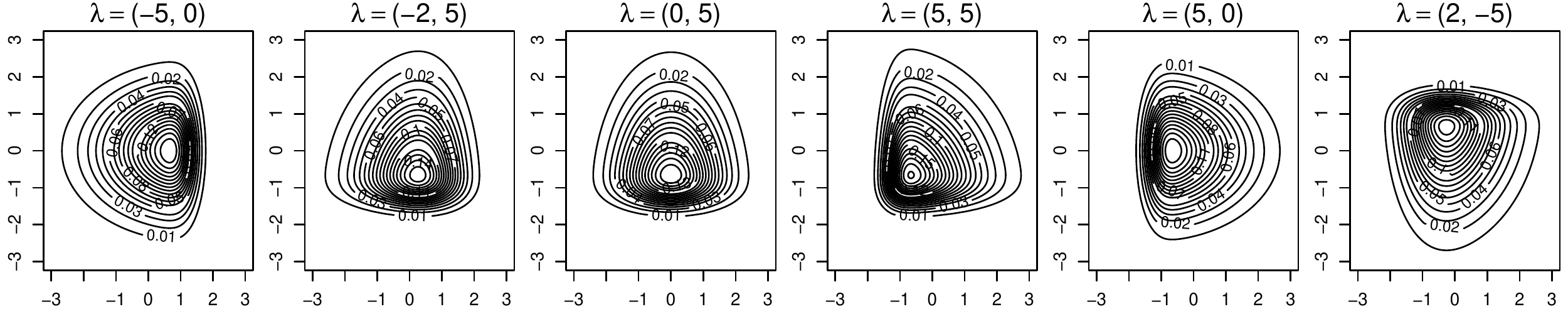}
\caption{Bivariate density contour plots of CSN subclass with $\mu=0$ and $C=I_2$.}\label{F2}
\end{figure}

For decomposing the covariance matrix $\Sigma = CC^\top$, ensuring $\Sigma$ is positive definite while allowing unconstrained optimization of the lower bound is crucial. Two main approaches are Cholesky factorization and spectral decomposition \citep{Pinheiro1996}. Cholesky factorization is more efficient computationally and independence assumptions can be imposed easily. If $\theta = (\theta_1^\top, \theta_2^\top)^\top$ and $C= \diag(C_{11}, C_{22})$ is a corresponding partitioning, then $\theta_1$ and $\theta_2$ are independent from \eqref{Proposed CSN}. Conditional independence can also be imposed via Cholesky decomposition of $\Sigma^{-1}$ \citep{Salomone2024}. However, a disadvantage is that $C$ will be a triangular matrix that is unable to capture rotations in every plane. 

We also considered parametrizing in terms of Givens rotation matrices $\{G_{ij}\}$. As $\Sigma$ is positive definite, we can write $\Sigma = Q D Q^\top$, where $D$ is a diagonal matrix containing eigenvalues of $\Sigma$, and $Q$ is an orthogonal matrix containing orthonormal eigenvectors of $\Sigma$. We represent $Q$ as $\prod_{i=1}^{d-1} \prod_{j=i+1}^d G_{ij}$, where $G_{ij}$ is an identity matrix with 4 elements modified as $G_{i,i} = G_{j,j} = \cos(\omega_{ij})$ and $G_{j,i} = - G_{i,j} = \sin(\omega_{ij})$. The number of parameters of $C = Q D^{1/2}$ remains as $d(d+1)/2$, but all scalings are captured by $D^{1/2}$, which must be performed {\em prior} to rotations by $Q$. As order matters, this limits permissible transformations. Moreover, further computation is required in finding $C$ given $\omega$, although $Q^{-1} = Q^\top$.

As $CC^\top$ is already symmetric, we consider the $C=LU$ decomposition, where $L$ is lower triangular and $U$ is upper triangular with unit diagonal for uniqueness. This ensures that $\Sigma$ is positive definite and $C^{-1}$ can be computed efficiently. In the following, we consider $C$ as a Cholesky factor (CSNC) or $C=LU$ (CSNLU). In high-dimensional settings, decomposing the covariance matrix into a factor structure instead, can enhance computational efficiency and provide insight into parameter interactions \citep{Ong2018}.

\subsection{Properties of closed skew normal subclass} \label{sec prop of CSN}
Next, we discuss some properties of this CSN subclass. Let $C[i,:]$ and $C[:,j]$ denote the $i$th row and $j$th column of a matrix $C$ respectively, $C_{ij}$ denote the $(i,j)$ element and define $\zeta_r(x)$ as the $r$th derivative of $\zeta_0(x) = \log \{ 2\Phi(x) \}$. 

\begin{enumerate}[(P1)]
\item $\E(\theta) = \mu$ and $\Var(\theta) = \Sigma = CC^\top$ (by construction).
\item The cumulant generating function (log of the moment generating function) is 
\[
K_\theta(t) = t^\top \mu^* + \frac{1}{2} t^\top \Sigma^* t  + \sum_{j=1}^d \zeta_0(\alpha_j C[:,j]^\top t), \quad t \in \mathbb{R}^d.
\]
This result follows from Lemma 2.2.2 of \cite{Gonzalez2004}.
\item Marginal density of the $i$th element of $\theta$ is 
$\theta_i \sim \CSN_{1,d} (\mu^*_i  , \Sigma^*_{ii}, D_i, 0, \Delta_i )$,
where $\mu^*_i = \mu_i - b C[i, :]^\top \alpha$ is the $i$th element of $\mu^*$, $\Sigma^*_{ii} = b_i^\top b_i$ is the $(i, i)$ element of $\Sigma^*$, $D_i = D_\lambda b_i /\Sigma^*_{ii}$, $\Delta_i = I_d + D_\lambda ( I_d - b_i b_i^\top/\Sigma^*_{ii}) D_\lambda$ and $b_i = D_\tau^{-1} C[i,:]$. This result is obtained by setting $A$ as $e_i^\top$ and $c=0$ in \eqref{affine transf}, where $e_i \in \mathbb{R}^d$ is binary with only the $i$th element equal to one.
Hence, the marginal pdf of $\theta_i$ is 
\[
q(\theta_i) = 2^d \,  \phi(\theta_i|\mu^*_i, \Sigma^*_{ii}) \, \Phi_d(D_i(\theta_i - \mu_i^*)|0, \Delta_i) ,
\]
which depends only on $\mu_i$, $C[i,:]$ and $\lambda$. To evaluate $\Phi_d(\cdot)$ for large $d$, it may be useful to use the hierarchical Cholesky factorization approach \citep{Genton2018}.
\item Cumulant generating function of $\theta_i$ is $K_{\theta_i}(t) = \mu_i^* t + \tfrac{1}{2} \Sigma_{ii}^* t^2 + \sum_{j=1}^d  \zeta_0(\alpha_j C_{ij} t)$, $t \in \mathbb{R}$.
\item The marginal mean, variance and Pearson's index of skewness of $\theta_i$ are 
$\mu_i$, $\Sigma_{ii}$ and $b(2b^2-1) (\sum_{j=1}^d \alpha_j^3 C_{ij}^3)/ \Sigma_{ii}^{3/2}$ respectively.
\item To simulate from $q(\theta)$, we employ the stochastic representation, 
\begin{equation*}
\begin{aligned}
\theta|w &\sim \N( \mu  + C D_\alpha \widetilde{w} ,\,  C D_{\kappa}^2 C^\top), 
\quad 
w \sim \N(0, I_d),
\end{aligned}
\end{equation*}
where $\widetilde{w} =|w| - b \bfone$, $|w| = (|w_1|, \dots, |w_d|)^\top$, $D_\alpha = \diag(\alpha)$, $D_{\kappa} = \diag(\kappa)$ and $\kappa =  (\kappa_1, \dots, \kappa_d)^\top$ where $\kappa_i = 1/\sqrt{1 + (1-b^2) \lambda_i^2}$. This result arises from representation of $v_i \sim \SN_1(0,1,\lambda_i)$ as $v_i = |v_{0}|\delta_i + \sqrt{1- \delta_i^2} v_{1}$ where $v_{0}, v_{1} \sim \N(0,1)$ independently \citep{Arellano2006}. To simulate from $q(\theta)$, we generate $w_1$ and $w_2$ independently from $\N(0, I_d)$ and set $\theta = C(D_\kappa w_2 + D_\alpha \widetilde{w}_1) + \mu$, where $\widetilde{w}_1 = |w_1| - b \bfone$.
\end{enumerate}

\section{Optimization of evidence lower bound} \label{Sec optimization}
Let $p(y|\theta)$ denote the likelihood of observed data $y$ given unknown model parameters $\theta$ and $p(\theta)$ be a prior on $\theta$. Suppose we wish to approximate the posterior density $p(\theta|y)$ by a variational density $q(\theta)$ with parameters $\eta$. From \eqref{KL div}, optimal variational parameters are found by maximizing the lower bound $\mL = \E_q \{h(\theta)\}$, where $\E_q$ denotes expectation with respect to $q(\theta)$, and $h(\theta)=  \log p(y, \theta) - \log q(\theta)$. Applying chain rule, 
\begin{equation} \label{score fn grad}
\begin{aligned}
\nabla_\eta \mL 
&= \int h(\theta)  \nabla_\eta q(\theta) \d\theta - \int q(\theta) \nabla_\eta \log q(\theta) \d\theta 
= \E_q \{h(\theta)  \nabla_\eta \log q(\theta)\},
\end{aligned}
\end{equation}
since $ \nabla_\eta q(\theta) =  q(\theta)  \nabla_\eta \log q_\eta (\theta)$ and the second term (expectation of the score) is zero. This gradient expression is known as the {\em score function method}. Optimal variational parameters can be found by searching for the stationary points of $\mL$ where $\nabla_\eta \mL = 0$, but this approach may encounter difficulties when the skew normal is used as variational density.

\subsection{Stationary point and alternate parametrizations}
The skew normal reduces to a Gaussian at $\lambda = 0$, and is known for peculiar traits in this vicinity. The log-likelihood is non-quadratic in shape, and has a stationary point at $\lambda = 0$ for any observed data, creating difficulties in maximum likelihood estimation. The Fisher information matrix is also singular at $\lambda=0$ due to a mismatch between the symmetric kernel and skewing function \citep{Hallin2014}. This results in slower convergence, and possibly a bimodal asymptotic distribution for maximum likelihood estimates. 

Here, we consider a CSN subclass which reduces to the SN when $d=1$, but our goal is to maximize the lower bound, and it is unclear if peculiar behavior around $\lambda = 0$ will persist. Theorem \ref{Thm stat pt}, with proof in the supplement S1, shows that the lower bound also has a stationary point at $\lambda =0$, which is unlikely to be the global maximum unless the posterior density does not exhibit any skewness and is best approximated by a Gaussian.

\begin{theorem} \label{Thm stat pt}
Let $\hat{\mu}$ and $\hat{C}$ be optimal parameters of a Gaussian variational  approximation, $q_G(\theta) = \phi_d(\theta|\mu, \Sigma)$, of the posterior density where $\Sigma = CC^\top$. That is, the lower bound is maximized at $\mu=\hat{\mu}$ and $C=\hat{C}$. If a density $q(\theta)$ with parameters $\mu$, $C$ and $\lambda$ from the SN in \eqref{SN} where $\Sigma = CC^\top$, or CSN subclass in \eqref{Proposed CSN} is used as the variational approximation instead, then the lower bound will have a stationary point at $\mu=\hat{\mu}$, $C=\hat{C}$ and $\lambda=0$. 
\end{theorem}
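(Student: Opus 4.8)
The plan is to show that the full gradient of the lower bound $\mL$ with respect to $(\mu, C, \lambda)$ vanishes at $(\hat\mu, \hat C, 0)$, treating the $(\mu,C)$-block and the $\lambda$-block separately and relying throughout on the score-function identity \eqref{score fn grad}, $\nabla_\eta\mL = \E_q\{h(\theta)\nabla_\eta\log q(\theta)\}$. The first observation I would record is that at $\lambda = 0$ both candidate densities collapse to the Gaussian $q_G$ in the \emph{same} $(\mu, C)$ coordinates: for the SN in \eqref{SN}, $\Phi\{0\} = \tfrac12$ gives $q(\theta) = 2\phi_d(\theta|\mu,\Sigma)\cdot\tfrac12 = \phi_d(\theta|\mu, CC^\top)$; for the CSN subclass in \eqref{Proposed CSN}, $\lambda = 0$ forces $\delta = 0$ and $D_\tau = I_d$, hence $\mu^* = \mu$, $\Sigma^* = CC^\top$, $D^* = 0$ and $\Phi_d(0) = 2^{-d}$, so again $q(\theta) = \phi_d(\theta|\mu, CC^\top)$.

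For the $(\mu, C)$-block I would use a slice argument. Since $q(\theta;\mu,C,0) = \phi_d(\theta|\mu,CC^\top) = q_G(\theta;\mu,C)$ holds \emph{identically in} $(\mu,C)$, the restriction of $\mL$ to the hyperplane $\lambda = 0$ coincides with the Gaussian lower bound $\mL_G(\mu,C)$. By hypothesis $(\hat\mu,\hat C)$ maximizes $\mL_G$, so $\nabla_\mu\mL$ and $\nabla_C\mL$ both vanish at $(\hat\mu,\hat C,0)$.

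The substantive step is the $\lambda$-block, where I differentiate $\log q$ in $\lambda$ and evaluate at $\lambda = 0$, $(\mu,C)=(\hat\mu,\hat C)$. For the SN this is immediate: $\nabla_\lambda\log q(\theta) = \frac{\phi\{\lambda^\top(\theta-\mu)\}}{\Phi\{\lambda^\top(\theta-\mu)\}}(\theta-\mu)$ reduces at $\lambda=0$ to $b(\theta-\hat\mu) = b\,\hat\Sigma\,\nabla_\mu\log q_G(\theta)$, because $\phi(0)/\Phi(0) = b$ and $\nabla_\mu\log q_G = \Sigma^{-1}(\theta-\mu)$. Substituting into the score-function identity gives $\nabla_\lambda\mL|_0 = b\,\hat\Sigma\,\E_{q_G}\{h(\theta)\nabla_\mu\log q_G\} = b\,\hat\Sigma\,\nabla_\mu\mL_G(\hat\mu,\hat C) = 0$, so the $\lambda$-gradient is pinned down by the first-order stationarity of the Gaussian fit in its mean. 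For the CSN subclass I expect the stronger fact that $\nabla_\lambda\log q(\theta)|_{\lambda=0} = 0$ \emph{pointwise}, giving $\nabla_\lambda\mL|_0 = 0$ with no appeal to optimality. To see this I would write $u = C^{-1}(\theta-\mu)$, so that $v_i = \tau_i u_i + b\delta_i$ in \eqref{log q} depends on $\lambda$ only through $\lambda_i$; at $\lambda = 0$ one has $\delta_i = 0$, $\tau_i = 1$ (hence $v = u$), $\partial\delta_i/\partial\lambda_i = 1$ and $\partial\tau_i/\partial\lambda_i = 0$, so $\partial v_i/\partial\lambda_i = b$. The quadratic term then contributes $\partial_{\lambda_i}(-\tfrac12 v^\top v)|_0 = -v_i\,\partial_{\lambda_i}v_i = -b u_i$, which cancels exactly against $\partial_{\lambda_i}\log\Phi(\lambda_i v_i)|_0 = b u_i$ (using $\phi(0)/\Phi(0) = b$ and $\partial_{\lambda_i}(\lambda_i v_i)|_0 = v_i = u_i$), while $\partial_{\lambda_i}\log\tau_i|_0 = 0$.

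I expect the main obstacle to be this last cancellation for the CSN subclass: it requires careful chain-rule bookkeeping of how $\lambda$ enters $v$ through both the re-centering $b\delta$ and the re-scaling $\tau = \sqrt{1 - b^2\delta^2}$, and the clean vanishing is precisely the signature of the fixed-moment parametrization, in which an infinitesimal amount of skewness perturbs the density only at second order. The SN case is computationally lighter but conceptually relies on recognizing that its $\lambda$-gradient at $\lambda = 0$ is a multiple of the Gaussian mean-gradient, so that stationarity is inherited rather than pointwise; I would be careful to apply the score-function identity with the common integrand $h$ evaluated at $\lambda = 0$ across both blocks, so that the $\mu$-stationarity and the $\lambda$-gradient are expressed against the same measure $q_G$.
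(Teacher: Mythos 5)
Your proposal is correct, and its crux --- the $\lambda$-block --- coincides with the paper's own proof: both compute $\nabla_\lambda \log q$ at $\lambda = 0$ pointwise, obtaining $b(\theta-\mu)$ for the SN (so that stationarity in $\lambda$ is inherited from the Gaussian $\mu$-optimality through the score-function identity, i.e.\ $\E_{q_G}\{h_G(\theta)(\theta-\hat\mu)\}=0$ at $(\hat\mu,\hat C)$), and exactly $0$ for the CSN subclass via the same chain-rule cancellation of $\partial_{\lambda_i}(-\tfrac12 v^\top v)\big|_0 = -b u_i$ against $\partial_{\lambda_i}\log\Phi(\lambda_i v_i)\big|_0 = b u_i$, with $\partial_{\lambda_i}\log\tau_i\big|_0=0$. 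Where you genuinely deviate is the $(\mu,C)$-block: the paper differentiates $\log q$ with respect to $\mu$ and $\vec(C)$ for both the SN and the CSN, verifies that these scores reduce to the Gaussian scores at $\lambda = 0$ (for the CSN this involves the $\omega$-terms in the supplement), and then applies the score-function identity; you instead observe that $q(\cdot\,;\mu,C,0) = q_G(\cdot\,;\mu,C)$ holds identically in $(\mu,C)$, so that $\mL(\mu,C,0) \equiv \mL_G(\mu,C)$ as functions on the slice $\lambda = 0$, and the $(\mu,C)$-partials at $(\hat\mu,\hat C,0)$ vanish by the optimality hypothesis alone. This slice argument is valid, since partial derivatives in the $\mu$ and $C$ directions only probe $\mL$ within the hyperplane $\lambda = 0$, and it buys a real economy: all explicit score computations are confined to the single block where they are unavoidable, while the paper's route requires checking the collapse of two additional gradient blocks. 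The trade-off is that the paper's computation delivers the full vector $\{\nabla_\eta \log q(\theta)\}_{\lambda=0}$ explicitly, which makes the structure of the degeneracy (and its reuse in later parametrization arguments) more transparent.
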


This phenomenon creates optimization difficulties as iterates may get stuck at the stationary point. Gradient ascent algorithms face sensitivity to initializations and slow convergence near $\lambda = 0$ due to surface flatness. To resolve this issue, we consider alternate parametrizations. \cite{Arellano2008} showed that the Fisher information is nonsingular and log-likelihood is more quadratic-like for a centered parametrization based on its mean, covariance and per dimension skewness index. For our CSN subclass, the mean is $\mu$, covariance is $CC^\top$, while Pearson's index of skewness from (P5) is $b(2b^2-1) \alpha^3$ in one dimension. Hence we consider $\alpha^3 =(\alpha_1^3, \dots, \alpha_d^3)^\top$ in place of $\lambda$, where each $\alpha_i^3$ lies between $\pm (1-b^2)^{-3/2}$ ($\approx \pm 4.565$). In this article, scalar functions applied to vector arguments are evaluated elementwise. Another option to overcome singularities in the Fisher information is to replace $\lambda$ by $\lambda^3$ \citep{Hallin2014}. We compare these parametrizations later.

\subsection{Expression of evidence lower bound}
In some problems, the lower bound $\mL = \E_q\{ \log p(y, \theta) - \log q(\theta) \}$ may be evaluated in closed form or efficiently using numerical integration. For the CSN subclass, the entropy,
\[
\begin{aligned}
H_q = -\E_q \{\log q(\theta)\} &= \tfrac{d}{2} \{ \log(\pi/2) + 1\} + \log |C| - \sum_{i=1}^d [2\E_{\phi(u|0,1)} \{\Phi (\lambda_i u) \log \Phi (\lambda_i u) \} + \log{\tau_i} ],
\end{aligned}
\]
which is obtained by taking expectation of \eqref{log q}. Hence $H_q$ is a function of $\lambda$ and $C$ only, and it is symmetric about $\lambda = 0$. In addition, $\lambda  = 0$ is a stationary point of $H_q$ since
\[
\nabla_{\lambda_i} H_q = b^2 \kappa_i^2 \lambda_i / (1+ \lambda_i^2) -  b  \E_{\phi(u|0,(1+\lambda_i^2)^{-1})} \{u \log \Phi(\lambda_i u) \}/ \sqrt{1 + \lambda_i^2}
\]
is zero at $\lambda=0$. Plots of the entropy in Figure \ref{F3} reveal a flat region around $\lambda = 0$, spanning from about $-1$ to 1 if we parametrize in terms of $\lambda$, complicating the identification of an optimal $\lambda$ without strong data or prior influence. The situation improves if $\lambda^3$ is used but there are many sharp corners, while $\alpha^3$ yields contours that are almost quadratic in shape. 

The term $\E_q\{ \log p(y, \theta) \}$ is model dependent and may not be available in closed form. Such cases are discussed in Section \ref{Sec gradients}. Lemma \ref{Lem closed form exp for CSN} is useful for evaluating the expectation of terms of the form $\exp(s^\top \theta)$ in $\log p(y, \theta)$, and its proof is given in the supplement S2.
\begin{lemma} \label{Lem closed form exp for CSN}
For the density $q(\theta)$ in the CSN subclass in \eqref{Proposed CSN} and any $s \in \mathbb{R}^d$, 
$\exp(s^\top \theta) q(\theta)  = M \tilde{q}(\theta)$, where $M= 2^d  \Phi( D_\alpha C^\top s)  \exp(s^\top  \mu^* + s^\top \Sigma^* s/2)$, $\tilde{q}(\theta)$ is the pdf of $\CSN_{d,d}(\mu^* + \Sigma^*s,  \Sigma^*,  D^*, - D_\lambda D_\tau^{-1} C^\top s, I_d)$, whose mean is $\tilde{\mu} = \mu^* + \Sigma^*s + C D_\alpha \zeta_1(D_\alpha C^\top s)$ and covariance is $\tilde{\Sigma} = \Sigma^* + C D_\alpha^2 \diag\{ \zeta_2(D_\alpha C^\top s) \} C^\top$.
\end{lemma}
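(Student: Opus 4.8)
\noindent\emph{Proof proposal.} The plan is to recognize $\exp(s^\top\theta)q(\theta)$ as an exponentially tilted version of the density \eqref{Proposed CSN} and to perform the tilting directly on its two factors, the Gaussian kernel and the skewing factor $\Phi_d(\cdot)$. First I would complete the square on the Gaussian part via the standard identity $\exp(s^\top\theta)\,\phi(\theta|\mu^*,\Sigma^*) = \exp(s^\top\mu^* + \tfrac12 s^\top\Sigma^* s)\,\phi(\theta|\mu^*+\Sigma^* s,\Sigma^*)$, which already produces the claimed location $\tilde\mu = \mu^* + \Sigma^* s$ inside the Gaussian. Next I would re-centre the argument of the skewing factor about $\tilde\mu$, writing $D^*(\theta-\mu^*) = D^*(\theta-\tilde\mu) + D^*\Sigma^* s$, and use the computation $D^*\Sigma^* = D_\lambda D_\tau C^{-1}\,C D_\tau^{-2}C^\top = D_\lambda D_\tau^{-1}C^\top$ together with the shift identity $\Phi_d(w+c) = \Phi_d(w\,|\,-c,I_d)$ (taking $w = D^*(\theta-\tilde\mu)$ and $c = D^*\Sigma^* s$) to rewrite $\Phi_d\{D^*(\theta-\mu^*)\} = \Phi_d\{D^*(\theta-\tilde\mu)\,|\,\tilde\nu, I_d\}$ with $\tilde\nu = -D^*\Sigma^* s = -D_\lambda D_\tau^{-1}C^\top s$, the stated skewness vector.

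At this point the product $\phi(\theta|\tilde\mu,\Sigma^*)\,\Phi_d\{D^*(\theta-\tilde\mu)\,|\,\tilde\nu,I_d\}$ is exactly the unnormalized kernel of $\CSN_{d,d}(\tilde\mu,\Sigma^*,D^*,\tilde\nu,I_d)$, whose normalizing constant is $\Phi_d\{0\,|\,\tilde\nu, I_d + D^*\Sigma^* D^{*\top}\}$ by the CSN definition. This directly gives $\exp(s^\top\theta)q(\theta) = M\,\tilde q(\theta)$ with $M = 2^d\exp(s^\top\mu^* + \tfrac12 s^\top\Sigma^* s)\,\Phi_d\{0\,|\,\tilde\nu, I_d + D^*\Sigma^* D^{*\top}\}$. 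To match this with the claimed closed form of $M$, I would show $D^*\Sigma^* D^{*\top} = D_\lambda D_\tau^{-1}(C^\top C^{-\top})D_\tau D_\lambda = D_\lambda^2$, so that $I_d + D^*\Sigma^* D^{*\top} = \diag(1+\lambda_i^2)$ is diagonal and the multivariate cdf factorizes into univariate ones. Each factor $\Phi\{-\tilde\nu_i/\sqrt{1+\lambda_i^2}\}$ then collapses, via $\alpha_i = \delta_i/\tau_i = \lambda_i/(\tau_i\sqrt{1+\lambda_i^2})$, to $\Phi\{(D_\alpha C^\top s)_i\}$, yielding $\Phi_d\{0\,|\,\tilde\nu, I_d + D_\lambda^2\} = \Phi(D_\alpha C^\top s)$ and hence the stated $M$. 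A useful consistency check is that integrating the identity over $\theta$ forces $M = \int\exp(s^\top\theta)q(\theta)\,\d\theta = \exp\{K_\theta(s)\}$, which agrees with the above after substituting (P2).

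For the moments of $\tilde q$ I would avoid integrating against the density and instead exploit the tilting relation. Since $M\,\tilde q(\theta) = \exp(s^\top\theta)q(\theta)$ with $M = \exp\{K_\theta(s)\}$, the cumulant generating function of $\tilde q$ is $K_{\tilde q}(t) = K_\theta(s+t) - K_\theta(s)$. Differentiating the expression for $K_\theta$ in (P2) once at $t=0$ and collecting the column contributions $\alpha_j C[:,j]$ into the matrix $CD_\alpha$ gives the mean $\tilde\mu = \mu^* + \Sigma^* s + CD_\alpha\,\zeta_1(D_\alpha C^\top s)$; differentiating twice gives the Hessian $\Sigma^* + \sum_j \zeta_2(\alpha_j C[:,j]^\top s)\,\alpha_j^2\, C[:,j]C[:,j]^\top$, which reassembles into $\tilde\Sigma = \Sigma^* + CD_\alpha^2\,\diag\{\zeta_2(D_\alpha C^\top s)\}C^\top$.

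I expect the main obstacle to be the bookkeeping in matching the abstract CSN parameters to the asserted closed forms without sign or transpose slips: verifying $\tilde\nu = -D^*\Sigma^* s$ equals $-D_\lambda D_\tau^{-1}C^\top s$, diagonalizing $D^*\Sigma^* D^{*\top}$ to $D_\lambda^2$, and converting the vectorized cdf and the column-indexed derivative sums into the compact matrix expressions $\Phi(D_\alpha C^\top s)$, $CD_\alpha\,\zeta_1(D_\alpha C^\top s)$ and $CD_\alpha^2\,\diag\{\zeta_2(D_\alpha C^\top s)\}C^\top$. The remaining steps are routine Gaussian completion-of-square and CSN normalization algebra.
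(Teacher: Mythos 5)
Your proposal is correct and follows essentially the same route as the paper's proof: complete the square on the Gaussian factor, shift the argument of the skewing cdf to identify the tilted density as $\CSN_{d,d}(\mu^*+\Sigma^*s,\Sigma^*,D^*,-D_\lambda D_\tau^{-1}C^\top s, I_d)$ with the diagonal matrix $I_d+D^*\Sigma^*D^{*\top}=I_d+D_\lambda^2$ making the normalizing constant factorize, and obtain the moments by differentiating the cumulant generating function of $\tilde q$. The only cosmetic differences are that the paper first proves this tilting identity for a general $\CSN_{d,q}(\mu,\Sigma,D,\nu,\Delta)$ (its Lemma S1) and then specializes, and it states the MGF of $\tilde q$ directly rather than deriving it via $K_{\tilde q}(t)=K_\theta(s+t)-K_\theta(s)$ as you do, which amounts to the same computation.
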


\begin{figure}[tb!]
\centering
\includegraphics[width=0.85\textwidth]{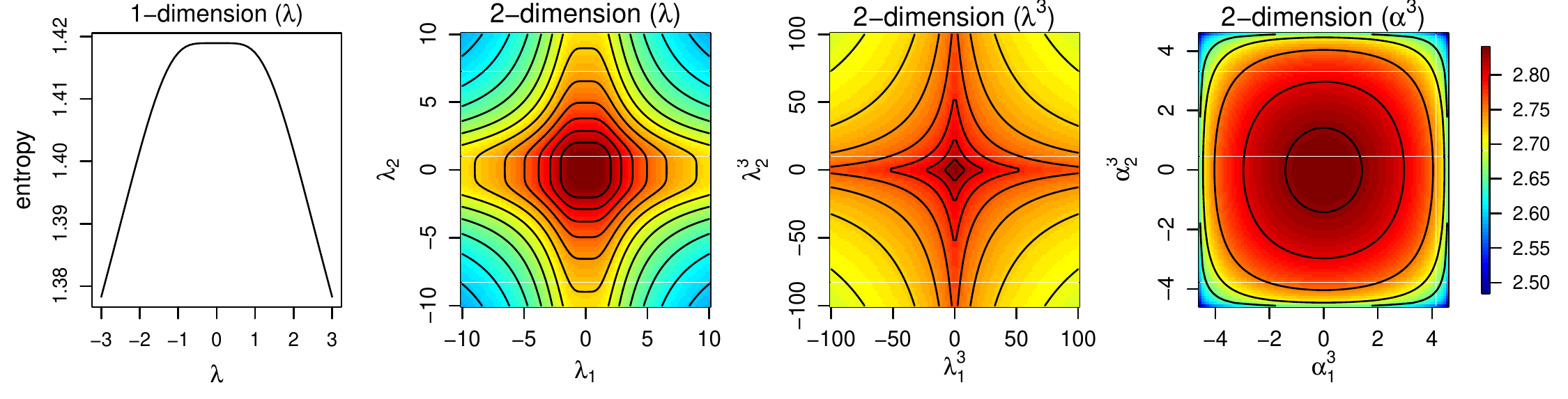}
\caption{Plots of the entropy for different parametrizations with $C=I_2$.}\label{F3}
\end{figure}

Next, we discuss some applications where the lower bound $\mL$ can be computed (almost) exactly to examine how different parametrizations alter its surface. First, we explain how the accuracy of different approaches are assessed.

\subsection{Accuracy assessment}
In variational inference, a higher value of $\mL$ indicates an approximation closer in KL divergence to the true posterior, but it is hard to quantify how significant an improvement of say 0.1 is. Following \cite{Faes2011}, we assess the accuracy of $q(\theta)$ using the {\em integrated absolute error}, $\text{IAE}(q) = \int |q(\theta) -q_{\text{GS}}(\theta) |d\theta$, by comparing it with a gold standard $q_{\text{GS}}(\theta)$, computed using numerical integration or the kernel density estimate based on MCMC samples. The IAE is invariant to monotone transformations of $\theta$ and lies in (0,2). Thus we use $\text{Accuracy}(q) = 1 - \text{IAE}(q) /2$, expressed as a percentage. MCMC is performed in {RStan}, where 2 chains each of 50000 iterations are run in parallel and the first half is discarded as burn-in. Remaining 50000 draws are used for the kernel density estimate.

In Section \ref{Sec: Applications}, we further evaluate the {\em multivariate accuracy} of variational approximation relative to MCMC using two metrics, cross-match non-bipartite statistic \citep{Rosenbaum2005} and maximum mean discrepancy \citep[MMD, ][]{Gretton2012}. Following \cite{Yu2023}, we generate 1000 samples each from the variational approximation and MCMC. Optimal non-bipartite pairings (NBPs) are derived from the pooled ranked samples, and instances of cross-match NBPs containing one sample from each distribution are counted. A higher count suggests greater similarity between the distributions. Following \cite{Zhou2023}, we compute $M^*=-\log (\max(\text{MMD}, 0) + 10^{-5})$,  where
\begin{equation*}
\text{MMD} = \frac{1}{m(m-1)}\sum_{i \neq j}^m[k(\textbf{x}_v^{(i)},\textbf{x}_v^{(j)})+k(\textbf{x}_g^{(i)},\textbf{x}_g^{(j)})-k(\textbf{x}_v^{(i)},\textbf{x}_g^{(j)})-k(\textbf{x}_v^{(j)},\textbf{x}_g^{(i)})],
\end{equation*}
$k$ is the radial basis kernel function,  $\textbf{x}_v^{(1)},\dots,\textbf{x}_v^{(m)}$ and $\textbf{x}_g^{(1)},\dots,\textbf{x}_g^{(m)}$ are independent samples from variational approximation and MCMC respectively, and $m = 1000$. A higher $M^*$ indicates higher multivariate accuracy. Evaluation of each metric is repeated 50 times.

\subsection{Normal sample} \label{sec normal sample}
Let $\theta=(\theta_1,\theta_2)$ and consider the model $y_i|\theta \sim \N(\theta_1, \exp(\theta_2))$ for $i=1, \dots, n$, with priors $\theta_1 \sim \N(0, \sigma_0^2)$ and $\exp(\theta_2) \sim \IG (a_0, b_0)$, where $a_0 = b_0 = 0.01$ and $\sigma_0^2 = 10^4$ \citep{Ormerod2011}. Derivation of $\E_q \{\log p(y, \theta)\}$ using Lemma \ref{Lem closed form exp for CSN} is given in the supplement S3.

First, let $\theta_1 = 0$ and $\theta = \theta_2$ only. The true posterior of $\theta$ is $\IG(a_0 + n/2, b_0 + \sum_{i=1}^n y_i^2/2)$. Following \cite{Ormerod2011}, we simulate $n=6$ observations by setting $\exp(\theta_2) = 225$. Writing $C$ as $\sigma$, Figure \ref{F4} shows the profile lower bound, $\mL(\lambda) = \mL(\hat{\mu}(\lambda), \hat{\sigma}(\lambda), \lambda)$, where $\hat{\mu}$ and $\hat{\sigma}$ maximize $\mL$ for any given $\lambda$. The leftmost plot is similar to the profile log likelihood in \cite{Arellano2008}, which has a stationary point at $\lambda=0$ and a non-quadratic shape that increases the risk of getting stuck at $\lambda = 0$ during optimization. 
\begin{figure}[tb!]
\centering
\includegraphics[width=0.85\textwidth]{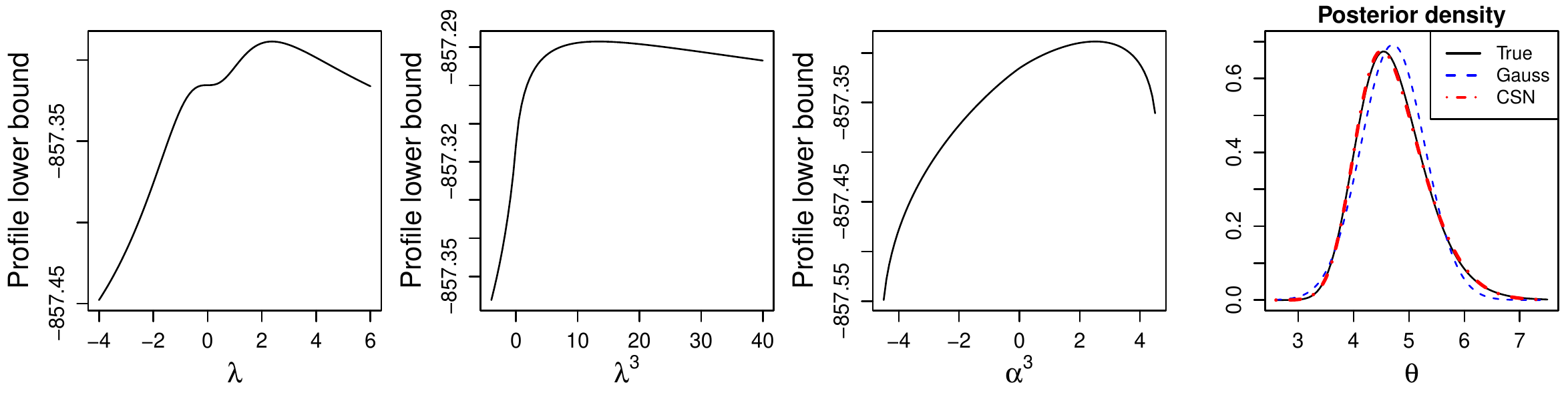}
\caption{First three plots show the profile lower bound for $\lambda$, $\lambda^3$ and $\alpha^3$. Last plot shows true posterior density, and the Gaussian and CSN variational approximations.}\label{F4}
\end{figure}
It shows that the ``centered" parameters $\mu$ and $\sigma$ alone cannot eliminate the stationary point, but replacing $\lambda$ by $\lambda^3$ or $\alpha^3$ achieves this goal, and $\alpha^3$ yields a more pronounced mode. The last plot compares the true posterior with the CSN ($\mL = -26.45$) and Gaussian ($\mL = -26.47$) variational approximations computed using {\tt optim} in R. The CSN has a higher accuracy (99.0\%) than the Gaussian (92.6\%).

\begin{figure}[tb!]
\centering
\includegraphics[width=\textwidth]{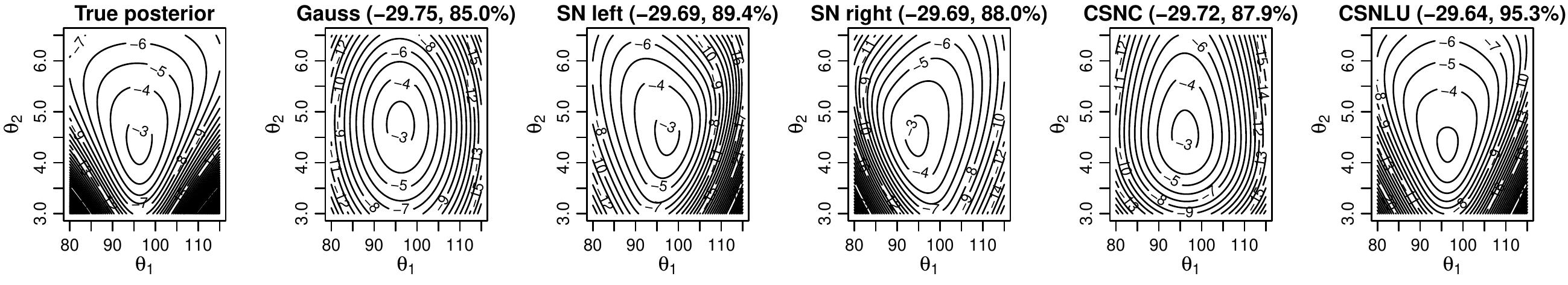}
\caption{Contour plots of true posterior and variational approximations. For SN, results from 2 initializations capturing left or right tail are shown. $\mL$ and accuracy are reported in ().}\label{F5}
\end{figure}
When $\theta_1$ and $\theta_2$ are unknown, $n=6$ observations are simulated with $\theta_1 = 100$ and $\exp(\theta_2) = 225$. Normalizing constant of the true posterior is estimated by integrating with respect to $\theta_1$ analytically and $\theta_2$ numerically. Figure \ref{F5} compares contour plots of the true posterior with variational approximations computed using BFGS via {\tt Optim} in {\tt R}. SN and CSN are more sensitive to initialization than the Gaussian and we report the best of different initializations. The Gaussian is least accurate as it cannot capture skewness. SN can capture the left or right tail depending on the initialization, but not both as it is bounded in only one direction. CSNC does not have the correct orientation due to its inability to perform rotations,  but still improves on the Gaussian. CSNLU achieves the highest lower bound and accuracy, albeit at the cost of more parameters. Figure \ref{F6} shows the lower bound of CSNLU for different parametrizations, and the stationary point at zero can be observed for $\lambda$. Again, $\alpha^3$ yields a quadratic-like surface that facilitates optimization. A further example on the Poisson generalized linear model is given in the supplement S4. 

\begin{figure}[tb!]
\centering
\includegraphics[width=0.85\textwidth]{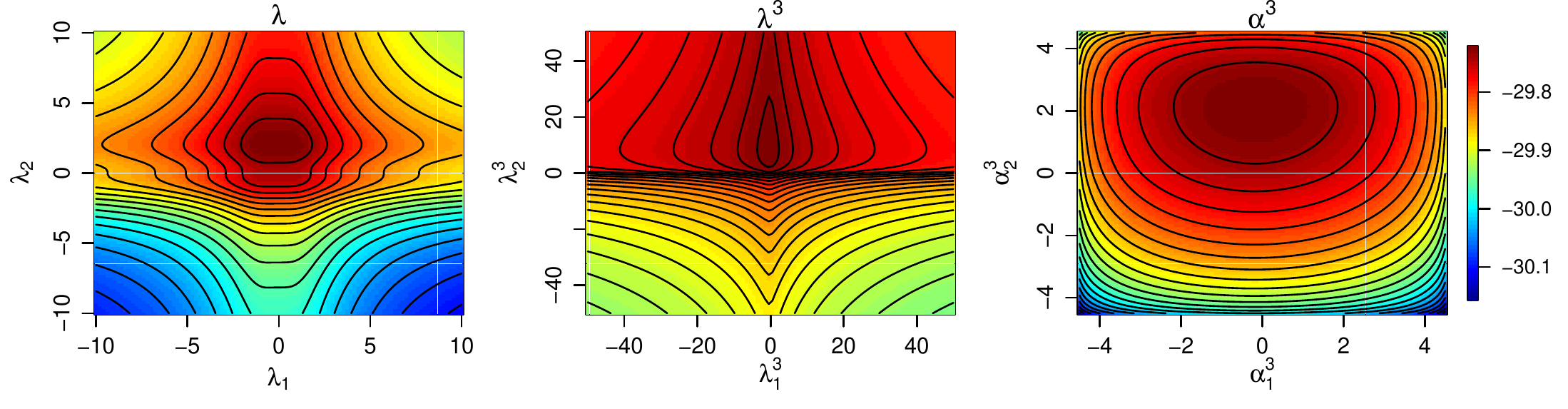}
\caption{Lower bound of CSNLU at $\mu=\hat{\mu}$, $L=\hat{C}$ and $U=I_d$ for different parametrizations of $\lambda$, where $(\hat{\mu}, \hat{C})$ are optimal parameters for Gaussian approximation.}\label{F6}
\end{figure}

\section{Stochastic variational inference for CSN subclass} \label{Sec gradients}
For models where $\E_q\{\log p(y, \theta)\}$ is not analytically tractable, we can maximize $\mL$ with respect to $\eta$ using stochastic gradient ascent \citep{Robbins1951}, where an update 
\[
\eta_{t+1} = \eta_t + \rho_t \widehat{\nabla}_\eta \mL(\eta_t)
\]
is made at the $t$th iteration and $\widehat{\nabla}_\eta \mL(\eta_t)$ is an unbiased estimate of the true gradient $\nabla_\eta \mL$ at $\eta_t$. This algorithm will converge to a local optimum under certain regularity conditions, if the stepsize $\rho_t$ satisfies $\sum_{t=1}^\infty \rho_t = \infty$ and $\sum_{t=1}^\infty \rho_t < \infty$ \citep{Spall2003}. 

Unbiased gradient estimates can be computed via the reparametrization trick \citep{Kingma2014}. Suppose $\theta = \mathcal{T}_\eta(z)$ where $\mathcal{T}_\eta(\cdot)$ is a differentiable function containing all information involving $\eta$, and $z$ can be sampled from a density $p(z)$ independent of $\eta$. Then $\nabla_\eta \mL = \E_{p(z)} \{ \nabla_\eta \theta \nabla_\theta h(\theta) \}$, and an unbiased estimate is $\widehat{\nabla}_\eta \mL = \nabla_\eta \theta \nabla_\theta h(\theta)$, where $\theta = \mathcal{T}_\eta(z)$ and $z \sim p(z)$. The score function method in \eqref{score fn grad} also yields an unbiased estimate of $\nabla_\eta \mL$, but the reparametrization trick often returns estimates with lower marginal variance, such as for quadratic log likelihoods in variational Bayes \citep{Xu2019}. In Gaussian variational approximation \citep{Titsias2014}, if $\theta \sim \N(\mu, CC^\top)$, then $\theta = Cz + \mu$ where $z \sim \N(0, I_d)$. For the CSN subclass, we can use (P6) to find $\nabla_\eta \theta$. For $\nabla_\theta h(\theta) = \nabla_\theta \log p(y, \theta) - \nabla_\theta \log q(\theta)$, $\nabla_\theta \log p(y, \theta)$ is model dependent and
\[
\begin{aligned}
\nabla_\theta \log q(\theta) &= C^{-T} D_{\tau} \{ D_\lambda \zeta_1(D_\lambda v) - v \},
\end{aligned}
\]
where $v = D_\tau C^{-1} (\theta - \mu) + b \delta$. If $C$ is the Cholesky factor, then $\eta = (\mu^\top, \lambda^\top, \vech(C)^\top)^\top$, where $\vech(C)$ vectorizes lower triangular elements in $C$ columnwise from left to right. Let $w_3 = \{ \widetilde{w}_1 - (1-b^2) D_\lambda w_2 \} \odot C^\top \nabla_\theta h(\theta)$, where $\odot$ is the Hadamard product. Then 
\[
\begin{aligned}
\widehat{\nabla}_\eta \mL = \nabla_\eta \theta \nabla_\theta h(\theta)
= \begin{bmatrix}
\nabla_\theta h(\theta) ^\top, & (D_\kappa^3 w_3)^\top, & \vech(\nabla_\theta h(\theta) z^\top)^\top
\end{bmatrix}^\top.
\end{aligned}
\]
If $C  = L U$, then $\eta = (\mu^\top, \lambda^\top, \vech(L)^\top,  \vech_u(U)^\top)^\top$, where $\vech_u(\cdot)$ vectorizes elements above the diagonal columnwise from left to right. $\widehat{\nabla}_\mu \mL$ and $\widehat{\nabla}_\lambda \mL$ remain unchanged while
\[
\begin{aligned}
\widehat{\nabla}_{\vech(L)} \mL = \vech(\nabla_\theta h(\theta) z^\top U^\top), \quad  \text{and} \quad
\widehat{\nabla}_{\vech_u(U)} \mL  = \vech_u(L^\top \nabla_\theta h(\theta) z^\top).
\end{aligned}
\]
More details are given in the supplement S5. If $\alpha^3$ is used as parameter instead of $\lambda$, then 
\[
\widehat{\nabla}_{\alpha^3} \mL = (\nabla_{\alpha^3} \lambda) \widehat{\nabla}_\lambda \mL = \tfrac{1}{3}D_\alpha^{-2} D_\kappa^{-3}(D_\kappa^3 w_3) = \tfrac{1}{3}D_\alpha^{-2} w_3.
\]

BFGS cannot be applied directly when $\E_q\{\log p(y, \theta)\}$ is not analytically tractable, but the stochastic optimization problem can be converted into a deterministic one by using sample average approximation \citep{Burroni2023}, thus enabling the use of BFGS.

\subsection{Natural gradients for CSN subclass} \label{sec Nat grad}
The steepest ascent direction is given by the (Euclidean) gradients derived previously when distance between $\eta$s is measured by the Euclidean metric. However, as we are optimizing $\mL$ with respect to $\eta$ in a curved parameter space, the Euclidean metric may not be appropriate. When distance between densities is measured using KL divergence, the steepest ascent direction is given by the natural gradient \citep{Amari2016}, which premultiplies the Euclidean gradient with the inverse of the Fisher information matrix, $\mathcal{I}_\theta (\eta) = - \E_q\{\nabla_\eta^2 \log q(\theta)\}$. 

For the SN, the Fisher information matrix is singular at $\lambda = 0$ and involves expectations which are not analytically tractable \citep{Arellano2008}, complicating the use of natural gradients. \cite{Lin2019b} overcame this using a minimal conditional exponential families representation, but their $\Sigma$ update does not ensure positive definiteness. We derive analytic natural gradients for the CSN subclass by considering the Cholesky or LU decomposition, and the Fisher information of $q(\theta, w)$. From (P6),
\[
\begin{aligned}
\log q(\theta,w) &= \log q(\theta|w) + \log q(w) \\
&= - d\log(2\pi) - \log|C|- \sum_{i=1}^d \log \kappa_i  - \frac{w^\top w}{2}  -\frac{z^\top D_\kappa^{-2} z}{2}  - \frac{\widetilde{w}^\top D_{\lambda}^2\widetilde{w}}{2}  +z^\top D_{\lambda}D_\kappa^{-1} \widetilde{w},
\end{aligned}
\]
where $z=C^{-1} (\theta - \mu)$. If $C = LU$, then $\log|C| = \log|L|$ since $U$ has unit diagonal. This approach is feasible as $\mL = \int q(\theta) h(\theta) d\theta = \int \int q(\theta, w) h(\theta) d\theta dw$, and $q(\theta, w)$ shares the same parameter $\eta$ as $q(\theta)$. Hence we can optimize $\mL$ in the parameter space of $q(\theta, w)$. The natural gradient is $\widetilde{\nabla}_\eta \mL = \mathcal{I}_{\theta, w} (\eta)^{-1} \nabla_\eta \mL$ instead of $\mathcal{I}_{\theta} (\eta)^{-1} \nabla_\eta \mL$, where 
\begin{equation} \label{tvedef}
\begin{aligned}
\mathcal{I}_{\theta, w}(\eta) 
&= - \E_{q(\theta,w)} \{\nabla_\eta^2 \log q(\theta,w) \}
&= \mathcal{I}_\theta (\eta) + \E_{q(\theta)} \{\mathcal{I}_{w|\theta} (\eta|\theta)\}.
\end{aligned}
\end{equation}
Unlike $\mathcal{I}_\theta (\eta)$ which has singularities, $\mathcal{I}_{\theta, w}(\eta)$ is positive definite, and we can derive its inverse and the natural gradients analytically. Hence it is useful to consider natural gradients based on $\mathcal{I}_{\theta, w}(\eta)$, although progress in gradient ascent may be more conservative due to \eqref{tvedef}. The results are presented in the next two theorems, whose proofs are in the supplement S6. 

First, we define some notation. For any square matrix $X$, let $X_\ell$ and $X_u$ be the lower and upper triangular matrices derived from $X$  respectively by replacing all elements above the diagonal by zeros, and all elements on and below the diagonal by zeros. For $s \in \mathbb{R}^{d(d+1)/2}$, $\vech^{-1}(s)$ is a $d \times d$ lower triangular matrix filled with elements of $s$ columnwise from left to right. For $t \in \mathbb{R}^{d(d-1)/2}$,  $\vech_u^{-1}(t)$ is a $d \times d$ upper triangular matrix with zero diagonal filled with elements of $t$ columnwise from left to right. Let $K =\kappa^2 \bfone^\top$.

\begin{theorem} \label{thm1}
For the variational density $q(\theta)$ in the CSN subclass in \eqref{Proposed CSN}, if $C$ is the  Cholesky factor of $\Sigma$ and $\eta = (\mu^\top, \lambda^\top, \vech(C)^\top)^\top$, then the natural gradient is 
\[
\widetilde{\nabla}_\eta \mL = \begin{bmatrix}
(C D_\kappa^2 C^\top) \nabla_ \mu \mL \\
\tfrac{1}{(1-b^2) (2\kappa^2 - \kappa^4)} \odot \nabla_\lambda \mL + \tfrac{\lambda}{2 - \kappa^2} \odot \diag( A_1)\\
 \vech( CA_1 ),
\end{bmatrix},
\]
where $G= \{C^\top \vech^{-1}(\nabla_{\vech(C)} \mL)\}_\ell$ and $A_1 = \diag(\tfrac{\alpha\kappa}{2} \odot \nabla_\lambda \mL) +  G \odot  \{ K - \diag(\tfrac{\kappa^4}{2}) \}$. 
\end{theorem}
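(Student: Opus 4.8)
The plan is to evaluate the augmented Fisher information $\mathcal{I}_{\theta,w}(\eta)$ from \eqref{tvedef} block by block, invert it, and left-multiply by the Euclidean gradient $\nabla_\eta\mL$ reported earlier. All computations are organized around the conditional representation in (P6), namely $\theta\mid w\sim\N(\mu+CD_\alpha\widetilde{w},\,CD_\kappa^2C^\top)$ with $w\sim\N(0,I_d)$. Writing $r=\theta-\mu-CD_\alpha\widetilde{w}$ for the conditional residual, so that $r\mid w\sim\N(0,CD_\kappa^2C^\top)$ and $z=C^{-1}(\theta-\mu)=D_\alpha\widetilde{w}+C^{-1}r$, every second derivative of $\log q(\theta,w)$ reduces to a polynomial in $r$ and $\widetilde{w}$ whose expectation I take using the Gaussian moments of $r$ given $w$ and the half-normal moments of $\widetilde{w}$. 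The identities that drive the simplifications are $\E(\widetilde{w}_i)=\E(|w_i|)-b=0$ and $\Var(\widetilde{w}_i)=1-b^2$, both following from $b=\sqrt{2/\pi}$, together with the elementwise relation $\alpha=\lambda\odot\kappa$, which is immediate from the definitions of $\delta$, $\tau$ and $\kappa$.

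First I would dispatch the mean block. Differentiating in $\mu$ and using $\alpha=\lambda\odot\kappa$ gives $\nabla_\mu\log q=C^{-\top}D_\kappa^{-2}C^{-1}r$, which is linear in $r$ with zero conditional mean, so $\mathcal{I}_{\mu\mu}=\E\{(\nabla_\mu\log q)(\nabla_\mu\log q)^\top\}=(CD_\kappa^2C^\top)^{-1}$, the conditional-Gaussian information for a mean. For the cross blocks, both $\nabla_\lambda\log q$ and $\nabla_{\vech(C)}\log q$ split into parts that are constant, linear and quadratic in $r$; crossing with the linear-in-$r$ score $\nabla_\mu\log q$ annihilates the constant and quadratic parts by the vanishing of odd Gaussian moments, while every surviving linear$\times$linear term carries a factor $\widetilde{w}_i$ whose $w$-expectation is zero. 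Hence $\mathcal{I}_{\mu\lambda}=0$ and $\mathcal{I}_{\mu C}=0$, the information is block diagonal in $\mu$ versus the rest, and the $\mu$-row of $\mathcal{I}_{\theta,w}^{-1}$ is simply $(\mathcal{I}_{\mu\mu})^{-1}=CD_\kappa^2C^\top$, giving the first block $CD_\kappa^2C^\top\nabla_\mu\mL$.

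It then remains to treat the coupled $(\lambda,\vech(C))$ block. Here I would compute $\mathcal{I}_{\lambda\lambda}$, $\mathcal{I}_{\lambda C}$ and $\mathcal{I}_{CC}$ by differentiating the $\log\kappa_i$, $z^\top D_\kappa^{-2}z$, $\widetilde{w}^\top D_\lambda^2\widetilde{w}$, $z^\top D_\lambda D_\kappa^{-1}\widetilde{w}$ and $\log|C|$ terms, substituting $z=D_\alpha\widetilde{w}+C^{-1}r$, and averaging. The variance $\Var(\widetilde{w}_i)=1-b^2$ and the Gaussian second and fourth moments of $r$ produce the scalar weights $(1-b^2)(2\kappa^2-\kappa^4)$ and $K-\diag(\kappa^4/2)$ that appear in the statement, while the constraint that $C$ is \emph{lower triangular} means $\vech(C)$ indexes only the on-and-below-diagonal entries, which forces the lower-triangular truncation $\{\,\cdot\,\}_\ell$ and the map $\vech^{-1}$ into the definition of $G$. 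Inverting this structured two-by-two block system and left-multiplying by $(\nabla_\lambda\mL,\nabla_{\vech(C)}\mL)$ then yields the remaining two blocks after collecting terms.

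The main obstacle is precisely this last step: assembling and inverting the coupled $(\lambda,\vech(C))$ information in closed form. The difficulty is threefold. The second derivatives mix the $\lambda$-dependence hidden inside $\kappa$ and $\alpha$ with the $C$-dependence hidden inside $z=C^{-1}(\theta-\mu)$ and $\log|C|$; the expectations require half-normal moments of $\widetilde{w}$ beyond the mean and variance; and the triangular bookkeeping --- tracking which entries of $C^\top\vech^{-1}(\cdot)$ survive the $\{\,\cdot\,\}_\ell$ truncation, and how $\diag(A_1)$ feeds back into the $\lambda$-block --- must be handled carefully so that the Hadamard form $A_1=\diag(\tfrac{\alpha\kappa}{2}\odot\nabla_\lambda\mL)+G\odot\{K-\diag(\tfrac{\kappa^4}{2})\}$ emerges cleanly. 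Specializing to $\lambda=0$, where $\kappa=\bfone$ and $\alpha=0$ so that the block must collapse to the known Gaussian natural gradient for a Cholesky factor, provides a useful consistency check on the algebra.
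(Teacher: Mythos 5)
Your overall strategy is the same as the paper's: both work with the augmented Fisher information $\mathcal{I}_{\theta,w}(\eta)$ of $q(\theta,w)$ from \eqref{tvedef} rather than the singular $\mathcal{I}_\theta(\eta)$, and both exploit the resulting block structure. The parts you actually carry out are correct: your conditional-residual computation giving $\nabla_\mu \log q(\theta,w)=C^{-\top}D_\kappa^{-2}C^{-1}r$ (using $\alpha=\lambda\odot\kappa$, which is right), hence $\mathcal{I}_{\mu\mu}=(CD_\kappa^2C^\top)^{-1}$, and your argument that the $(\mu,\lambda)$ and $(\mu,\vech(C))$ cross blocks vanish — every linear-in-$r$ term of the $\lambda$ and $C$ scores carries a factor $\widetilde{w}_i$ with $\E(\widetilde{w}_i)=0$, and the constant and quadratic parts die by odd Gaussian moments — matches the paper's finding that $\E[\nabla^2_{\mu,\lambda}\ell]=0$ and $\E[\nabla^2_{\mu,\vech(C)}\ell]=0$. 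This legitimately delivers the first block $CD_\kappa^2C^\top\nabla_\mu\mL$.

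However, there is a genuine gap, and you name it yourself: the closed-form inversion of the coupled $(\lambda,\vech(C))$ block is precisely the mathematical content of the theorem, and your proposal stops at the statement that one should ``invert this structured two-by-two block system.'' That inversion is not routine, because $\mathcal{I}_{33}=\El(I_d\otimes C^{-\top})(\K+I_d\otimes D_\kappa^{-2})(I_d\otimes C^{-1})\El^\top$ is a compressed (eliminated) Kronecker-form matrix whose inverse is \emph{not} obtained by inverting the factors separately; the elimination matrix $\El$ destroys that. The paper resolves this with specific machinery you do not supply: (i) writing $\mathcal{I}_{22}$, $\mathcal{I}_{32}$, $\mathcal{I}_{33}$ in elimination/commutation-matrix form; (ii) showing the Schur complement equals $\El(I_d\otimes C^{-\top})N_\lambda(I_d\otimes C^{-1})\El^\top$ with $N_\lambda=\K+I_d\otimes D_\kappa^{-2}-\Ed^\top\diag\bigl(\tfrac{1-\kappa^2}{2-\kappa^2}\bigr)\Ed$, and crucially that $\El N_\lambda\El^\top$ is \emph{diagonal}, with inverse $\diag[\vech\{K-\diag(\kappa^4/2)\}]$ — this is the only place the Hadamard weights $K-\diag(\tfrac{\kappa^4}{2})$ can come from; and (iii) verifying a guessed inverse of the Schur complement via the identity $\El^\top\El(P^\top\otimes Q)\El^\top=(P^\top\otimes Q)\El^\top$ for lower-triangular $P,Q$ (Lemma 4.2 of Magnus and Neudecker's work on elimination matrices), which is what makes the sandwiching by $\El(I_d\otimes C)\El^\top$ legitimate and produces $\vech(CA_1)$. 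Without these three ingredients, the passage from the block entries to the stated Hadamard-product formulas is asserted rather than proved; your consistency check at $\lambda=0$ confirms plausibility but cannot substitute for the inversion argument.
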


\begin{theorem} \label{thm2}
For the variational density $q(\theta)$ in the CSN subclass in \eqref{Proposed CSN}, if $C = LU$ where $L$ is a lower triangular matrix and $U$ is an unit diagonal upper triangular matrix, and $\eta = (\mu^\top, \lambda^\top, \vech(L)^\top,  \vech_u(U)^\top)^\top$, then the natural gradient is given 
\[
\begin{aligned}
\widetilde{\nabla}_\mu \mL &= (C D_\kappa^2 C^\top) \nabla_ \mu \mL, \\
\widetilde{\nabla}_{\vech(L)} \mL &= \vech (L \mG_\ell ), 
\end{aligned}\quad
\begin{aligned}
\widetilde{\nabla}_\lambda \mL &= \tfrac{1}{(1-b^2)(2\kappa^2 - \kappa^4)} \odot  \nabla_\lambda \mL + \tfrac{\lambda}{2-\kappa^2} \odot \diag(H ),  \\
\widetilde{\nabla}_{\vech_u(U)} \mL &= \vech_u [ U \{ K_u \odot (F - H^\top )\} + \mG_u U ],
\end{aligned}
\]
where $a = \vech \{\diag(\frac{1}{2-\kappa^2}) + (1/K)_\ell - (K_u)^\top\}$, 
\[
\begin{aligned}
G &= \{L^\top \vech^{-1}(\nabla_{\vech(L)} \mL) \}_\ell, \\
F &= \{U^\top \vech_u^{-1} (\nabla_{\vech_u(U)} \mL )\}_u,
\end{aligned} \quad
\begin{aligned}
H &= A_2 \odot [ U^\top \{ G - (U^{-T} F  U^\top)_\ell  \}U^{-T} \\
& \;+ \diag (\tfrac{\lambda}{2-\kappa^2} \odot  \nabla_\lambda \mL)  - ( K \odot F)^\top] ,  
\end{aligned}
\quad 
\begin{aligned}
\mG &=U H  U^{-1}, \\
A_2 &= \vech^{-1}(1/a).
\end{aligned}
\]
\end{theorem}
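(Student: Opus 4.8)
The plan is to compute the augmented Fisher information $\mathcal{I}_{\theta,w}(\eta) = -\E_{q(\theta,w)}\{\nabla_\eta^2\log q(\theta,w)\}$ directly from the joint log-density $\log q(\theta,w)$ displayed above, invert it blockwise, and premultiply $\nabla_\eta\mL$, extending the Cholesky argument of Theorem \ref{thm1} to accommodate the extra $U$ block. First I would take the matrix differential of $\log q(\theta,w)$ in $\mu$, $\lambda$, $L$ and $U$. The parameters enter only through $\log|C| = \log|L|$, through $\kappa=\kappa(\lambda)$ and $D_\lambda$, and through $z = U^{-1}L^{-1}(\theta-\mu)$, and the organizing identity is $C^{-1}dC = U^{-1}(L^{-1}dL)U + U^{-1}dU$, so that $dz = -(C^{-1}dC)z - C^{-1}d\mu$ and $d\log|L| = \tr(L^{-1}dL)$. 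This identity shows that a perturbation of $C$ splits into a part conjugated by $U$ (the $L$ contribution) and a bare part (the $U$ contribution), which is precisely the source of the $U(\cdot)U^{-1}$ and $U^{-\top}(\cdot)U^\top$ conjugations appearing in $\mG$, $F$ and $H$.

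Next I would assemble $\mathcal{I}_{\theta,w}(\eta)$ using two sets of moments that render every expectation tractable: conditionally on $w$, $\theta$ is Gaussian with $\E(z\mid w) = D_\alpha\widetilde{w}$ and $\cov(z\mid w) = D_\kappa^2$, while $\widetilde{w} = |w|-b\bfone$ satisfies $\E(\widetilde{w})=0$ and $\Var(\widetilde{w}_i)=1-b^2$. The odd-moment structure forces all cross-blocks between $\mu$ and $(\lambda,L,U)$ to vanish, so the $\mu$-block decouples with information $C^{-\top}D_\kappa^{-2}C^{-1} = (CD_\kappa^2C^\top)^{-1}$; inverting it gives $\widetilde{\nabla}_\mu\mL = (CD_\kappa^2C^\top)\nabla_\mu\mL$ at once, and \eqref{tvedef} guarantees that the remaining coupled block is nonsingular. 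For that block I would express the quadratic form of $\mathcal{I}_{\theta,w}$ in the matrix variables $\delta L$, $\delta U$ and $\delta\lambda$, substitute the moments, and read off the entrywise weights that produce the scalings $\tfrac{1}{(1-b^2)(2\kappa^2-\kappa^4)}$ and $\tfrac{1}{2-\kappa^2}$ together with the matrix $K=\kappa^2\bfone^\top$; their reciprocals, arranged by triangular pattern, are exactly $a$ and $A_2 = \vech^{-1}(1/a)$.

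The inversion itself is where the real work lies. I would introduce the rotated gradients $G = \{L^\top\vech^{-1}(\nabla_{\vech(L)}\mL)\}_\ell$ and $F = \{U^\top\vech_u^{-1}(\nabla_{\vech_u(U)}\mL)\}_u$, which absorb the Jacobian of $C=LU$ so that the Fisher information acts as a Hadamard scaling on $G$ and $F$ rather than as a dense matrix --- the analogue of the step in Theorem \ref{thm1} that converted $\nabla_{\vech(C)}\mL$ into $G$. The coupling among $L$, $U$ and $\lambda$ then collapses to solving a triangular system for the auxiliary matrix $H$, after which $\mG = UHU^{-1}$ pushes the solution back onto the $L$ and $U$ coordinates through the conjugation identity above, yielding $\widetilde{\nabla}_{\vech(L)}\mL = \vech(L\mG_\ell)$ and $\widetilde{\nabla}_{\vech_u(U)}\mL = \vech_u[U\{K_u\odot(F-H^\top)\} + \mG_u U]$, while $\widetilde{\nabla}_\lambda\mL$ inherits the diagonal correction $\tfrac{\lambda}{2-\kappa^2}\odot\diag(H)$.

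I expect the main obstacle to be the non-orthogonality of the LU factorization. In the Cholesky case $U = I_d$, the conjugations collapse and the triangular projections $(\cdot)_\ell$ and $(\cdot)_u$ split the $L$-score from the rest without interaction; here $U\neq I_d$ mixes the $L$ and $U$ perturbations through $U^{-1}(L^{-1}dL)U$, so the Fisher block for $(\vech(L),\vech_u(U),\lambda)$ is genuinely coupled and its inverse cannot be read off diagonally. The delicate bookkeeping is to project each matrix expression onto the correct triangular part at the correct stage --- keeping $(\cdot)_\ell$ with the $L$-coordinates and $(\cdot)_u$ with the $U$-coordinates --- and to verify that the conjugations by $U$ and $U^{-1}$ cancel so that $\mG = UHU^{-1}$ reproduces both projections simultaneously. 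Setting $U = I_d$ at the end and recovering Theorem \ref{thm1} would serve as a useful consistency check.
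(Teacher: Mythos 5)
Your plan is essentially the paper's own proof: the paper likewise computes the augmented Fisher information $\mathcal{I}_{\theta,w}(\eta)$ from $\log q(\theta,w)$, uses the same conditional moments to kill all cross-blocks with $\mu$ (giving $\widetilde{\nabla}_\mu\mL = CD_\kappa^2C^\top\nabla_\mu\mL$ directly), and then inverts the coupled $(\lambda,\vech(L),\vech_u(U))$ block by a Schur-complement argument in which the rotated gradients $G$ and $F$ reduce the core of the inverse to exactly the Hadamard scaling $\diag(a)$ you anticipate, with $H$ and $\mG=UHU^{-1}$ emerging from the conjugation structure of $C=LU$ just as you describe. The only difference is bookkeeping — the paper carries out your ``triangular system'' step explicitly via elimination/commutation-matrix identities (e.g.\ $\El^\top\El+\Eu^\top\Eu = I_{d^2}$, $\Eu(U^{-1}\otimes U^{-\top})\K\Eu^\top = 0$, and the vanishing of $\E[\nabla^2_{\vech_u(U),\lambda}\ell]$), which is precisely the heavy lifting your outline defers.
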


If $\nabla_\eta \mL$ is intractable, it is replaced by $\widehat{\nabla}_\eta \mL$ in Theorems 2 and 3 to obtain unbiased natural gradient estimates. If we parametrize in terms of $\alpha^3$ instead, then $
\widetilde{\nabla}_{\alpha^3} \mL = (\nabla_\lambda \alpha^3)^\top \widetilde{\nabla}_\lambda \mL = 3D_\alpha^2 D_\kappa^3 \widetilde{\nabla}_\lambda \mL$ from \cite{Tan2024a}.

\section{Applications} \label{Sec: Applications}
This section investigates the performance of proposed methods using logistic regression, survival models,  zero-inflated negative binomial models and generalized linear mixed models (GLMMs). An independent $\N(0, \sigma_0^2)$ prior is specified for each element in $\theta$, where $\sigma_0^2=100$. Using the Gaussian as a baseline, we compare CSN variational approximations with planar and real NVP flows, which are elaborated in the supplement S7. For each application, $\log p(y, \theta)$ and its gradient are given in the supplement S8. 

CSN algorithms are initialized using Gaussian variational approximations and run for 50,000 iterations, with $\lambda$ initialized as $\bfone$ or $-\bfone$. Lower bound estimates are averaged every 1000 iterations to reduce noise. We report results from the $\alpha^3$ parametrization, which is less sensitive to initialization, unless stated otherwise. For Euclidean gradients, Adam \citep{Kingma2015} is used to compute the stepsize, while a constant stepsize is used for natural gradients. As a sign-based variance adapted approach (Balles and Hennig, 2018), Adam is useful for Euclidean gradients, but not natural gradients as scaling by the Fisher information is neglected \citep{Tan2024a}. 

Normalizing flows are trained on GPU using the PyTorch package {\tt normflows} \citep{Stimper2023} for 20,000 iterations, with $\N(0, I_d)$ as base density. The transforming function used in planar flow is $\tanh(\cdot)$. In real NVP, the scale and translation functions are multilayer perceptrons, each with a single hidden layer containing $2d$ units and an ReLU activation function. A binary mask is used to update alternate elements in $\theta$, in turn between layers. Unless stated otherwise, the flow length $K=8$, and the number of samples for Monte Carlo gradient estimation is the square root of the total number of observations \citep{Caterini2021}. All computations are performed on an Intel Core i9-9900K CPU @ 3.60GHz with 16GB RAM and an Nvidia GeForce GTX 1660 graphics card.

\subsection{Logistic regression model} \label{Sec applications}

\begin{figure}[b!]
\centering
\includegraphics[width=\textwidth]{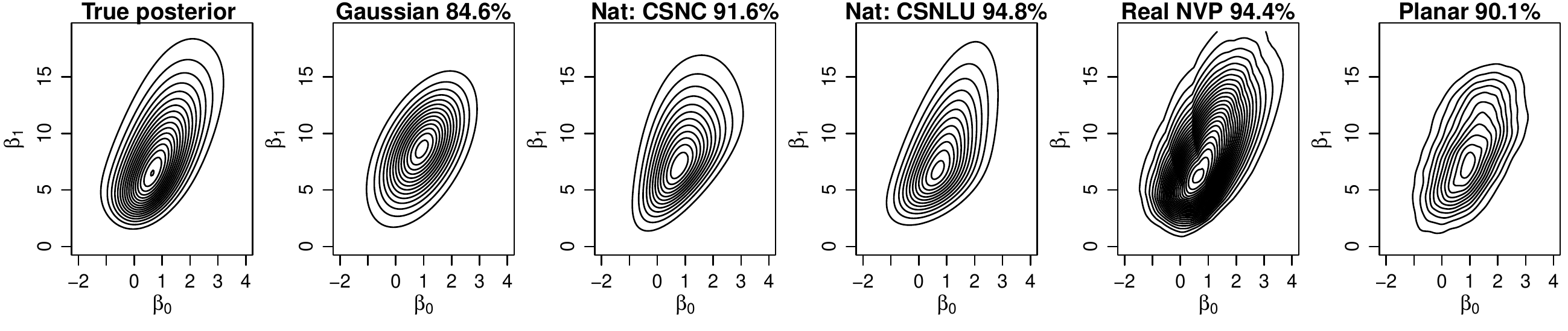}
\caption{Contour plots and accuracies for bioassay data.}\label{Bioplot1}
\end{figure}
In the bioassay data \citep{Racine1986}, a dose ($x_i$) of chemical compound is administered to $n_i$ animals in each of $i=1, \dots, n$ trials, with $n=4$. The number of deaths in each trial is modeled independently as $y_i \sim \text{Binomial}(n_i, p_i)$, and a logistic regression model, $\logit(p_i) = \beta_0 + \beta_1 x_i$, is used so that $\theta = (\beta_0, \beta_1)^\top$. Normalizing constant of the true posterior is estimated using Monte Carlo integration based on one billion samples. 

From Figure \ref{Bioplot1}, CSNLU and real NVP flow have the highest accuracies of 94--95\%. The orientation of CSNC does not match the true posterior well, likely due to its inability to rotate. The first two boxplots of Figure \ref{Bioplot2} indicate that real NVP flow produces an approximation closest to MCMC in terms of NBP, but CSNLU dominates according to MMD. When the CSN algorithms are initialized using $\lambda = \bfone$, difference between the $\lambda$ and $\alpha^3$ parametrizations is negligible, but significant differences can be observed when initialized using $\lambda = -\bfone$. The 3rd and 4th plots of Figure \ref{Bioplot2} show that $\lambda$ converges slowly and is stuck at zero when parametrized using $\lambda$, but the iterates are able to escape the stationary point at zero and converge to a better mode when parametrized using $\alpha^3$. The last plot of Figure \ref{Bioplot2} shows that natural gradients consistently yield higher lower bounds than Euclidean gradients when a constant stepsize of 0.001 is used.  

\begin{figure}[tb!]
\centering
\includegraphics[width=\textwidth]{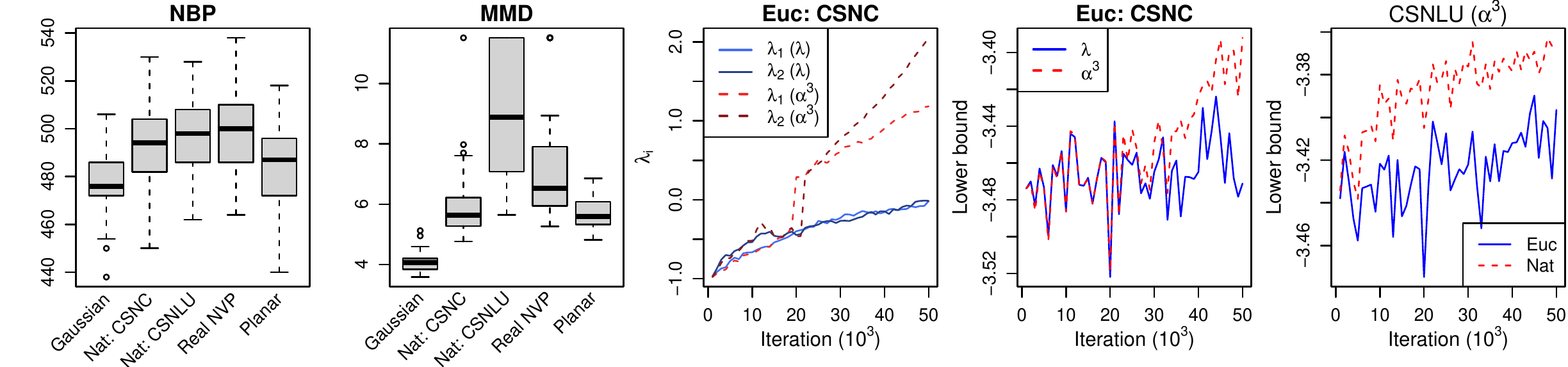}
\caption{Bioassay data. Boxplots of the NBP and MMD metrics, trace of $\lambda$ and lower bound from CSNC, and lower bound of CSNLU.}\label{Bioplot2}
\end{figure}

Consider the logistic regression model for binary responses, where $y_i \sim \text{Bernoulli}(p_i)$ independently and $\logit(p_i) = x_i^\top \theta$  for $i=1, \dots, n$, where $\theta \in \mathbb{R}^d$ and $x_i \in \mathbb{R}^d$ denote the coefficients and covariates for the $i$th response respectively. We fit the model to German credit data from the UCI Machine Learning Repository, featuring $n=1000$ individuals classified by good or bad credit risks. Quantitative predictors are standardized and qualitative predictors are dummy-coded. The ground truth is based on MCMC samples. For CSN, kernel density estimates are computed based on 50,000 samples as it is challenging to evaluate $\Phi_d(\cdot)$ for $d=49$. The boxplots in Figure \ref{germanplot} show that CSNC and CSNLU outperform the Gaussian, real NVP and planar flows consistently across the multivariate metrics (NBP and MMD) and accuracies in marginal density estimates. Using natural gradients, CSNC and CSNLU achieve a high minimum accuracy of 98.3\%. Figure \ref{germanplot} also shows the marginal densities of two coefficients whose Gaussian approximation accuracy is less than 85\%. All the other methods (being able to accommodate skewness) improve on the Gaussian, and are more aligned with MCMC results. 

\begin{figure}[t!]
\centering
\includegraphics[width=\textwidth]{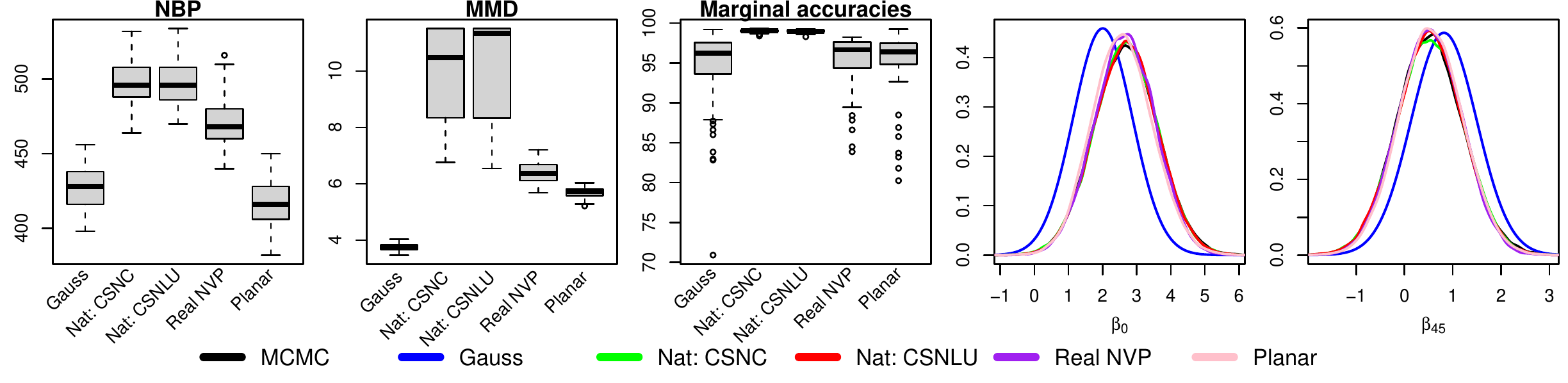}
\caption{ German data. Boxplots of NBP, MMD and marginal accuracies, and marginal density estimates of $\beta_0$ and $\beta_{45}$.}\label{germanplot}
\end{figure}

\subsection{Zero-inflated negative binomial model}
The zero-inflated negative binomial model is used to analyze counts with excessive zeros while allowing for overdispersion. We apply it to a dataset on number of fish ($y_i$) caught by visitor $i$ at a national park in a day for $i=1, \dots, n$, where $n=250$. An observation $y_i$ is 0 with probability $\varphi_i$, or is generated from a negative binomial distribution with probability $1-\varphi_i$. The pdf of the negative binomial distribution with parameters $\alpha$ and $\mu_i$ is 
\[
p(y_i) = \frac{\frac{1}{\alpha}^{1/\alpha} \Gamma(y_i + \frac{1}{\alpha})\mu_i^{y_i}}{y_i!\Gamma(\frac{1}{\alpha}) (\mu_i + \frac{1}{\alpha})^{y_1 + 1/\alpha} }, 
\]
which is obtained by integrating out $\tau_i$ from the hierarchical model, $y_i |\tau_i \sim \Poisson(\mu_i \tau_i)$ and $\tau_i \sim \text{Gamma}(\frac{1}{\alpha}, \frac{1}{\alpha})$. The response $\mu_i$ is modeled as $\log \mu_i = x_i^\top \beta$, where $x_i$ includes an intercept, an indicator for use of live bait ({\tt livebait}) and number of accompanying persons ({\tt persons}). The probability $\varphi_i$ (that the visitor did not fish) is modeled as $\logit (\varphi_i) = z_i^\top \gamma$, where the covariates in $z_i$ include an intercept, number of accompanying children ({\tt child}) and an indicator for camping ({\tt camper}).  Thus $\theta =(\beta^\top, \gamma^\top, \log \alpha)^\top$ and $d=7$.

\begin{table}[tb!]
\centering \small
\begin{tabular}{l|cccccccccc}
\hline
& $\mathcal{L}$ & NBP & MMD & $\beta_0$ & $\beta_1$ & $\beta_2$  & $\gamma_0$ & $\gamma_1$ & $\gamma_2$ & $\log \alpha$ \\ 
\hline
Gaussian & -425.8 & 401.2 & 2.4 & 99.0 & 99.1 & 99.4 & 67.4 & 65.5 & 68.1 & 95.0 \\
  Nat: CSNC & -425.3 & 461.0 & 3.6 & 99.0 & 98.8 & 98.8 & 83.9 & 83.2 & 78.1 & 96.9 \\
  Nat: CSNLU & {\bf -425.2} & {\bf 473.5} & {\bf 3.9} & 99.0 & 99.1 & 99.2 & 84.7 & 85.1 & 85.2 & 96.5 \\
  Real NVP &  -425.9 & 369.6 & 3.0 & 90.4 & 93.9 & 89.1 & 75.3 & 73.8 & 77.3 & 95.8 \\
  Planar &  -426.2 & 330.4 & 2.1 & 78.7 & 81.5 & 88.2 & 64.3 & 62.2 & 62.3 & 92.2 \\
\hline
\end{tabular}
\caption{Average lower bound, NBP, MMD, and marginal accuracies for fish data.}\label{T_fish}
\end{table}

\begin{figure}[b!]
\centering
\includegraphics[width=\textwidth]{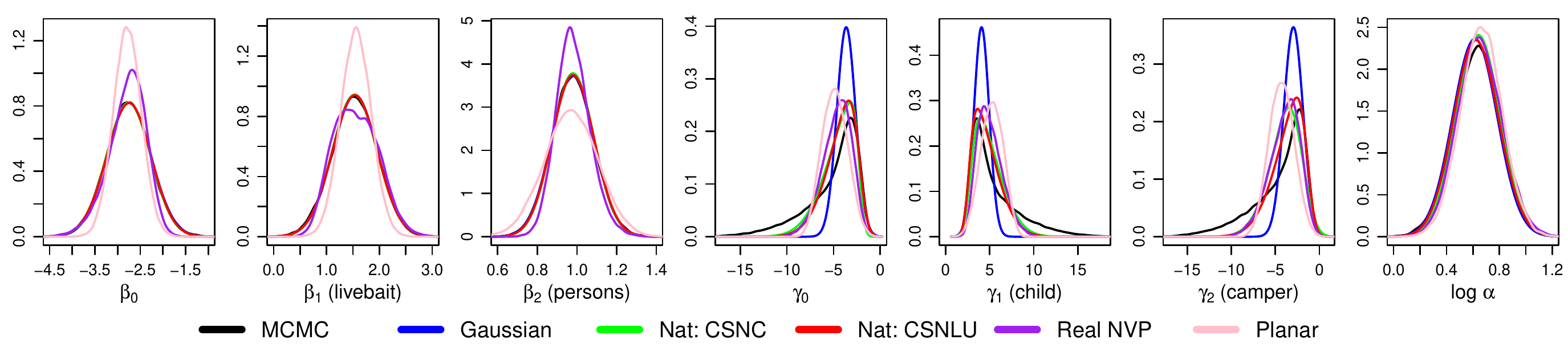}
\caption{Marginal density estimates of coefficients for fish data.}\label{fishplot1}
\end{figure}

As the flow methods have difficulty converging, we reduce the flow lengths to 2, and increase the number of Monte Carlo samples for gradient estimation to 50, to achieve convergence. Table \ref{T_fish} shows that CSNLU provides a fit closest to the true posterior based on the lower bound. It is also most similar to the MCMC kernel estimate in terms of NBP and MMD, while CSNC is second best. For marginal density estimates, the Gaussian is highly accurate for $\beta$ ($\sim$99\%), but less so for $\gamma$ ($\sim$67\%) whose marginal posteriors are highly asymmetrical. From Figure \ref{fishplot1}, CSN improves on the Gaussian, but is unable to capture all the skewness in $\gamma$, and CSNLU ($\sim$85\%) slightly outperforms CSNC ($\sim$83\%). Real NVP ($\sim$75\%) also improves on the Gaussian, but not as well as the CSN, and planar flow has weaker  performance than real NVP. In higher dimensions, the accuracies of the variational approximations are likely lower than in one dimension. Figure \ref{fishplot2} shows that the bivariate marginal posterior of $(\gamma_1, \log \alpha)$ is shaped irregularly. The contour plots of planar flow (62.0\%) and Gaussian (65.1\%), being elliptical, are inadequate, while real NVP (73.1\%), CSNC (81.7\%) and CSNLU (83.4\%) can capture skewness in the tail more effectively. 

\begin{figure}[tb!]
\centering
\includegraphics[width=\textwidth]{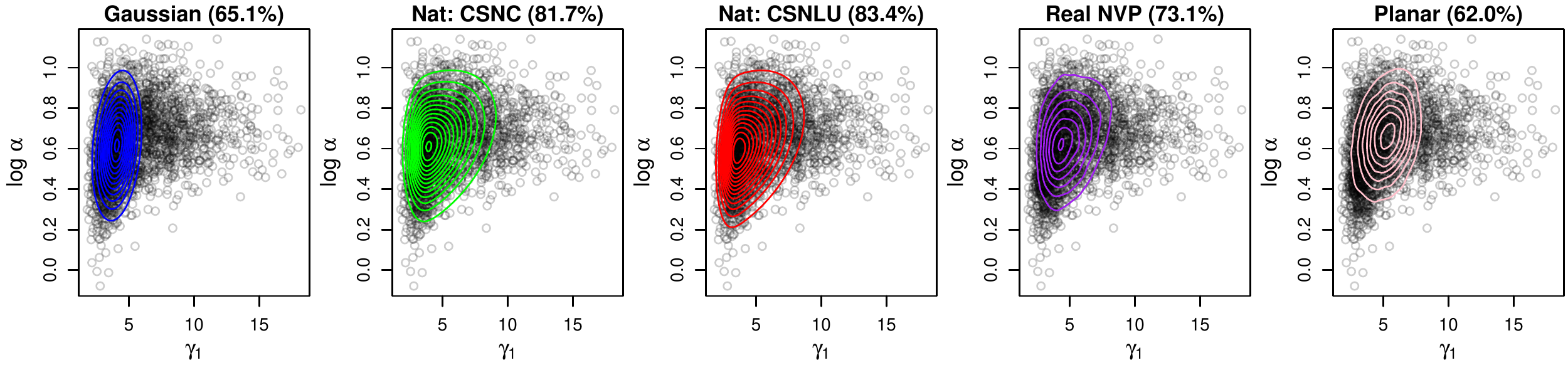}
\caption{Fish data. Bivariate contour plots of variational approximations superimposed on 2500 randomly selected MCMC samples and their accuracies.}\label{fishplot2}
\end{figure}

\subsection{Survival model}
We consider a dataset for analyzing the effects of a hip-protection device, age and sex, on the risk of hip fractures in $n = 148$ patients. The $i$th patient is observed up to a time $t_i$ at which an indicator $d_i$ for whether fracture occurs is recorded. For the Weibull proportional hazards model, the hazard function is $h(t_i) = \rho_i t_i^{\rho_i - 1} \exp(x_i^\top \beta)$  for $i=1, \dots, n$, and the survival function is $S(t_i) = \exp \{-\exp(x_i^\top \beta) t_i^{\rho_i} \}$, where $\rho_i> 0$. The covariates $x_i$ includes an intercept, an indicator for use of hip-protection device ({\tt protect}) and {\tt age}, which is standardized. The hazard curves for men and women are assumed to be different in shape, and the ancillary variable $\rho_i$ is modeled as $\log \rho_i = z_i^\top \gamma$, where $z_i$ includes an intercept and {\tt sex} to account for this difference. Thus $\theta =(\beta^\top, \gamma^\top)^\top$ and $d=5$.

\begin{table}[htb!]
\centering
\begin{small}
\begin{tabular}{lcccccccc}
  \hline
& $\mathcal{L}$ & NBP & MMD & $\beta_0$ & $\beta_1$ & $\beta_2$ & $\gamma_0$ & $\gamma_1$ \\
  \hline
Gaussian & -165.22 & 488.2 & 6.9 & 95.3 & 99.1 & 99.0 & 95.1 & 90.5 \\
  Euc: CSNC & -165.11 & 495.2 & 7.8 & 96.0 & 98.7 & 98.7 & 95.5 & 97.6 \\
  Euc: CSNLU & {\bf-165.07} & {\bf 500.0} & {\bf 9.4} & 98.1 & 98.5 & 98.9 & 99.1 & 97.9 \\
  Real NVP & -165.11 & 493.7 & 8.2 & 98.1 & 96.7 & 97.4 & 97.7 & 98.4 \\
  Planar flow & -165.10 & 491.0 & 7.0 & 95.4 & 98.8 & 98.0 & 97.0 & 93.3 \\
   \hline
\end{tabular}
\caption{Average lower bound, NBP and MMD, and accuracies of marginal densities for hip data.}\label{T_hip}
\end{small}
\end{table}

As natural gradients only provided small improvements in the lower bound, we focus on results obtained using Euclidean gradients. From Table \ref{T_hip}, the CSN and flow methods improve significantly on the Gaussian, and CSNLU produced the best posterior approximation based on the lower bound and the multivariate metrics, NBP and MMD. For marginal density estimates, the Gaussian does very well for $\beta_1$ and $\beta_2$, but is poorer for $\beta_0$, $\gamma_0$ and especially $\gamma_1$. The CSN, real NVP and planar flow (to a smaller degree) are able to improve on these aspects. CSNLU, being more flexible than CSNC, often yields better accuracies, as can be seen in the bivariate marginal posterior plots in Figure \ref{hipplot1}. Similarly, real NVP usually yields better approximations than planar flow, but is more computationally intensive. Figure \ref{hipplot2} shows that when CSNC is initialized from $\lambda=\bfone$, the iterates cannot move past the stationary point at zero under the $\lambda$ parametrization, whereas the $\alpha^3$ parametrization successfully overcomes this issue and achieves a higher lower bound. 

\begin{figure}[htb!]
\centering
\includegraphics[width=\textwidth]{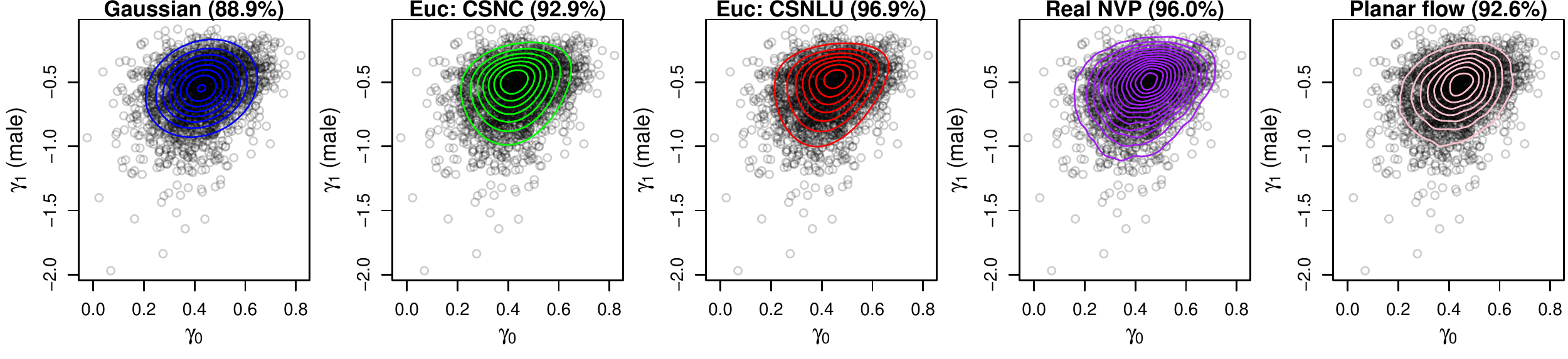}
\caption{Hip data. Bivariate contour plots of variational approximations superimposed on 2500 randomly selected MCMC samples and their accuracies.}\label{hipplot1}
\end{figure}

\begin{figure}[htb!]
\centering
\includegraphics[width=0.85\textwidth]{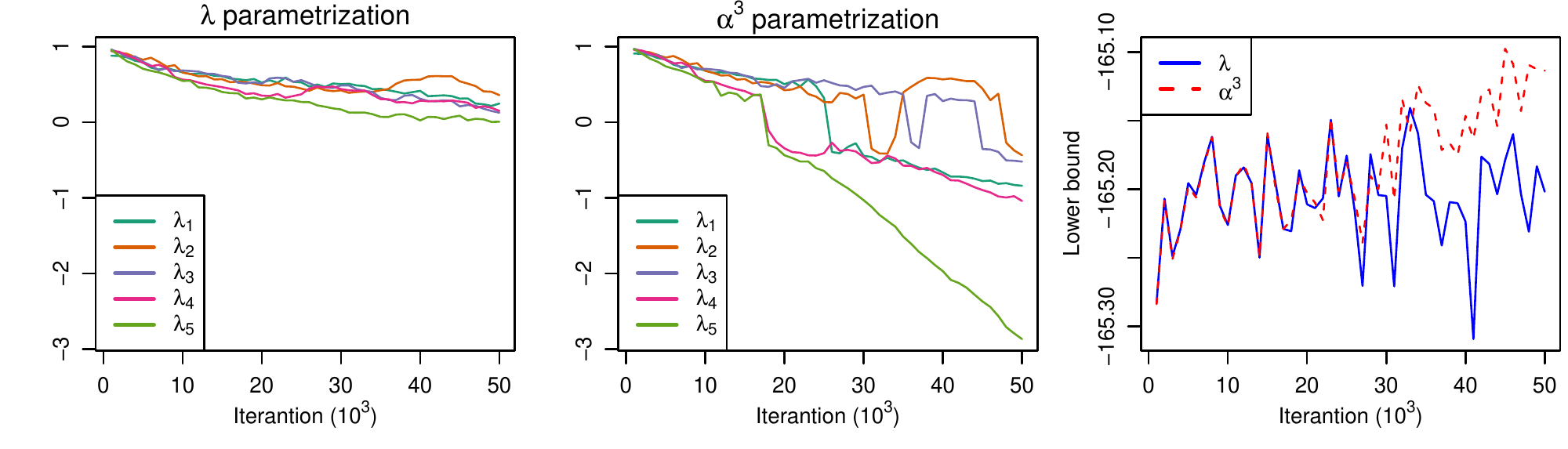}
\caption{Hip data. Trace of $\lambda$ and lower bound from Euc: CSNC.}\label{hipplot2}
\end{figure}

\subsection{Generalized linear mixed model}
Consider a GLMM where $y_i=(y_{i1},\dots,y_{in_i})^\top$ contains $n_i$ observations of the $i$th subject for $i=1, \dots, n$ and  $y=(y_1,\dots,y_n)^{\top}$. Each $y_{ij}$ follows an exponential family distribution with expected value $\mu_{ij}= \E(y_{ij})$. Let $\beta$ denote fixed effects, $b_i \sim \N(0, G^{-1})$ be random effects, and $g(\cdot)$ be a smooth link function. Given covariates $x_{ij}$ and $z_{ij}$,
\begin{equation*}
g (\mu_{ij}) = \eta_{ij} = x_{ij}^\top \beta + z_{ij}^\top b_i \;\;\text{for}\;\; i=1, \dots, n, \;\;j=1, \dots, n_i.
\end{equation*}
The parameters are collected in $\theta = (\theta_L^{\top}, \theta_G^{\top})^{\top}$, where $\theta_G$ denotes the global parameters and $\theta_L = (b_1^{\top},\dots,b_n^{\top})^{\top}$ are the local parameters. More details on model specification are given in the supplement S8. For efficiency, we employ a mean-field variational approximation where $q(\theta) = q(\theta_G) \prod_{i=1}^n q(b_i)$. As computation of NBP is highly intensive in large-scale settings, multivariate performance of GLMMs is assessed using only MMD. 

The polypharm dataset \citep{Hosmer2013} contains 7 binary responses for each of 500 subjects observed for drug usage over seven years. We fit a logistic random intercept model, $\text{logit}(\mu_{ij})= x_{ij}^{\top} \beta + b_i$, to this dataset. The covariates include gender (1 for males, 0 for females), race (0 for white, 1 for all other races), log(age/10), dummy variables for number of outpatient mental health visits (MHV1 $=1$ if 1 to 5, MHV2 $=1$ if 6 to 14, MHV3 $=1$ if $\geq 15$, and 0 otherwise), and a binary indicator for inpatient mental health visits (0 if none, 1 otherwise). As natural gradients did not yield improvements in the lower bound, we report results based on Euclidean gradients. The first two boxplots in Figure \ref{polyplot1} show that CSNC and CSNLU provide approximations of the joint and marginal posteriors that are closest to MCMC, compared to the Gaussian, real NVP and planar flow. The mean MMD of CSNLU (8.36) is also slightly higher than CSNC (8.23).
\begin{figure}[tb!]
\centering
\includegraphics[width=\textwidth]{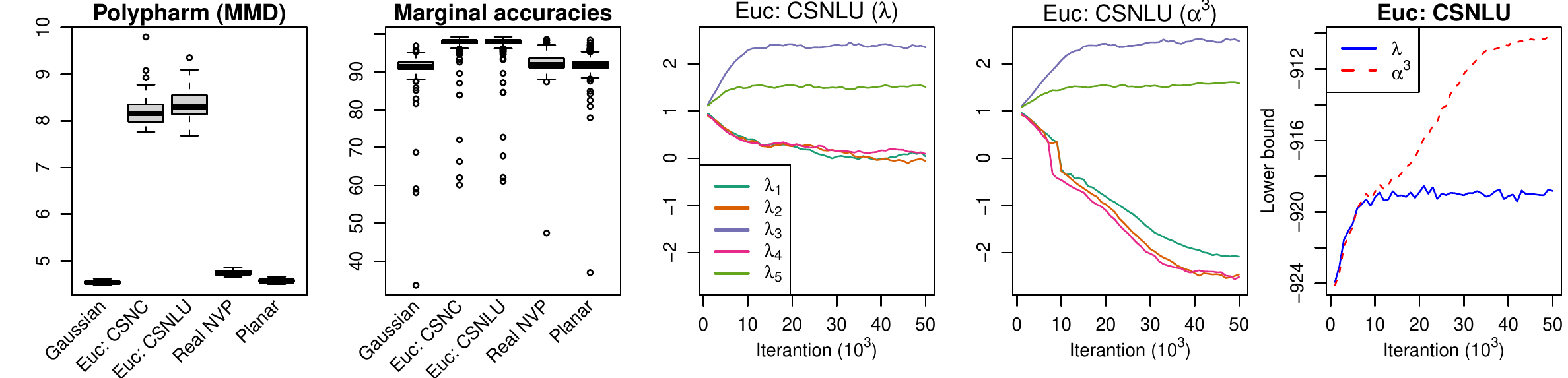}
\caption{Polypharm data. Boxplots of MMD and marginal accuracies, trace of $\lambda$ under the $\lambda$ and $\alpha^3$ parametrizations and trace of lower bound.}\label{polyplot1}
\end{figure}
The next three plots show that iterates of the first five elements of $\lambda$ are unable to traverse the stationary point at zero when CSNLU is initialized from {\bf 1} under the $\lambda$ parametrization. The $\alpha^3$ parametrization resolves this issue and achieves a much higher lower bound. Figure \ref{polyplot2} shows the marginal density estimates of some variables whose Gaussian approximation accuracy is less than 90\%. The CSN is often able to capture the posterior modes and skewness accurately, especially for the random effects. However, posterior variance of the global variables tend to be underestimated, which is likely due to the mean-field assumption. Real NVP provides better estimates of the posterior variance but modal estimates are slightly misaligned.

\begin{figure}[b!]
\centering
\includegraphics[width=0.95\textwidth]{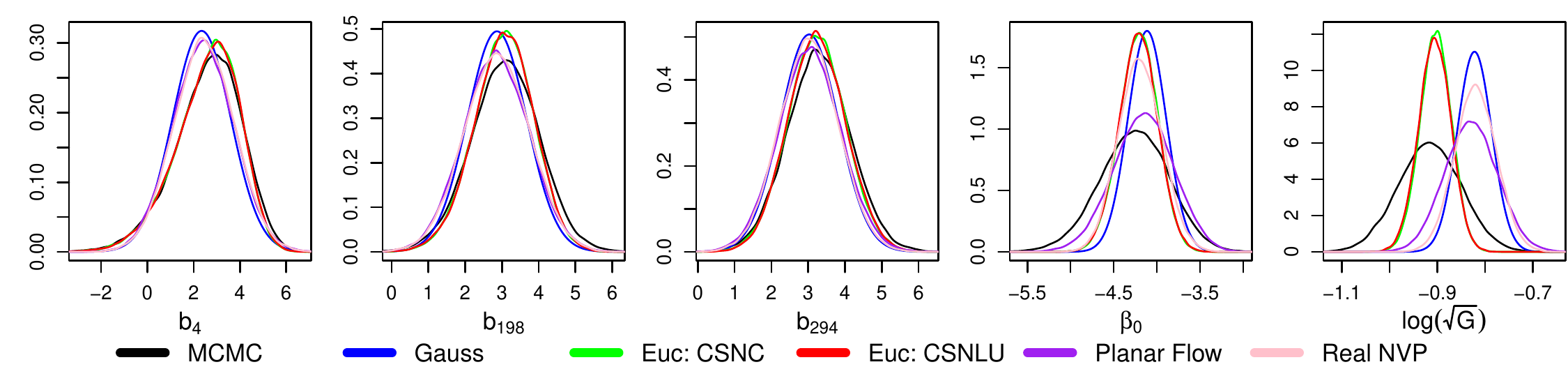}
\caption{Marginal density estimates for some variables in polypharm data.}\label{polyplot2}
\end{figure}

Next, we fit GLMMs to large-scale data through a simulation and real dataset analysis. We generate observations from a random intercept Poisson GLMM with sparse information \citep{Tan2021}, where 
\[
\log (\mu_{ij}) = -2.5 - 2 x_{ij} + b_{i}, \quad 
x_{ij}=(j-4)/10, \quad \text{for} \quad i=1,\cdots,5000,
\quad j=1,\cdots,7.
\]
We also analyze a dataset containing hospital records of diabetics patients from 1999-2008 to study the effect of HbA1c measurements on hospital readmission rates \citep{Strack2014}. A logistic random intercept model is fitted to data from $n = 16,341$ patients with multiple hospital records. The binary response is an indicator for readmission in less than 30 days of discharge, and covariates include {\tt HbA1c} (a variable with 4 levels: HbA1c test not performed, HbA1c performed and in normal range, HbA1c performed and result $> 8$\% with no change in diabetic medication, and HbA1c performed and result $> 8$\% with changes in diabetic medication), {\tt diagnosis} (a variable with 9 levels: Diabetes, Circulatory, Digestive, Genitourinary, Injury, Musculoskelet, Neoplasms, Respiratory and Other), and their interaction. Due to memory constraints, the number of iterations in each of two parallel MCMC chains is reduced to 20,000. After discarding the first half of each chain as burn-in, 20,000 draws are saved for kernel density estimation. For the simulated data, the flow length is reduced to $K=4$ and $K=1$ for planar and real NVP flows respectively to achieve convergence. For the diabetics data, the flow length in planar flow is reduced to $K=2$, while real NVP was not performed due to GPU memory constraints. The number of parameters in planar and real NVP flows scale as $O(Kd)$ and $O(Kd^2)$ respectively, so real NVP is much more computationally intensive than planar flow for large $d$. 

\begin{figure}[b!]
\centering
\includegraphics[width=\textwidth]{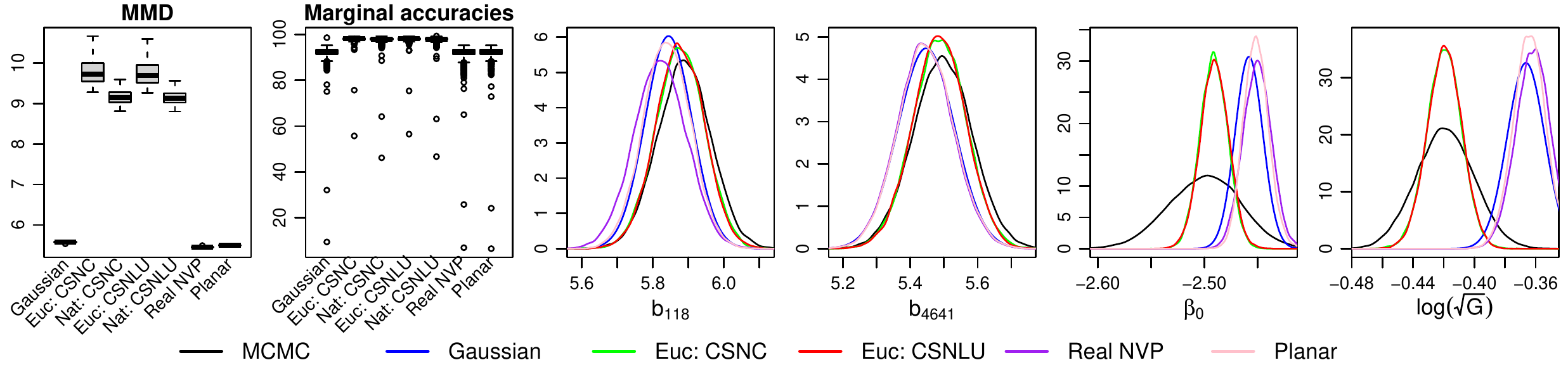}
\caption{Boxplots of MMD, marginal accuracies, and marginal density estimates for some variables in simulated data.}\label{poiplot}
\end{figure}

For both the simulated and diabetics data, use of natural gradients led to higher lower bounds, which indicate a better approximation of the posterior in KLD. The 3rd and 4th plots of Figure \ref{diaplot} also show that natural gradients led to a more steady increase in the lower bound as opposed to the large fluctuations observed when using Euclidean gradients. However, the results based on Euclidean gradients performed better in terms of MMD and marginal accuracies, as can be seen from the boxplots in Figure \ref{poiplot} and \ref{diaplot}. Euc: CSNC has the highest MMD for both datasets, and also performed well in terms of marginal density estimates, achieving high minimum accuracies of 93.8\% and 94.8\% for the simulated and diabetics data respectively, except for one or two outliers. The marginal density estimates of these outliers are shown in Figure \ref{poiplot} and \ref{diaplot}. As for the polypharm data, the CSN can capture the posterior mode and skewness accurately, especially for the random effects, but underestimates the posterior variance for the global parameters. 
 
\begin{figure}[bt!]
\centering
\includegraphics[width=\textwidth]{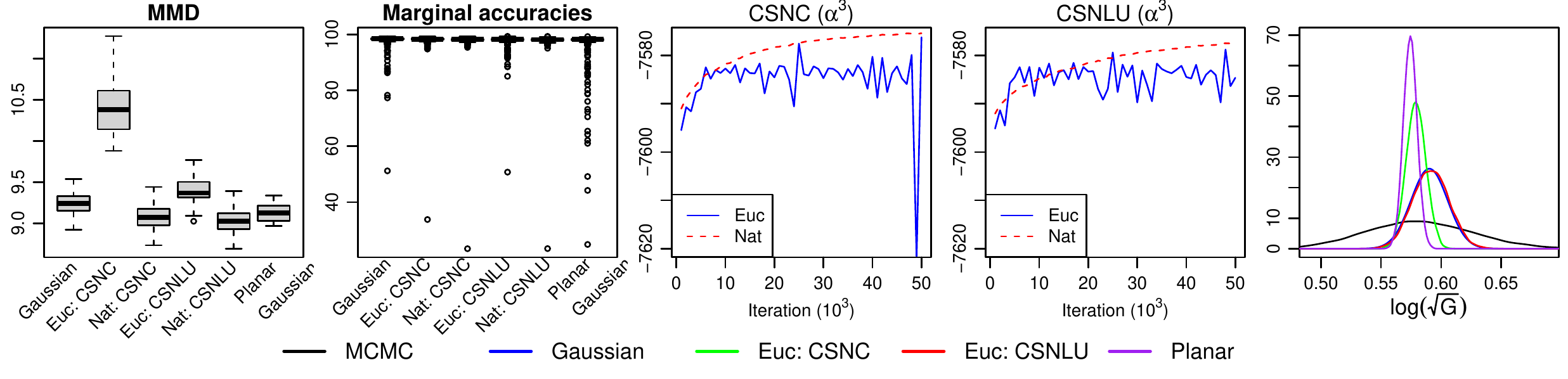}
\caption{Marginal density estimates of coefficients for diabetics data whose Gaussian approximation is less than 90\%.}\label{diaplot}
\end{figure}

\subsection{Computation times}
Table \ref{Ttime} summarizes runtimes for all applications. The Gaussian and CSN algorithms are run in Julia, MCMC in RStan, and normalizing flows in Python on the GPU. Due to the use of different platforms, runtimes may not be directly comparable, but are informative nonetheless. The CSN algorithms provide significant speedup relative to MCMC, and are more efficient and scalable. While CSNC and CSNLU have similar runtimes with Euclidean gradient updates, the use of natural gradients increases their runtimes by a larger margin as the dimension increases. For CSNLU, computation of natural gradients is more intensive due to the inversion of unit upper triangular matrices. CSN algorithms are often much faster than flow-based methods, as they benefit from  the use of analytic gradient updates rather than automatic differentiation. Reducing the flow length and the number of Monte Carlo samples for gradient estimation of normalizing flows can trade-off accuracy for a reduction in computation time. 

\begin{table}[htb!]
\centering
\begin{small}
\begin{tabular}{lrrrrrrrr}
  \hline
 & Gaussian & \thead{Euc: \\CSNC} & \thead{Nat: \\CSNC} &\thead{Euc:\\ CSNLU} & \thead{Nat:\\ CSNLU} & NVP & Planar & MCMC \\
  \hline
Bioassay &     0.1 &     0.9 &     1.6 &     1.0 &     4.4 &   242.5 &   199.0 &    45.8 \\
  German &     2.2 &     5.4 &    11.6 &     9.7 &    60.6 &   245.6 &   217.2 &   511.2 \\
  Fish &     1.1 &     3.2 &     4.0 &     3.3 &     8.1 &   169.6 &   159.4 &   377.9 \\
  Hip &     0.3 &     1.4 &     2.2 &     1.5 &     5.2 &   279.1 &   210.1 &    85.2 \\
  Polypharm &     9.8 &    22.6 &    29.1 &    23.5 &    50.4 &   469.2 &   244.1 &   723.1 \\
  Simulated &    51.0 &   209.1 &   323.0 &   272.1 &   488.9 &  4809.9 &   298.0 &  2664.6 \\
  Diabetics &   254.9 &  1023.3 &  1315.1 &  1292.3 &  1794.0 &     - &  4470.2 & 13462.3 \\
   \hline
\end{tabular}
\caption{Runtimes in seconds for all applications.}\label{Ttime}
\end{small}
\end{table}

\section{Conclusion}\label{Sec conclusion}
In this article, we introduce a subclass of the closed skew normal as an alternative to Gaussian variational approximation that is able to accommodate skewness and is flexible in that a bounding line is permitted in each dimension unlike the original skew normal. This subclass is constructed using affine transformations, and we highlight the limitations instilled by constraining the linear map to be a lower triangular matrix. An LU decomposition is proposed for the linear map when it is a full matrix to ensure ease in inversion during optimization. For the original skew normal, the presence of a stationary point when the skewness is zero is known to create issues in maximum likelihood estimation. We prove that such a stationary point similarly exists in maximization of the evidence lower bound in variational inference, which creates problems in stochastic gradient ascent algorithms. We also demonstrate that parametrizing in terms of $\alpha^3$ is effective in resolving these issues. Finally, we derive analytic natural gradients for maximizing the lower bound using stochastic gradient ascent by considering the augmentation $q(\theta, w)$ instead of $q(\theta)$, that also ensures positive definiteness by using the Cholesky factorization or LU decomposition. The performance of proposed methods is investigated using a variety of statistical applications and comparisons with normalizing flows are provided. 

Flow-based methods are highly flexible in modeling diverse distribution characteristics, such as skewness and multimodality. The adjustable flow lengths also allow for a balance between approximation accuracy and computational efficiency. However, normalizing flows seem to be less effective in capturing skewness in particular compared to the CSN, and may require high-performance GPUs for computation involving large-scale problems. A limitation of our current implementation of CSN approximation for large-scale data is the employment of the mean-field scheme, which reduces its accuracy when complex dependencies among local and global variables exist. While sparse covariance or precision structures can be easily integrated into CSNC, imposing sparsity into CSNLU is more challenging because unlike Cholesky decomposition, the LU decomposition does not inherently preserve sparsity structures. A possible way of overcoming this restriction is to apply reparametrized variational Bayes \cite{Tan2021}, which we are keen to investigate further in future work.

\section{Acknowledgments}
We wish to thank the editor, associate editor and referees for their constructive comments and suggestions, which have improved the manuscript.


%
%

\setcounter{section}{0} \renewcommand{\thesection}{S\arabic{section}}
\setcounter{figure}{0} \renewcommand{\thefigure}{S\arabic{figure}}
\setcounter{table}{0} \renewcommand{\thetable}{S\arabic{table}}
\setcounter{equation}{0} \renewcommand{\theequation}{S\arabic{equation}}
\setcounter{lemma}{0} \renewcommand{\thelemma}{S\arabic{lemma}}

\newpage

\bigskip
\begin{center}
{\large\bf Supplementary material for ``Variational inference \\[1mm]
based on a subclass of closed skew normals"} \\ [3mm]
Linda S. L. Tan (statsll@nus.edu.sg) and Aoxiang Chen(e0572388@u.nus.edu) \\ [1mm]
Department of Statistics and Data Science \\ [1mm]
National University of Singapore
\end{center}

\spacingset{1.5} 

\section{Proof of Theorem 1}

\begin{proof}
We present the proof for the case where $C$ is a full $d \times d$ matrix. If $C$ is lower triangular, the proof is similar with some minor modifications. For the Gaussian variational  approximation $q_G(\theta) $, let $\eta_G = (\mu^T, \vec(C)^T)^T$ denote its parameter, where $\vec(\cdot)$ is vectorization of a matrix columnwise from left to right. We have $\nabla_{\eta_G} \log q_G (\theta) = (\nabla_\mu \log q_G (\theta)^T, \nabla_{\vec(C)} \log q_G (\theta)^T)^T$, where 
\[
\begin{aligned}
\nabla_\mu \log q_G(\theta)  = \Sigma^{-1} (\theta - \mu), \quad 
\nabla_{\vec(C)} \log q_G(\theta) = \vec [\{\Sigma^{-1}(\theta - \mu) (\theta - \mu)^T - I_d \}C^{-T}].
\end{aligned}
\]
Let $h_G(\theta) = \log p(y, \theta) - \log q_G(\theta)$. From \eqref{score fn grad}, gradient of the lower bound $\mL_G$  of $q_G(\theta)$ is
\begin{equation} \label{identity}
\nabla_{\eta_G} \mL_G = \E_{q_G} \{h_G(\theta) \nabla_{\eta_G} \log q_G (\theta) \},
\end{equation}
and $\nabla_{\eta_G} \mL_G = 0$ at  $\mu=\hat{\mu}$ and $C=\hat{C}$. For the SN or CSN subclass variational approximation $q(\theta)$, let $\eta = (\mu^T, \vec(C)^T, \lambda^T)^T$ denote its parameter. From \eqref{score fn grad}, gradient of the lower bound $\mL$ of $q(\theta)$ is $\nabla_\eta \mL = \E_q \{h(\theta)  \nabla_\eta \log q (\theta)\} $. Since $q(\theta)$ reduces to $q_G(\theta)$ at $\lambda = 0$, 
\[
\nabla_\eta \mL|_{\lambda=0} = \E_{q_G} \left[ h_G(\theta)  \{\nabla_\eta \log q (\theta)\}_{\lambda = 0} \right].
\] 

If $q(\theta)$ is the SN from \eqref{SN}, then 
$\log q(\theta) = \log 2  + \log q_G(\theta) + \log \Phi\{\lambda^T (\theta - \mu) \}$, and
\[
\begin{aligned}
\nabla_\mu \log q (\theta) = \nabla_\mu \log q_G(\theta) - \frac{\phi\{\lambda^T (\theta - \mu) \} }{\Phi\{\lambda^T (\theta - \mu) \}} \lambda &\implies  
\nabla_\mu \log q (\theta)|_{\lambda = 0} =  \nabla_\mu \log q_G(\theta), \\
\nabla_{\vec(C)} \log q (\theta) = \nabla_{\vec(C)} \log q_G(\theta) &\implies \nabla_{\vec(C)} \log q (\theta)|_{\lambda = 0} = \nabla_{\vec(C)} \log q_G(\theta), \\
\nabla_\lambda \log q (\theta) = \frac{\phi\{\lambda^T (\theta - \mu) \} }{\Phi\{\lambda^T (\theta - \mu) \}} (\theta - \mu)  &\implies 
\nabla_\lambda \log q (\theta)|_{\lambda = 0}= b(\theta - \mu).
\end{aligned}
\]
Next, suppose $q(\theta)$ is the CSN from \eqref{Proposed CSN}, then
\[
\log q(\theta) = d\log(2) - \frac{d}{2}\log(2\pi) - \frac{v^T v}{2} - \log |C| + \sum_{i=1}^d \{\log \Phi (\lambda_i v_i) + \log \tau_i\},
\]
where $v = D_\tau C^{-1} (\theta - \mu) + b \delta = D_\tau z + b \delta$ , $D_\tau = \diag(\tau)$, $\delta_i = \lambda_i/\sqrt{1 + \lambda_i^2}$ and $\tau_i = \sqrt{1-b^2\delta_i^2}$. We have
\[
\frac{d\delta_i}{d\lambda_i} = \frac{1}{(1+\lambda_i^2)^{3/2}}, \quad 
\frac{d\tau_i}{d\lambda_i} = - \frac{b^2 \lambda_i}{\tau_i(1+\lambda_i^2)^2}, \quad 
\frac{d v_i}{d\lambda_i} = \frac{d\tau_i}{d\lambda_i} z_i + b\frac{d\delta_i}{d\lambda_i}.
\]
Let $\omega = \phi(\lambda \odot v) \lambda / \Phi(\lambda \odot v)$. When $\lambda = 0$, $\delta = 0$, $\tau = \bfone$, $D_\tau = I_d$, $v = z$ and $\omega = 0$. Thus
\[
\begin{gathered}
\nabla_\mu \log q (\theta) = C^{-T} D_\tau (v - \omega)
\implies  
\nabla_\mu \log q (\theta)|_{\lambda = 0} =  \nabla_\mu \log q_G(\theta),  \\
\nabla_{\vec(C)} \log q (\theta) = \vec[\{ - C^{-T} D_\tau (\omega - v)(\theta - \mu)^T - I_d \}C^{-T}] \\
\implies \nabla_{\vec(C)} \log q (\theta)|_{\lambda = 0} = \nabla_{\vec(C)} \log q_G(\theta), \\
\nabla_\lambda \log q (\theta) = \left[ - v_i \frac{d v_i}{d\lambda_i} + \frac{\phi(\lambda_i v_i)}{\Phi(\lambda_i v_i)} \left(v_i + \lambda_i \frac{d v_i}{d\lambda_i} \right) + \frac{1}{\tau_i}  \frac{d\tau_i}{d\lambda_i} \right]  
\implies 
\nabla_\lambda \log q (\theta)|_{\lambda = 0}= 0.
\end{gathered}
\]

Therefore, 
\[
\{\nabla_\eta \log q (\theta)\}_{\lambda = 0} 
= \begin{bmatrix} \nabla_\mu \log q (\theta) \\  \nabla_{\vec(C)} \log q (\theta) \\
\nabla_\lambda \log q (\theta) \end{bmatrix}_{\lambda = 0} 
= \begin{bmatrix}
\nabla_\mu \log q_G(\theta)  \\ \nabla_{\vec(C)} \log q_G(\theta) \\ \nabla_\lambda \log q (\theta)|_{\lambda = 0} 
\end{bmatrix}
= \begin{bmatrix} \nabla_{\eta_G} \log q_G (\theta)  \\ \nabla_\lambda \log q (\theta)|_{\lambda = 0} \end{bmatrix},
\]
where $\nabla_\lambda \log q (\theta)|_{\lambda = 0} = b(\theta -\mu)$ for the SN and $\nabla_\lambda \log q (\theta)|_{\lambda = 0} = 0$ for the CSN subclass. From \eqref{identity}, $\E_{q_G}\{ h_G(\theta) (\theta-\mu)\} = 0$ at $\mu=\hat{\mu}$, $C=\hat{C}$. Hence $\nabla_\eta \mL = 0$ at $\mu=\hat{\mu}$, $C=\hat{C}$ and $\lambda=0$ for both the SN and CSN subclass, and $\mL$ is stationary at this point. 
\end{proof}

\section{Proof of Lemma 1}
First, we present and prove Lemma S1.

\begin{lemma} \label{LemS1}
Let $q(\theta)$ and $\tilde{q}(\theta)$ be respectively pdfs of $\CSN_{d,q}(\mu, \Sigma, D, \nu, \Delta)$ and $\CSN_{d,q}(\mu + \Sigma s, \Sigma, D, \nu -  D \Sigma s, \Delta)$. Then
\begin{enumerate}[(i)]
\item $\exp(s^T \theta) \phi_d(\theta| \mu, \Sigma) 
= \exp(\mu^T s + s^T \Sigma s/2) \phi_d(\theta| \mu + \Sigma s, \Sigma)$,
\item $\exp(s^T \theta) q(\theta) = \exp(\mu^T s + s^T \Sigma s/2) \tilde{q}(\theta) \dfrac{\Phi_q(0| \nu - D \Sigma s, \Delta + D \Sigma D^T)}{\Phi_q(0| \nu, \Delta + D \Sigma D^T)}$.
\end{enumerate}
\end{lemma}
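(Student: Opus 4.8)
The plan is to handle the two parts in sequence, with part (i) feeding directly into part (ii). Part (i) is the standard Gaussian exponential-tilting identity, so I would prove it by completing the square. Writing $\exp(s^\top\theta)\,\phi_d(\theta\mid\mu,\Sigma)$ explicitly, the exponent is $s^\top\theta-\tfrac12(\theta-\mu)^\top\Sigma^{-1}(\theta-\mu)$; I would absorb the linear term $s^\top\theta$ into the quadratic by noting that $\Sigma^{-1}(\mu+\Sigma s)=\Sigma^{-1}\mu+s$, so the exponent rearranges to $-\tfrac12(\theta-\mu-\Sigma s)^\top\Sigma^{-1}(\theta-\mu-\Sigma s)+\mu^\top s+\tfrac12 s^\top\Sigma s$. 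The leftover constant $\mu^\top s+\tfrac12 s^\top\Sigma s$ is free of $\theta$ and factors out, giving exactly the claimed identity with the tilted mean $\mu+\Sigma s$.

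For part (ii), I would start from the CSN density given earlier in the excerpt, $q(\theta)=\phi_d(\theta\mid\mu,\Sigma)\,\Phi_q\{D(\theta-\mu)\mid\nu,\Delta\}/\Phi_q(0\mid\nu,\Delta+D\Sigma D^\top)$, and multiply through by $\exp(s^\top\theta)$. Applying part (i) to the Gaussian factor immediately replaces $\phi_d(\theta\mid\mu,\Sigma)$ by $\exp(\mu^\top s+\tfrac12 s^\top\Sigma s)\,\phi_d(\theta\mid\mu+\Sigma s,\Sigma)$ and pulls the scalar constant to the front. What then remains is to identify the product $\phi_d(\theta\mid\mu+\Sigma s,\Sigma)\,\Phi_q\{D(\theta-\mu)\mid\nu,\Delta\}$ with a multiple of the target density $\tilde q(\theta)$ of $\CSN_{d,q}(\mu+\Sigma s,\Sigma,D,\nu-D\Sigma s,\Delta)$.

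The crux of the argument is a translation-invariance observation for the skewing cdf, and this is the one step worth stating carefully. Writing the tilted parameters as $\tilde\mu=\mu+\Sigma s$ and $\tilde\nu=\nu-D\Sigma s$, the cdf appearing in $\tilde q$ has argument $D(\theta-\tilde\mu)=D(\theta-\mu)-D\Sigma s$ and location $\tilde\nu=\nu-D\Sigma s$. Since $\Phi_q(x\mid m,\Delta)=\Phi_q(x-m\mid 0,\Delta)$ depends only on $x-m$, the common shift $-D\Sigma s$ cancels between argument and location, so $\Phi_q\{D(\theta-\tilde\mu)\mid\tilde\nu,\Delta\}=\Phi_q\{D(\theta-\mu)\mid\nu,\Delta\}$. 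Hence the skewing factors of $q$ and $\tilde q$ coincide pointwise, and the only discrepancy between $\phi_d(\theta\mid\tilde\mu,\Sigma)\,\Phi_q\{D(\theta-\mu)\mid\nu,\Delta\}$ and $\tilde q(\theta)$ is the normalizing constant $\Phi_q(0\mid\tilde\nu,\Delta+D\Sigma D^\top)$ built into the definition of $\tilde q$. Substituting this back and collecting the two normalizing constants into a single ratio yields the stated identity.

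I expect the only genuine obstacle to be this translation-invariance step: one must verify that under tilting the covariance $\Delta$ of the skewing cdf is unchanged (only its location moves), and that the prescribed location shift $\tilde\nu=\nu-D\Sigma s$ is exactly what is needed to offset the shift $-D\Sigma s$ induced in the argument when $\mu$ moves to $\tilde\mu$. Everything else is bookkeeping of scalar constants, and the two denominators $\Phi_q(0\mid\tilde\nu,\Delta+D\Sigma D^\top)$ and $\Phi_q(0\mid\nu,\Delta+D\Sigma D^\top)$ assemble automatically into the claimed ratio.
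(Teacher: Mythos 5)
Your proposal is correct and follows essentially the same route as the paper: part (i) by completing the square, and part (ii) by applying (i) to the Gaussian factor and then exploiting the fact that $\Phi_q(x\mid m,\Delta)$ depends only on $x-m$, so the shift $-D\Sigma s$ cancels between the argument $D(\theta-\mu-\Sigma s)$ and the location $\nu-D\Sigma s$, leaving only the ratio of normalizing constants. The paper performs this cancellation implicitly by multiplying and dividing by the factors of $\tilde q(\theta)$, whereas you state the translation invariance explicitly, but the mathematical content is identical.
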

\begin{proof}
For (i), 
\[
\begin{aligned}
\exp(s^T \theta) \phi_d(\theta| \mu, \Sigma) 
&= (2\pi)^{-d/2} |\Sigma|^{-1/2} \exp \big[- \tfrac{1}{2}\{ \theta^T \Sigma^{-1} \theta - 2 \theta^T \Sigma^{-1} (\mu + \Sigma s) + \mu^T \Sigma^{-1} \mu \} \big]  \\
&= \phi_d(\theta |\mu + \Sigma s, \Sigma) \exp(\mu^T s + s^T \Sigma s/2).
\end{aligned}
\]
For (ii), using the result in (i) we have 
\[
\begin{aligned}
\exp(s^T \theta) q(\theta) &=  \exp(\mu^T s + s^T \Sigma s/2) \phi_d(\theta |\mu + \Sigma s, \Sigma) \frac{\Phi_q (D(\theta - \mu)|\nu, \Delta)}{\Phi_q (0| \nu, \Delta + D \Sigma D^T)} \\
&= \exp(\mu^T s + s^T \Sigma s/2) \tilde{q}(\theta) \frac{\Phi_q (0| \nu - D \Sigma s, \Delta + D \Sigma D^T)}{ \Phi_q (D(\theta - \mu - \Sigma s) | \nu-D \Sigma s, \Delta)} \frac{ \Phi_q (D(\theta - \mu)|\nu, \Delta)}{\Phi_q (0| \nu, \Delta + D \Sigma D^T)}  \\
& =  \exp(\mu^T s + s^T \Sigma s/2) \tilde{q}(\theta) \dfrac{\Phi_q(0| \nu - D \Sigma s, \Delta + D \Sigma D^T)}{\Phi_q(0| \nu, \Delta + D \Sigma D^T)}.
\end{aligned}
\]
\end{proof}

The result in Lemma 1 follows directly from Lemma \ref{LemS1}(ii) by replacing $\mu$ by $\mu^*$, $\Sigma$ by $\Sigma^*$, $D$ by $D^*$ and setting $\nu =0$ and $\Delta = I_d$. Note that $\Phi_d(0|- D_\lambda D_\tau^{-1} C^Ts, I_d + D_\lambda^2) = \Phi( D_\alpha C^Ts) $. The moment generating function of $\tilde{q}(\theta)$ is
\[
M(t) = \frac{\Phi_d(D_\alpha C^T (t+s))}{\Phi(D_\alpha C^T s)} \exp \{t^T (\mu^* + \Sigma^*s) + t^T\Sigma^* t/2\},
\]
and the log cumulant function is
\[
K(t) = \log M(t) = \sum_{j=1}^d \log \Phi(\alpha_j C_{\cdot j}^T (t+s)) + t^T(\mu^* + \Sigma^*s)  + \frac{t^T \Sigma^* t}{2} + \text{constant}.
\]
Substituting $t=0$ in the expressions below yields the mean and covariance of $\tilde{q}(\theta)$:
\[
\begin{aligned}
\nabla_t K(t) &= \sum_{j=1}^d \zeta_1(\alpha_j C_{\cdot j}^T (t+s)) \alpha_j C_{\cdot j}  + \mu^* + \Sigma^*s +\Sigma^* t, \\
\nabla_t^2 K(t)&= \sum_{j=1}^d \zeta_2(\alpha_j C_{\cdot j}^T (t+s)) \alpha_j^2 C_{\cdot j} C_{\cdot j}^T + \Sigma^*.
\end{aligned}
\]

\section{Normal sample}  
The induced prior for $\theta_2$ is $p(\theta_2) = {b_0}^{a_0} \exp\{- a_0 \theta_2 - b_0 \exp(-\theta_2)\}/\Gamma(a_0)$ and 
\[
\begin{aligned}
\log p(y,\theta) = c_* - \left( a_0 + n/2 \right) \theta_2 - \exp(- \theta_2) \left(b_0 + \sum\nolimits_{i=1}^n (y_i - \theta_1)^2/2 \right) - \theta_1^2/(2\sigma_0^2),
\end{aligned}
\]
where $c_* = a_0 \log b_0 - \log \Gamma(a_0) - \frac{1}{2} \log(\sigma_0^2) - \frac{n+1}{2}\log(2\pi)$. From Lemma 1, taking $s$ as $-e_2$,
\[
\exp(-e_2^T \theta) q(\theta) = 2^d \Phi_d(- D_\alpha C^T e_2) \exp(-e_2^T \mu^* + e_2^T \Sigma^* e_2/2) \tilde{q}(\theta) = M \tilde{q}(\theta),
\]
where $M$, $\tilde{\mu}$ and $\tilde{\Sigma}$ are defined as in Lemma 1. Using this result and taking expectations,
\[
\begin{aligned}
\E_q \{ \log p(y,\theta) \} &= c_* - \left( \frac{n }{2} + a_0 \right) \E_q(\theta_2) - \int \underbrace{\exp(-\theta_2) q(\theta)}_{ M\tilde{q}(\theta)} \left(b_0 + \frac{\sum_{i=1}^n (y_i - \theta_1)^2}{2}  \right) d\theta- \frac{\E_q(\theta_1^2)}{2\sigma_0^2} \\
&= c_* - \left( \frac{n }{2} + a_0 \right) \mu_2  -M \left(b_0 + \frac{n \tilde{\Sigma}_{11} + \sum_{i=1}^n (y_i - \tilde{\mu}_1)^2 }{2}  \right) - \frac{e_1^T(CC^T + \mu\mu^T) e_1}{2\sigma_0^2}. \\
&= c_* - (a_0 + n/2) \mu_2 - M (b_0 + T/2) - (\Sigma_{11} + \mu_1^2 )/(2\sigma_0^2) ,
\end{aligned}
\]
where $T = \sum_{i=1}^n (y_i - \tilde{\mu}_1)^2 + n \tilde{\Sigma}_{11}$.

In the univariate case, 
\[
\E_q \{ \log p(y,\theta) \}= a_0 \log b_0 - \log \Gamma(a_0) - \frac{n}{2}\log(2\pi) - (a_0 + n/2) \mu - M T,
\]
where $M = 2\Phi(-\alpha \sigma) \exp(\sigma^2 \tau^2/2 - \mu + b\sigma \alpha)$, $T = b_0 + \sum_{i=1}^n y_i^2/2$. The lower bound can be written as a function of $(\sigma, \lambda)$ only,
\[
\mL(\sigma, \lambda) = a_0 \log b_0 - \log \Gamma(a_0) - \frac{n}{2}\log(2\pi) - (a_0 + n/2) \{ f(\sigma, \lambda)  +1 \} + H_q.
\]
since $\mL$ is maximized at $\mu = f(\sigma, \lambda) = \sigma^2 \tau^2/2 + b\sigma \alpha - \log(a_0 + n/2) + \log\{2T\Phi(-\sigma \alpha)\}$.

To find the normalizing constant of the posterior density in the bivariate case, we first integrate $p(y, \theta)$ with respect to $\theta_1$. The integral with respect to only $\theta_2$ can then be evaluated numerically.
\[
\begin{aligned}
p(y) &= \int p(y, \theta) d\theta \\
&=  \int \frac{b_0^{a_0}}{\Gamma(a_0)} \e^{-a_0 \theta_2 - b_0 \exp(-\theta_2)} \cdot \frac{1}{\sqrt{2\pi \sigma_0^2}} \e^{-\theta_1^2/(2\sigma_0^2)} \prod_{i=1}^n \frac{1}{\sqrt{2\pi} \e^{\theta_2/2}} \e^{-(y_i-\theta_1)^2/(2\e^{\theta_2})} d\theta_1 d\theta_2 \\
&=  \int \frac{b_0^{a_0}\exp\{-(a_0 + n/2) \theta_2 - (b_0 + \sum_{i=1}^n y_i^2/2) \exp(-\theta_2)\} }{\Gamma(a_0)(2\pi)^{(n+1)/2}\sqrt{\sigma_0^2}} \int \e^{ -\frac{(\theta_1 - m)^2}{2v} + \frac{m^2}{2v} }  d\theta_1 d\theta_2 \\
&=  \int \frac{b_0^{a_0}\exp\{m^2/(2v) - (a_0 + n/2) \theta_2 - (b_0 + \sum_{i=1}^n y_i^2/2) \exp(-\theta_2)\} \sqrt{v}}{\Gamma(a_0)(2\pi)^{n/2}\sqrt{\sigma_0^2}} d\theta_2,
\end{aligned}
\]
where  $v = \{n\exp(-\theta_2) +\sigma_0^{-2}\}^{-1}$, $m = v n \bar{y} \exp(-\theta_2)$ and $\bar{y} = \sum_{i=1}^n y_i/n$.

\section{Poisson generalized linear model}
Consider a dataset \citep{Scotto1974} on the incidence of non-melanoma skin cancer by age group (15--24, \dots, 75--84, 84+) in Dallas-Fort Worth and Minneapolis-St. Paul, given in Table 6 of \cite{Gart1979}. Let $y_i \sim \text{Poisson}(\gamma_i)$ be the number of skin cancers in group $i$ out of a population $T_i$, where the expected rate is modeled as  $\log (\mu_i/T_i) = x_i^T \theta$ for $i=1, \dots, n$ and $\log(T_i)$ acts as an offset. The covariates $x_i$ are dummy variables for age group and town, such that $n=16$ and $d=9$. A normal prior $\N(0, \sigma_o^2I_d)$ is placed on $\theta$, where $\sigma_0 = 100$. Applying Lemma 1 again, we obtain
\[
\E_q \{\log p(y, \theta) \}= y^T X\mu - \frac{\tr(\Sigma) + \mu^T \mu }{2\sigma_0^2} - 2^d \sum_{i=1}^n T_i\exp\{ \mu^{*T}  x_i + x_i^T \Sigma^* x_i /2 \} \Phi(D_\alpha C^T x_i) + c_*,
\]
where $c_* =  - \frac{d}{2} \log(2\pi \sigma_o^2) + \sum_{i=1}^n \{y_i \log(T_i) - \log(y_i!) \}$. Optimal parameters for the variational approximations are obtained using BFGS via {\tt Optim} in {\tt R}. 
\begin{table}[tb!]
\centering
\begin{small}
\begin{tabular}{@{}lcccccccccc@{}}
\hline
 & $\mL$ & $\theta_1$  & $\theta_2$  & $\theta_3$  & $\theta_4$  & $\theta_5$  & $\theta_6$  & $\theta_7$  & $\theta_8$  & $\theta_9$ \\ \hline
Gaussian & -115.027 & 94.9 & 95.5 & 95.2 & 94.9 & 94.9 & 94.9 & 95.0 & 95.4 & 99.3 \\ 
  SN & -115.011 & 98.7 & 98.6 & 98.6 & 98.7 & 98.6 & 98.6 & 98.7 & 98.6 & 99.3 \\ 
  CSNC & -115.009 & 98.6 & 98.9 & 98.7 & 98.7 & 98.6 & 98.6 & 98.7 & 98.7 & 99.4 \\ 
  CSNLU & -115.008 & 98.6 & 98.8 & 98.6 & 98.7 & 98.5 & 98.5 & 98.6 & 98.7 & 99.3 \\ 
\hline
\end{tabular}
\caption{Lower bounds and accuracies (\%) of variational approximations.}\label{T2}
\end{small}
\end{table}
\begin{figure}[tb!]
\centering
\includegraphics[width=0.9\textwidth]{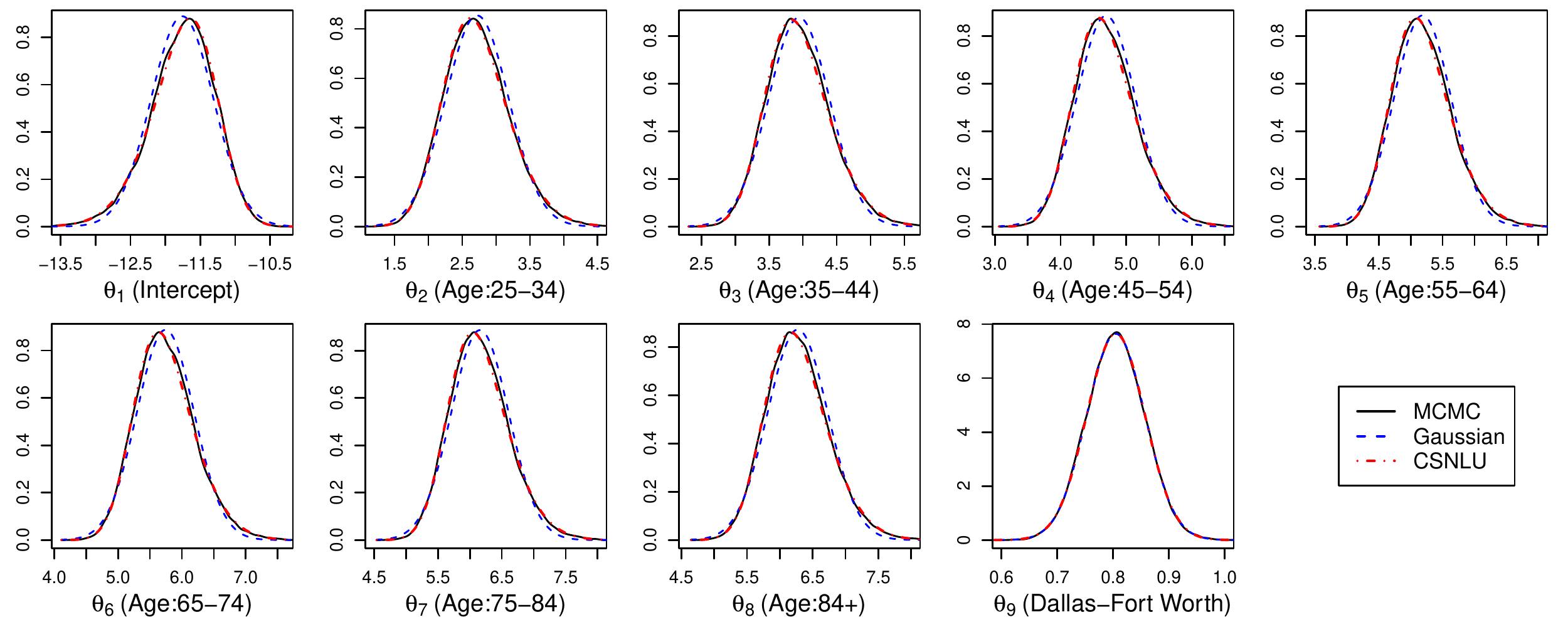}
\caption{Marginal posterior densities estimated using MCMC, Gaussian, and CSNLU. }\label{F7}
\end{figure}
Table \ref{T2} shows that the Gaussian variational approximation has an accuracy of about 95\% compared to MCMC for the marginal densities of all parameters except $\theta_9$, while the approximations incorporating skewness can attain an accuracy of 99\%. This improvement in approximation of the marginal densities can also be observed in Figure \ref{F7}. While CSNC and CSNLU attain a slightly higher lower bound than the SN, indicating a mildly better approximation of the joint posterior, the differences are hardly distinguishable marginally. Examination of the bivariate MCMC draws do not indicate a need for a bounding line in each dimension, which explains why SN performs almost as well as the CSN subclass.

\section{Reparametrization trick gradients} 
 If $C$ is the Cholesky factor, then 
\[
\begin{aligned}
\nabla_\eta \theta = \begin{bmatrix}  \nabla_\mu \theta \\  \nabla_\lambda \theta \\  \nabla_{\vech(C)} \theta  \end{bmatrix}
= \begin{bmatrix}
I_d \\
\diag\{ \widetilde{w}_1 - (1-b^2) D_\lambda w_2 \}D_\kappa^3C^T \\
\El(z \otimes I)
\end{bmatrix},
\end{aligned}
\]
On the other hand, if $C  = L U$,  then
\[
\begin{aligned}
\nabla_\eta \theta = \begin{bmatrix}  \nabla_\mu \theta \\  \nabla_\lambda \theta \\  \nabla_{\vech(L)} \theta \\
\nabla_{\vech_u(U)} \theta  \end{bmatrix}
= \begin{bmatrix}
I_d \\
\diag\{ \widetilde{w}_1 - (1-b^2) D_\lambda w_2 \}D_\kappa^3C^T \\
\El(Uz \otimes I) \\ 
\Eu (z \otimes L^T)
\end{bmatrix}.
\end{aligned}
\]

\section{Natural gradients}
Let $\El$, $\Eu$ and $\Ed$ denote elimination matrices such that $\El \vec(X) = \vech(X)$, $\Eu \vec(X) = \vech_u (X)$ and $\Ed \vec(X) = \diag(X)$. Conversely, $\El^T \vech(X) = \vec(X_\ell)$, $\Eu^T \vech_u(X) = \vec(X_u)$ and $\Ed^T \diag(X) = \vec\{ \dg(X) \}$, where $\dg(X)$ is a diagonal matrix derived from $X$ by setting all non-diagonal elements to zero. Let $\K$ be the commutation matrix such that $\K \vec(X) = \vec(X^T)$.  We use $\otimes$ to denote the Kronecker product. A good reference for vector differential calculus is \cite{Magnus1999}. 

Let $\ell = \log q(\theta, w)$. For taking expectations, note that $\E(z) = 0$, $\E(zz^T) = I_d$, $\E(\widetilde{w}) = 0$, $\E(\widetilde{w} \widetilde{w}^T) = (1-b^2) I_d$, $\E(z|w) = D_\alpha \widetilde{w}$, $\E(z \widetilde{w}^T) = (1-b^2) D_\alpha$ and 
\[
\E[\nabla_\mu \ell z^T] = C^{-T} \{D_\kappa^{-2} - (1-b^2) D_\kappa^{-1}D_\lambda D_\alpha \} =C^{-T}.
\]

\subsection{Cholesky factorization (Proof of Theorem 1)} \label{AppendixE1}
If $C$ is the Cholesky factor, the first order derivatives are
\[
\begin{aligned}
\nabla_\mu \ell &= C^{-T} D_\kappa^{-1} \{D_\kappa^{-1} z - D_\lambda \widetilde{w} \} , \\
\nabla_\lambda \ell &= (1-b^2) (\alpha \odot \kappa) - D_\lambda \{(1-b^2)  z^2 + \widetilde{w}^2\} + \diag(2/\kappa - \kappa) (z \odot\widetilde{w}), \\
\nabla_{\vech(C)} \ell & = \vech \{(\nabla_\mu \ell)z^T - C^{-T} \}.
\end{aligned}
\]
The second order derivatives are
\[
\begin{aligned}
\nabla_\mu^2 \ell &= - C^{-T} D_{\kappa}^{-2} C^{-1},
\\
\nabla_{\mu, \lambda }^2 \ell &= C^{-T} \diag\{ 2(1-b^2)(\lambda \odot z)
 - (2/\kappa - \kappa) \odot \widetilde{w} \},
\\
\nabla_{ \vech(C), \mu}^2 \ell &= \El\{ (z \otimes \nabla_\mu^2 \ell) - (C^{-1} \otimes  \nabla_\mu \ell) \},
\\
\nabla_\lambda^2 \ell &= (1-b^2) \diag \{ 2\kappa^4 - \kappa^2 + (2\alpha + \alpha\kappa^2) \odot z \odot \widetilde{w} - z^2 \}  - \diag(\widetilde{w}^2),
\\
\nabla_{\vech(C)}^2 \ell &= \El[  (zz^T \otimes \nabla_\mu^2 \ell ) + \{(C^{-1} \otimes C^{-T}) - (C^{-1} \otimes (\nabla_\mu \ell) z^T) - (z ( \nabla_\mu \ell)^T \otimes C^{-T} ) \}\K] \El^T,
\\
\nabla_{\vech(C), \lambda}^2 \ell &= \El(z \otimes \nabla_{\mu, \lambda }^2 \ell) .
\end{aligned}
\]
Taking expectations, $\E[\nabla_{\mu, \lambda }^2 \ell] = 0$, $\E[\nabla_{\mu, \vech(C)}^2 \ell] = 0$,
\[
\begin{aligned}
\E[\nabla_\lambda^2 \ell] &=- (1-b^2) \diag( 2\kappa^2 - \kappa^4), \\
\E[\nabla_{\vech(C)}^2 \ell] &= - \El \{(C^{-1} \otimes C^{-T}) \K + (I_d \otimes C^{-T} D_{\kappa}^{-2} C^{-1} )\} \El^T, \\
\E[\nabla_{\vech(C), \lambda}^2 \ell] &= (1-b^2) \El(I \otimes C^{-T})  \Ed^T \diag(\alpha \kappa).
\end{aligned}
\]
Note that
\[
\begin{aligned}
\mathcal{I}_{33} &=  \El \{\K (C^{-T} \otimes C^{-1}) + (I_d \otimes C^{-T} D_{\kappa}^{-2} C^{-1} )\} \El^T \\
&= \El \{\K (C^{-T} \otimes I_d)(I_d \otimes C^{-1}) + (I_d \otimes C^{-T}) (I_d \otimes D_{\kappa}^{-2}) (I_d \otimes C^{-1} )\} \El^T \\
&= \El  (I_d \otimes C^{-T} ) (\K + I_d \otimes D_{\kappa}^{-2}) (I_d \otimes C^{-1} )\El^T.
\end{aligned}
\]
Thus we obtain the Fisher information matrix of $q(\theta, w)$, which is given by 
\[
\mathcal{I}_{\theta, w}(\eta) = \begin{bmatrix}
(C D_\kappa^2 C^T)^{-1} & 0 & 0 \\
0 & \mathcal{I}_{22} &  \mathcal{I}_{32}^T \\
0 & \mathcal{I}_{32} & \mathcal{I}_{33}
\end{bmatrix}, \quad 
\begin{aligned}
\mathcal{I}_{22} &= (1-b^2) \diag( 2\kappa^2 - \kappa^4), \\
\mathcal{I}_{32} &= - (1-b^2) \El(I_d \otimes C^{-T}) \Ed^T \diag(\alpha \kappa), \\
\mathcal{I}_{33} &=\El  (I_d \otimes C^{-T} ) (\K + I_d \otimes D_{\kappa}^{-2}) (I_d \otimes C^{-1} )\El^T.
\end{aligned}
\]
The natural gradient is given by $\mathcal{I}_{\theta, w}(\eta)^{-1}\nabla_\eta \mL$, where $\mathcal{I}_{\theta, w}(\eta)^{-1}$ is
\[
\begin{aligned}
\begin{bmatrix}
C D_{\kappa}^2 C^T & 0 & 0 \\
0 & \mathcal{I}_{22}^{-1}  + \mathcal{I}_{22}^{-1}  \mathcal{I}_{23}(\mathcal{I}_{33} - \mathcal{I}_{32} \mathcal{I}_{22}^{-1} \mathcal{I}_{23})^{-1} \mathcal{I}_{32} \mathcal{I}_{22}^{-1} &  - \mathcal{I}_{22}^{-1}  \mathcal{I}_{23}(\mathcal{I}_{33} - \mathcal{I}_{32} \mathcal{I}_{22}^{-1} \mathcal{I}_{23})^{-1}   \\
0 &  - (\mathcal{I}_{33} - \mathcal{I}_{32} \mathcal{I}_{22}^{-1} \mathcal{I}_{23})^{-1} \mathcal{I}_{32} \mathcal{I}_{22}^{-1} &  (\mathcal{I}_{33} - \mathcal{I}_{32} \mathcal{I}_{22}^{-1} \mathcal{I}_{23})^{-1} & \\
\end{bmatrix}.
\end{aligned}
\]
We have 
\[
\mathcal{I}_{32} \mathcal{I}_{22}^{-1} \mathcal{I}_{23} 
= \El(I_d \otimes C^{-T}) \Ed^T \diag ( \tfrac{1-\kappa^2}{2-\kappa^2} ) \Ed (I_d \otimes C^{-1}) \El^T
\]
and $\mathcal{I}_{33} - \mathcal{I}_{32} \mathcal{I}_{22}^{-1} \mathcal{I}_{23} =  \El  (I_d \otimes C^{-T} )  N_\lambda (I_d \otimes C^{-1}) \El^T$, where $N_\lambda = \K + I \otimes D_\kappa^{-2} - \Ed^T \diag ( \tfrac{1-\kappa^2}{2-\kappa^2} ) \Ed$. Note that 
\[
\begin{aligned}
\El N_\lambda \El^T 
&= \El \K \El^T + \El (I \otimes D_\kappa^{-2})\El^T - \El \Ed^T \diag ( \tfrac{1-\kappa^2}{2-\kappa^2} ) \Ed \El^T \\
&= \diag \left[ \vech \left\{ I_d + \kappa^{-2} \bfone^T - \diag(\tfrac{1-\kappa^2}{2-\kappa^2}) 
\right\}  \right]\\
&= \diag \left[ \vech \left\{ \kappa^{-2} \bfone^T + \diag(\tfrac{1}{2-\kappa^2}) \right\}  \right],
\end{aligned}
\]
and $ (\El N_\lambda \El^T)^{-1} =\diag [ \vech \{ K - \diag(\tfrac{\kappa^4}{2}) \} ]$. We claim that
\[
(\mathcal{I}_{33} - \mathcal{I}_{32} \mathcal{I}_{22}^{-1} \mathcal{I}_{23})^{-1} = \El(I_d \otimes C) \El^T (\El N_\lambda \El^T)^{-1} \El(I_d \otimes C^T) \El^T,
\]
since
\[
\begin{aligned}
&\{ \El  (I_d \otimes C^{-T} ) N_\lambda  (I_d \otimes C^{-1} )\El^T \} \{\El(I_d \otimes C) \El^T (\El N_\lambda \El^T)^{-1} \El(I_d \otimes C^T) \El^T\} \\
&= \El  (I_d \otimes C^{-T} ) \El^T \El N_\lambda  (I_d \otimes C^{-1} )(I_d \otimes C) \El^T (\El N_\lambda \El^T)^{-1} \El(I_d \otimes C^T) \El^T \\
&= \El  (I_d \otimes C^{-T} )  \El^T \El(I_d \otimes C^T) \El^T \\
&= \El  (I_d \otimes C^{-T} ) (I_d \otimes C^T) \El^T = I_{d(d+1)/2}.
\end{aligned}
\]
In the 2nd and 4th lines, we use the result , $\El^T \El (P^T \otimes Q) \El^T = (P^T \otimes Q) \El^T$,
(and its transpose) for any lower triangular $d \times d$ matrices $P$ and $Q$, from Lemma 4.2 (i) of \cite{Magnus1980}. Next
\[
\begin{aligned}
 (\mathcal{I}_{33} - \mathcal{I}_{32} \mathcal{I}_{22}^{-1} \mathcal{I}_{23})^{-1} \nabla_{\vech(C)} \mL 
& = \El(I_d \otimes C) \El^T (\El N_\lambda \El^T)^{-1} \El(I_d \otimes C^T) \El^T \nabla_{\vech(C)} \mL\\
& = \El(I_d \otimes C) \El^T \diag [ \vech \{ K - \diag(\tfrac{\kappa^4}{2}) \} ] \vech(G) \\
& = \El(I_d \otimes C) \El^T \vech[ \{ K - \diag(\tfrac{\kappa^4}{2}) \} \odot G ] \\
& = \El(I_d \otimes C) \vec[ \{ K - \diag(\tfrac{\kappa^4}{2}) \} \odot G ] \\
& = \vech( C [G \odot  \{ K - \diag(\tfrac{\kappa^4}{2}) \}] ),
\end{aligned}
\]
and
\[
\begin{aligned}
& -(\mathcal{I}_{33} - \mathcal{I}_{32} \mathcal{I}_{22}^{-1} \mathcal{I}_{23})^{-1} \mathcal{I}_{32} \mathcal{I}_{22}^{-1} \nabla_\lambda \mL \\
& = \{ \El(I_d \otimes C) \El^T (\El N_\lambda \El^T)^{-1} \El(I_d \otimes C^T) \El^T \} \El(I_d \otimes C^{-T}) \Ed^T \diag( \lambda /(2 - \kappa^2)) \nabla_\lambda \mL \\
& = \El(I_d \otimes C) \El^T (\El N_\lambda \El^T)^{-1} \El (I_d \otimes C^T) (I_d \otimes C^{-T}) \vec\{\diag( \lambda/(2 - \kappa^2) \odot \nabla_\lambda \mL)\} \\
& = \El(I_d \otimes C) \El^T \vech \{ K - \diag(\tfrac{\kappa^4}{2}) \}\odot \vech\{\diag( \lambda/(2 - \kappa^2) \odot \nabla_\lambda \mL)\} \\
& = \El(I_d \otimes C) \El^T \vech\{\diag( \alpha \kappa/2 \odot \nabla_\lambda \mL)\} \\
& = \vech\{C \diag(\tfrac{\alpha\kappa}{2} \odot \nabla_\lambda \mL) \}.
\end{aligned}
\]
Thus the natural gradient is
\[
\begin{aligned}
\widetilde{\nabla}_\mu \mL &= C D_\kappa^2 C^T \nabla_ \mu \mL \\
\widetilde{\nabla}_{\vech(C)} \mL &=  \vech( C A_1) \\
\widetilde{\nabla}_\lambda \mL &= \mathcal{I}_{22}^{-1} \nabla_\lambda \mL -  \mathcal{I}_{22}^{-1} \mathcal{I}_{23} \widetilde{\nabla}_{\vech(C)} \mL \\
&= \tfrac{1}{(1-b^2) (2\kappa^2 - \kappa^4)} \odot \nabla_\lambda \mL +  \diag \left( \tfrac{\lambda}{ 2 - \kappa^2 } \right) \Ed (I_d \otimes C^{-1} ) \El^T \vech( C A_1)  \\
&= \tfrac{1}{(1-b^2) (2\kappa^2 - \kappa^4)} \odot \nabla_\lambda \mL + \diag \left( \tfrac{\lambda}{ 2 - \kappa^2 } \right) \Ed \vec( A_1) \\
&= \tfrac{1}{(1-b^2) (2\kappa^2 - \kappa^4)} \odot \nabla_\lambda \mL + \tfrac{\lambda}{2 - \kappa^2} \odot \diag( A_1).
\end{aligned}
\]

\subsection{LU decomposition (Proof of Theorem 2)}
For the first order derivatives, $\nabla_\mu \ell$ and $\nabla_\lambda \ell$ remain the same as before while
\[
\begin{aligned}
\nabla_{\vech(L)} \ell & = \vech \{(\nabla_\mu \ell)z^T U^T - L^{-T} \}, \quad
\nabla_{\vech_u(U)} \ell  = \vech_u \{ L^T(\nabla_\mu \ell)z^T \}.
\end{aligned}
\]
For the second order derivatives, $\nabla_\mu^2 \ell$, $\nabla_{\mu, \lambda }^2 \ell$ and $\nabla_\lambda^2 \ell$ remain unchanged, while
\[
\begin{aligned}
\nabla_{\vech(L), \mu}^2 \ell &= \El \{ (Uz \otimes \nabla_\mu^2 \ell) - (L^{-1} \otimes  \nabla_\mu \ell)\} , \\
\nabla_{\vech_u(U), \mu}^2 \ell &= \Eu \{(z \otimes L^T  \nabla_\mu^2 \ell) - (C^{-1} \otimes L^T \nabla_\mu \ell) \}, \\
\nabla_{\vech(L), \lambda}^2 \ell &= \El (Uz \otimes \nabla_{\mu, \lambda }^2 \ell ), \\
\nabla_{\vech_u(U), \lambda}^2 \ell &= \Eu [ z \otimes L^T \nabla_{\mu, \lambda }^2 \ell ], \\
\nabla_{\vech(L), \vech_u(U)}^2 \ell &= \El [  Uzz^T \otimes (\nabla_\mu^2 \ell) L - \{ Uz ( \nabla_\mu \ell)^TL  \otimes C^{-T} \} \K ] \Eu^T, \\
\nabla_{\vech(L)}^2 \ell &= \El [ \{ (L^{-1}- Uz  ( \nabla_\mu \ell)^T) \otimes L^{-T} - L^{-1} \otimes (\nabla_\mu \ell) z^T U^T \} \K + Uzz^T U^T \otimes \nabla_\mu^2 \ell ] \El^T, \\
\nabla_{\vech_u(U)}^2 \ell  &= - \Eu \{ (z  ( \nabla_\mu \ell)^T L \otimes U^{-T} + U^{-1} \otimes L^T  ( \nabla_\mu \ell) z^T) \K + (zz^T \otimes U^{-T} D_\kappa^{-2} U^{-1}) \} \Eu^T.
\end{aligned}
\]
Taking expectations, $\E[\nabla_{\vech(L), \mu}^2 \ell] = 0$, $\E[\nabla_{\vech_u(U), \mu}^2 \ell] =0$,
\[
\begin{aligned}
\E[\nabla_{\vech_u(U), \lambda}^2 \ell] &= (1-b^2) \Eu(I_d \otimes U^{-T}) \Ed^T \diag(\alpha \kappa) = 0, \\
\E[\nabla_{\vech(L), \lambda}^2 \ell] &= (1-b^2) \El(U \otimes C^{-T}) \Ed^T \diag(\alpha \kappa) . \\
\E[\nabla_{\vech(L), \vech_u(U)}^2 \ell] 
&= \El \{U \otimes (\nabla_\mu^2 \ell) L -  (I_d  \otimes C^{-T}) \K\}\Eu^T \\
&= - \El \{U \otimes C^{-T} D_\kappa^{-2} U^{-1}+  (UU^{-1}  \otimes C^{-T}) \K\} \Eu^T \\
&= - \El (U \otimes C^{-T}) (\K + I_d\otimes D_\kappa^{-2}) (I_d \otimes U^{-1})\Eu^T , \\
\end{aligned}
\]

\[
\begin{aligned}
\E[\nabla_{\vech(L)}^2 \ell] 
&=  - \El \{ (L^{-1} \otimes L^{-T})  \K - (UU^T \otimes \nabla_\mu^2 \ell) \} \El^T \\
&= - \El \{ (UC^{-1} \otimes C^{-T}U^T)  \K + (UU^T \otimes C^{-T} D_\kappa^{-2} C^{-1}) \} \El^T \\
&= - \El  (U \otimes C^{-T}) (\K+ I_d \otimes D_\kappa^{-2} ) (U^T \otimes C^{-1} ) \El^T, \\
\E[\nabla_{\vech_u(U)}^2 \ell]  &= - \Eu \{ 2(U^{-1} \otimes U^{-T})\K + (I_d \otimes U^{-T} D_\kappa^{-2} U^{-1}) \} \Eu^T \\
&= - \Eu  (I_d \otimes U^{-T}) (I_d \otimes D_\kappa^{-2}) (I_d \otimes U^{-1})\Eu^T.
\end{aligned}
\]
Note that $\Eu (U^{-1} \otimes U^{-T})\K \Eu^T = 0$ \citep{Tan2024b}. Thus we obtain the Fisher information matrix of $q(\theta, w)$, which is given by 
\[
\mathcal{I}_{\theta, w}(\eta) = \begin{bmatrix}
(C D_\kappa^2 C^T)^{-1} & 0 & 0 \\
0 & \mathcal{I}_{22} & \mathcal{I}_{32}^T \\
0 & \mathcal{I}_{32} & \mathcal{I}_{33}
\end{bmatrix}, \quad 
\begin{aligned}
\mathcal{I}_{22} &= (1-b^2) \diag( 2\kappa^2 - \kappa^4), \\
\mathcal{I}_{32} &= - (1-b^2) \begin{bmatrix}  \El(U \otimes C^{-T})  \Ed^T \diag(\alpha \kappa)  \\ 0  \end{bmatrix} 
\end{aligned} 
\]
and $\mathcal{I}_{33}$ is given by 
\[
 \begin{bmatrix}
\El  (U \otimes C^{-T}) (\K+ I_d \otimes D_\kappa^{-2} ) (U^T \otimes C^{-1} ) \El^T & \El (U \otimes C^{-T}) (\K + I_d\otimes D_\kappa^{-2}) (I_d \otimes U^{-1})\Eu^T \\
\Eu(I_d \otimes U^{-T})(\K + I_d\otimes D_\kappa^{-2})  (U^T \otimes C^{-1}) \El ^T & \Eu  (I_d \otimes U^{-T}) (I_d \otimes D_\kappa^{-2}) (I_d \otimes U^{-1})\Eu^T
\end{bmatrix}.
\]

The natural gradient is given by 
\begin{equation} \label{LU nat grad supp}
\begin{aligned}
\widetilde{\nabla}_\mu \mL &= C D_\kappa^2 C^T \nabla_\mu \mL, \\
\begin{bmatrix}
\widetilde{\nabla}_{\vech(L)} \mL \\ \widetilde{\nabla}_{\vech_u(U)} \mL 
\end{bmatrix}
& = (\mathcal{I}_{33} - \mathcal{I}_{32} \mathcal{I}_{22}^{-1} \mathcal{I}_{23})^{-1} \left\{
\begin{bmatrix} \nabla_{\vech(L)} \mL \\ \nabla_{\vech_u(U)} \mL \end{bmatrix} - \mathcal{I}_{32} \mathcal{I}_{22}^{-1} \nabla_\lambda \mL  \right\}, \\
\widetilde{\nabla}_\lambda \mL  &= \mathcal{I}_{22}^{-1} \nabla_\lambda \mL - \mathcal{I}_{22}^{-1} \mathcal{I}_{23} \begin{bmatrix}
\widetilde{\nabla}_{\vech(L)} \mL \\ \widetilde{\nabla}_{\vech_u(U)} \mL 
\end{bmatrix}.
\end{aligned}
\end{equation}

First we find $(\mathcal{I}_{33} - \mathcal{I}_{32} \mathcal{I}_{22}^{-1} \mathcal{I}_{23})^{-1} $, where
\[
\begin{gathered}
\mathcal{I}_{32} \mathcal{I}_{22}^{-1} \mathcal{I}_{23}
= \begin{bmatrix}  \El(U \otimes C^{-T})  \\ 0 \end{bmatrix} \Ed^T \diag(\tfrac{1-\kappa^2}{2-\kappa^2}) \Ed \begin{bmatrix}  \El(U \otimes C^{-T})  \\ 0 \end{bmatrix}^T, \\
\mathcal{I}_{33} - \mathcal{I}_{32} \mathcal{I}_{22}^{-1} \mathcal{I}_{23} = 
 \begin{bmatrix}
S_{11} & S_{12} \\ S_{12}^T & S_{22} 
\end{bmatrix}, \quad 
\begin{aligned}
S_{11} &= \El  (U \otimes C^{-T}) N_\lambda (U^T \otimes C^{-1} ) \El^T \\
S_{12} &= \El (U \otimes C^{-T}) (\K + I_d\otimes D_\kappa^{-2}) (I_d \otimes U^{-1})\Eu^T \\
S_{22} &=  \Eu  (I_d \otimes U^{-T}) (I_d \otimes D_\kappa^{-2}) (I_d \otimes U^{-1})\Eu^T,
\end{aligned}
\end{gathered}
\]
and $N_\lambda = \K + I_d \otimes D_\kappa^{-2} - \Ed^T \diag(\tfrac{1-\kappa^2}{2-\kappa^2}) \Ed$ are as defined previously.
Note that $\Eu (I_d \otimes D_\kappa^{-2}) \Eu ^T = \diag\{ \vech_u(\kappa^{-2} \bfone^T)\}$. Thus 
\[
\{ \Eu (I_d \otimes D_\kappa^{-2}) \Eu ^T \}^{-1} =  \diag\{ \vech_u(K)\}.
\]
We have
\[
S_{22}^{-1} = \Eu  (I_d \otimes U) \Eu^T \diag\{ \vech_u(K)\} \Eu (I_d \otimes U^T)\Eu^T,
\]
since
\[
\begin{aligned}
&\{ \underbrace{\Eu  (I_d \otimes U^{-T})}_{\Eu  (I_d \otimes U^{-T})\Eu ^T\Eu  } (I_d \otimes D_\kappa^{-2}) (I_d \otimes U^{-1}) \underbrace{\Eu^T\}  [\Eu  (I_d \otimes U) \Eu^T}_{(I_d \otimes U) \Eu^T} \diag\{ \vech_u(K)\} \Eu (I_d \otimes U^T)\Eu^T]  \\
&= \Eu  (I_d \otimes U^{-T})  \Eu ^T \{\Eu (I_d \otimes D_\kappa^{-2}) \Eu^T\}  \diag\{ \vech_u(K)\} \Eu (I_d \otimes U^T)\Eu^T \\
&= \Eu  (I_d \otimes U^{-T})  \Eu ^T \Eu (I_d \otimes U^T)\Eu^T \\
&= \Eu  (I_d \otimes U^{-T}) (I_d \otimes U^T)\Eu^T \\
&=  \Eu \Eu^T = I_{d(d-1)/2}. 
\end{aligned}
\]
In the first and 4th lines, we have made use of the property, $\Eu^T \Eu (P \otimes Q^T) \Eu^T = (P \otimes Q^T) \Eu^T$ (and its transpose) \citep{Tan2024b}. Next, $S_{11} - S_{12} S_{22}^{-1} S_{21} $ is given by 
\[
\begin{aligned}
& \El  (U \otimes C^{-T}) \{N_\lambda  - (\K + I_d\otimes D_\kappa^{-2}) (I_d \otimes U^{-1})\Eu^T  \Eu  (I_d \otimes U) \Eu^T \\
& \times\diag\{ \vech_u(K)\} \Eu (I_d \otimes U^T)\Eu^T  \Eu (I_d \otimes U^{-T}) (\K + I_d\otimes D_\kappa^{-2}) \}(U^T \otimes C^{-1}) \El^T \\
& = \El  (U \otimes C^{-T}) B (U^T \otimes C^{-1}) \El^T,
\end{aligned}
\]
where $B = N_\lambda  - (\K + I_d\otimes D_\kappa^{-2})\Eu^T \diag\{ \vech_u(K)\} \Eu (\K + I_d\otimes D_\kappa^{-2}) $. It can be verified that $\diag\{ \vech_u(K)\} \Eu  (I_d\otimes D_\kappa^{-2}) = \Eu $. In addition, we have the properties 
\[
\El^T \El + \Eu^T \Eu = I_{d^2}, \quad 
\K  - \Eu^T \Eu \K - \K\Eu^T \Eu = \diag\{ \vec(I_d)\}.
\]
from \cite{Tan2024b}. Thus we can simplify
\[
\begin{aligned}
B &=  \K + I_d\otimes D_\kappa^{-2} - \Ed^T \diag(\tfrac{1-\kappa^2}{2-\kappa^2}) \Ed^T  - \K \Eu^T \diag\{ \vech_u(K) \Eu \K - \Eu^T \Eu \K - \K\Eu^T \Eu -\Eu^T \Eu (I_d\otimes D_\kappa^{-2})  \\
&= \diag[ \vec \{I_d - \diag ( \tfrac{1-\kappa^2}{2-\kappa^2} ) \} ]  + \El^T \El (I_d\otimes D_\kappa^{-2})  - \K \diag\{ \vec( K_u )\} \K \\
&= \diag[ \vec \{ \diag ( \tfrac{1}{2-\kappa^2} ) + (\kappa^{-2} \bfone^T)_\ell  -   (K_u)^T \} ] \\
&= \El^T \diag(a) \El,
\end{aligned}
\]
where $a =  \vech \{ \diag ( \tfrac{1}{2-\kappa^2} ) + (\kappa^{-2} \bfone^T)_\ell  -   (K_u)^T \}$. We claim that
\[
(S_{11} - S_{12} S_{22}^{-1} S_{21} )^{-1} = \El (I_d \otimes L)\El^T \El (U^{-T} \otimes U) \El^T \diag(1/a) \El(U^{-1} \otimes U^T) \El^T \El (I_d \otimes L^T) \El^T.
\]
since
\[
\begin{aligned}
&\{ \El  (U \otimes C^{-T})  \El^T \diag(a) \El (U^T \otimes C^{-1}) \El^T \} \{  \El (I_d \otimes L)\El^T \El (U^{-T} \otimes U) \El^T \diag(1/a) \El \\
& \quad \times (U^{-1} \otimes U^T) \El^T \El (I_d \otimes L^T) \El^T \} \\
&= \El  (U \otimes C^{-T})  \El^T \diag(a) \El(U^T \otimes U^{-1}) (I_d \otimes L^{-1} )(I_d \otimes L)\El^T \El (U^{-T} \otimes U)  \\
& \quad \times \El^T \diag(1/a) \El (U^{-1} \otimes U^T) \El^T \El (I_d \otimes L^T) \El^T \\
&= \El  (U \otimes C^{-T})  \El^T \diag(a) \El (U^T \otimes U^{-1}) (U^{-T} \otimes U) \El^T \diag(a) \El  (U^{-1} \otimes U^T) \El^T \El (I_d \otimes L^T) \El^T \\
&= \El  (U \otimes C^{-T})  \El^T \diag(a) \El \El^T \diag(a) \El  (U^{-1} \otimes U^T) \El^T \El (I_d \otimes L^T) \El^T \\
&= \El  (I \otimes L^{-T}) (U \otimes U^{-T}) \El^T \El  (U^{-1} \otimes U^T) \El^T \El (I_d \otimes L^T) \El^T \\
&= \El  (I \otimes L^{-T}) (U \otimes U^{-T}) (U^{-1} \otimes U^T) \El^T \El (I_d \otimes L^T) \El^T \\
&= \El  (I \otimes L^{-T}) \El^T \El (I_d \otimes L^T) \El^T \\
&= \El \El^T = I_{d(d+1)/2}.
\end{aligned}
\]
From \eqref{LU nat grad supp},
\[
\begin{bmatrix}
\widetilde{\nabla}_{\vech(L)} \mL \\ \widetilde{\nabla}_{\vech_u(U)} \mL 
\end{bmatrix}
= (\mathcal{I}_{33} - \mathcal{I}_{32} \mathcal{I}_{22}^{-1} \mathcal{I}_{23})^{-1} 
\begin{bmatrix} \nabla_{\vech(L)} \mL  + \vech \{C^{-T} \diag(\tfrac{\lambda}{2-\kappa^2} \odot \nabla_\lambda \mL) U^T\}  \\  \nabla_{\vech_u(U)} \mL \end{bmatrix}.
\]

As 
\[
\begin{aligned}
S_{12} S_{22}^{-1}  \nabla_{\vech_u(U)} \mL&=  \El  (U \otimes C^{-T}) \{ \K \Eu^T \diag\{ \vech_u (K) \} + \Eu^T \}  \Eu(I_d \otimes U^T) \Eu^T  \nabla_{\vech_u(U)} \mL \\
&=  \El  (U \otimes C^{-T}) \{ \K \Eu^T \diag\{ \vech_u (K) \} + \Eu^T \} \vech_u (F) \\
&=  \El  (U \otimes C^{-T}) \{ \K \vec( K \odot F) + \vec (F) \}  \\
&=  \El  (U \otimes C^{-T}) \vec \{( K \odot  F)^T+  F \}  \\
&=  \vech[C^{-T} \{( K \odot F)^T+  F \}  U^T] ,
\end{aligned}
\]
\[
\begin{aligned}
\widetilde{\nabla}_{\vech(L)} \mL &= (S_{11} - S_{12} S_{22}^{-1} S_{21})^{-1} [ \nabla_{\vech(L)} \mL + \vech\{ C^{-T} \diag(\tfrac{\lambda}{2-\kappa^2} \odot \nabla_\lambda \mL)  U^T\} - S_{12} S_{22}^{-1}  \nabla_{\vech_u(U)} \mL ] \\
&= \El (I_d \otimes L)\El^T \El (U^{-T} \otimes U) \El^T \diag(1/a) \El(U^{-1} \otimes U^T) \El^T \El (I_d \otimes L^T) \El^T \\
&\quad \times  ( \nabla_{\vech(L)} \mL + \vech[C^{-T} \{\diag (\tfrac{\lambda}{2-\kappa^2} \odot  \nabla_\lambda \mL)  - ( K \odot  F)^T -  F \} U^T ] )\\
&= \El (I_d \otimes L)\El^T \El (U^{-T} \otimes U) \El^T \diag(1/a) \El(U^{-1} \otimes U^T) \El^T \\
&\quad \times \vech[G + U^{-T} \{\diag (\tfrac{\lambda}{2-\kappa^2} \odot  \nabla_\lambda \mL)  - ( K \odot  F)^T -  F \} U^T ] \\
\end{aligned}
\]

\[
\begin{aligned}
&= \El (I_d \otimes L)\El^T \El (U^{-T} \otimes U) \El^T \diag(1/a)  \\
&\quad \times  \vech \{ U^T (G - U^{-T} F U^T)_\ell U^{-T} + \diag (\tfrac{\lambda}{2-\kappa^2} \odot  \nabla_\lambda \mL)  - ( K \odot  F)^T \}  \\
&= \El (I_d \otimes L)\El^T \El (U^{-T} \otimes U)  \\
&\quad \times  \vec [ \{ U^T (G - U^{-T} F U^T)_\ell U^{-T} + \diag (\tfrac{\lambda}{2-\kappa^2} \odot  \nabla_\lambda \mL)  - ( K \odot F)^T \} \odot A_2 ] \\
&=  \El (I_d \otimes L)\El^T \vech( U H U^{-1}) =  \vech\{L \mG_\ell\}.
\end{aligned}
\]
Finally, $\widetilde{\nabla}_{\vech_u(U)} \mL = S_{22}^{-1} \vech(F) - S_{22}^{-1} S_{21} \vech (L \mG_\ell)$ where
\[
\begin{aligned}
S_{22}^{-1} \nabla_{\vech_u(U)} \mL&= \Eu  (I_d \otimes U) \Eu^T \diag\{ \vech_u(K)\} \Eu (I_d \otimes U^T)\Eu^T \nabla_{\vech_u(U)} \mL \\
&= \Eu  (I_d \otimes U) \Eu^T \diag\{ \vech_u(K)\} \vech_u( F) \\
&= \Eu  (I_d \otimes U) \Eu^T \vech_u (K \odot  F) = \vech_u \{ U (K \odot F) \},
\end{aligned}
\]
and 
\[
\begin{aligned}
S_{22}^{-1} S_{21} \vech (L \mG_\ell) &= \Eu (I_d \otimes U) \Eu^T  \{ \Eu + \diag(\vech_u(K)) \E_u K \} (U^T \otimes C^{-1} ) \El^T  \vech\{L \mG_\ell\} \\
&= \Eu (I_d \otimes U) \Eu^T  \{ \Eu + \diag(\vech_u(K)) \E_u K \} \vec(U^{-1} \mG_\ell U) \\
&= \Eu (I_d \otimes U) \Eu^T [\vech^u(U^{-1} \mG_\ell U)  + \diag\{\vech_u(K)\} \vech^u\{ (U^{-1} \mG_\ell U)^T\}  ] \\
&= \Eu (I_d \otimes U) \Eu^T \vech^u \{ U^{-1} \mG_\ell U + K_u \odot (U^{-1} \mG_\ell U)^T \} \\
&= \vech_u [U \{ (U^{-1} \mG_\ell U)_u + K_u \odot (U^{-1} \mG_\ell U)^T \} ].  \\
\end{aligned}
\]
Hence
\begin{equation} \label{U update}
\widetilde{\nabla}_{\vech_u(U)} \mL = \vech_u [ U \{ K_u \odot ( F - (U^{-1} \mG_\ell U)^T ) -  (U^{-1} \mG_\ell U)_u \} ].
\end{equation}
From \eqref{LU nat grad supp},
\begin{equation} \label{lambda update}
\begin{aligned}
\widetilde{\nabla}_\lambda \mL 
&=  \tfrac{1}{(1-b^2)(2\kappa^2 - \kappa^4)} \odot  \nabla_\lambda \mL + \tfrac{1}{(2\kappa^2 - \kappa^4)} \odot \diag(\alpha \kappa) \Ed (U^T \otimes C^{-1}) \El^T   \vech(L \mG_\ell)  \\
&=  \tfrac{1}{(1-b^2)(2\kappa^2 - \kappa^4)} \odot  \nabla_\lambda \mL + \tfrac{\lambda}{2-\kappa^2} \odot \Ed \vec(U^{-1} \mG_\ell U)  \\
&= \tfrac{1}{(1-b^2)(2\kappa^2 - \kappa^4)} \odot  \nabla_\lambda \mL + \tfrac{\lambda}{2-\kappa^2} \odot \diag(U^{-1} \mG_\ell U).
\end{aligned}
\end{equation}
To simplify the above results, we apply results from \cite{Tan2024b}, which states that $\vech(U^{-1}\mG_\ell U) = \vech(H)$ and $\vech^u\{ U (U^{-1} \mG_\ell U)_u \} = - \vech^u(\mG_u U)$.  Thus we can replace $\diag(U^{-1} \mG_\ell U)$ in \eqref{lambda update} by $\diag(H)$, and $(U^{-1} \mG_\ell U)^T$ in \eqref{U update} by $H^T$ because only the upper triangular elements of $(U^{-1} \mG_\ell U)^T$ (equivalent to lower triangular elements of $U^{-1} \mG_\ell U$) will be extracted.

\section{Flow methods}
Flow methods provide rich and flexible variational approximations by transforming a simple density using a series of bijective and differentiable functions. Specifically, the density $q_k(\theta_k)$ for the $k$th flow is obtained by sequentially transforming an initial random variable $\theta_0$ with density $q_0$ such that $\theta_k = f_k \circ \cdots \circ f_1(\theta_0)$, and
\[
\begin{aligned}
\log q_k(\theta_k) &= \log q_0(\theta_0) - \sum_{i=1}^k \log \left|  \frac{\partial f_i}{\partial \theta_{i-1}} \right|.
\end{aligned}
\]

In the {\em planar flow} \citep{Rezende2015}, each transformation is of the form,
\[
f_i(\theta) = \theta + u_i h (w_i^\top \theta + b_i),
\]
where $w_i, u_i \in \mathbb{R}^d$ and $b_i \in \mathbb{R}$ are free parameters of the flow. This approach is so named as it uses the nonlinear differentiable function $h:\mathbb{R} \rightarrow \mathbb{R}$ to expand or contract a density in a direction that is orthogonal to the hyperplane defined by $w_i^\top \theta + b_i = 0$. The determinant of the Jacobian can be computed efficiently as 
\[
\left| \frac{\partial f}{\partial \theta} \right| = \left| I_d + h'(w_i^\top \theta + b_i) u_i w_i^\top \right| = \left| 1 + h'(w_i^\top \theta + b_i) w_i^\top u_i \right|
\]
using Sylvester's determinant theorem. However, invertibility of the flow depends on the choice of $h(\cdot)$. If $h(\cdot)$ is $\tanh(\cdot)$, then the inverse flow exists provided $w_i^\top u_i \geq 1$.

Real NVP flow \citep{Dinh2017} is a powerful class of flows that provides more expressive models for high-dimensional data, while maintaining exact likelihood computation, inference, sampling and inversion. Each transformation updates only a subset of the elements in $\theta$, which is defined via a binary mask $b_i$ of length $d$, such that
\[
f_i(\theta) = b_i\odot\theta+(1-b_i)\odot [\theta\odot \exp\{s_i (b\odot\theta)\}+ t_i(b\odot\theta)],
\]
where $s_i, t_i: \mathbb{R}^d \rightarrow \mathbb{R}^d$ are scale and translation functions respectively, and $\odot$ denotes elementwise multiplication. In this formulation, the Jacobian determinant can be computed efficiently because the determinant of a lower triangular matrix is simply the product of its diagonal entries. Moreover, $s_i$ and $t_i$ are allowed to be highly complex transformations that are hard to invert. By choosing them to be deep neural networks for instance, real NVP can provide expressive variational approximations efficiently for high-dimensional data. 

In {\tt normflows} \citep{Stimper2023}, normalizing flows are trained by minimizing a loss function (taken as the reverse KL divergence) using stochastic gradient descent and  the stepsize is computed using Adam. This is equivalent to maximizing the evidence lower bound in variational inference. From \cite{Papamakarios2021}, if the posterior density $p(\theta_k|y)$ is approximated by $q_k(\theta_k)$, then the loss function (reverse KL divergence) is 
\begin{multline*}
\int q_k(\theta_k)\log \frac{q_k(\theta_k)}{p(\theta_k|y)} d \theta_k = \log p(y) + \E_{q_k} [ \log q_k(\theta_k) - \log p(\theta_k, y) ] \\
= \log p(y) + \E_{q_o} \left[\log q_0(\theta_0) - \sum_{i=1}^k \log \left|\frac{\partial f_i}{\partial \theta_{i-1}} \right| - \log p(f_k \circ \cdots \circ f_1(\theta_0), y) \right].
\end{multline*}
The first term $\log p(y)$ is just a constant with respect to the flow parameters, and the gradient of the loss with respect to flow parameters can be estimated by automatic differentiation and Monte Carlo by generating samples from $q_0(\theta_0)$.

\section{Applications}
\subsection{Logistic regression}
The log joint density and its gradient are  
\[
\begin{aligned}
\log p(y, \theta) &= -\tfrac{d}{2} \log (2\pi \sigma_o^2) - \theta^T\theta/(2\sigma_0^2) + y^T X \theta - \sum_{i=1}^n n_i \log\{ 1 + \exp(x_i^T \theta)\} + \sum_{i=1}^n {n_i \choose y_i}\\
\nabla_\theta \log p(y, \theta) &= X^T y - \sum_{i=1}^n n_i p_i x_i -\theta/\sigma_0^2
\end{aligned}
\]
where $y=(y_1, \dots, y_n)^T$, $X=(x_1, \dots, x_n)^T$, $p = (p_1, \dots, p_n)^T$ and $p_i = \exp(x_i^T \theta)/\{1 + \exp(x_i^T \theta)\}$ for $i=1, \dots, n$. For binary responses, $n_i = 1$.

\subsection{Zero-inflated negative binomial model}
The joint log-likelihood is 
\begin{multline*}
\log p(y, \theta) = \sum_{i=1}^n [  \mathbbm{1}_{\{y_i=0\}}  \log\{ \e^{z_i^T \gamma} +  (\alpha \e^{x_i^T \beta} +1)^{-1/\alpha} \} - \log ( 1 + \e^{z_i^T \gamma})  + \mathbbm{1}_{\{y_i>0\}} \{ y_i x_i^T \beta   \\
- \tfrac{1}{\alpha} \log\alpha + \log \Gamma (y_i +  \tfrac{1}{\alpha}) - \log \Gamma( \tfrac{1}{\alpha}) - \log (y_i!) - (y_i +  \tfrac{1}{\alpha} ) \log (\e^{x_i^T \beta} +  \tfrac{1}{\alpha} )\}  ] - \tfrac{d}{2} \log (2\pi\sigma_0^2)  - \tfrac{\theta^T \theta}{2\sigma_0^2}.
\end{multline*}
The gradients are 
\[
\begin{aligned}
\nabla_\beta \log p(y, \theta) &= \sum_{i=1}^n \bigg[ \mathbbm{1}_{\{y_i>0\}} \bigg\{ y_i  -\frac{ (\alpha y_i +  1 )  \e^{x_i^T \beta}}{ \alpha \e^{x_i^T \beta} + 1 } \bigg\}  - \mathbbm{1}_{\{y_i=0\}} \frac{ \e^{x_i^T \beta} (\alpha \e^{x_i^T \beta} +1)^{-1/\alpha - 1} }{ \e^{z_i^T \gamma} +  (\alpha \e^{x_i^T \beta} +1)^{-1/\alpha}} \bigg] x_i - \tfrac{\beta}{\sigma_0^2}, \\
\nabla_\gamma\log p(y, \theta) &= \sum_{i=1}^n \bigg[   \frac{ \mathbbm{1}_{\{y_i=0\}} \e^{z_i^T \gamma} }{ \e^{z_i^T \gamma} +  (\alpha \e^{x_i^T \beta} +1)^{-1/\alpha} } - \frac{\e^{z_i^T \gamma}}{ 1 + \e^{z_i^T \gamma}} \bigg] z_i - \tfrac{\gamma}{\sigma_0^2},
\end{aligned}
\]
\begin{multline*}
\nabla_{\log \alpha} \log p(y, \theta) = (\nabla_{\log \alpha} \alpha) \nabla_\alpha \log p(y, \theta) \\
= \sum_{i=1}^n \bigg[  \frac{\mathbbm{1}_{\{y_i=0\}} \Big\{  \tfrac{1}{\alpha} \log  (\alpha \e^{x_i^T \beta} +1) - \frac{\e^{x_i^T \beta}}{  (\alpha \e^{x_i^T \beta} +1) } \Big\} }{ \e^{z_i^T \gamma}  (\alpha \e^{x_i^T \beta} +1)^{1/\alpha}  + 1}  \\
+ \tfrac{1}{\alpha} \mathbbm{1}_{\{y_i>0\}} \bigg\{  \psi( \tfrac{1}{\alpha}) -  \psi (y_i +  \tfrac{1}{\alpha}) - 1 + \log (\alpha \e^{x_i^T \beta}  + 1 ) + \frac{\alpha y_i + 1}{ \alpha \e^{x_i^T \beta} +  1 }  \bigg\}  \bigg] - \tfrac{\log \alpha}{\sigma_0^2}.
\end{multline*}

\subsection{Survival model}
Let the observed data $y = (t^T, d^T)$, where $t=(t_1, \dots, t_n)^T$ and $d = (d_1, \dots, d_n)^T$. The joint log-likelihood is 
\[
\begin{aligned}
\log p(y, \theta) &= \sum_{i=1}^n \{ d_i \log h(t_i) + \log S(t_i) \}- \tfrac{d}{2} \log (2\pi\sigma_0^2) -\tfrac{\theta^T \theta}{2\sigma_0^2} \\
&= \sum_{i=1}^n [ d_i \{ \log \rho_i  + (\rho_i-1) \log t_i + x_i^T \beta \} - \exp(x_i^T \beta) t_i^{\rho_i}] - \tfrac{d}{2} \log (2\pi\sigma_0^2) -\tfrac{\theta^T \theta}{2\sigma_0^2}  \\
&= \sum_{i=1}^n [ d_i \{ z_i^T \gamma  + \{ \exp(z_i^T \gamma) - 1\} \log t_i + x_i^T \beta \} - \exp(x_i^T \beta) t_i^{\exp(z_i^T \gamma)}]- \tfrac{d}{2} \log (2\pi\sigma_0^2) -\tfrac{\theta^T \theta}{2\sigma_0^2}.
\end{aligned}
\]
The gradients are 
\[
\begin{aligned}
\nabla_\beta \log p(y, \theta) &= \sum_{i=1}^n \{d_i - \exp(x_i^T \beta) t_i^{\exp(z_i^T \gamma)} \} x_i - \theta/\sigma_0^2. \\
\nabla_\gamma\log p(y, \theta) &= \sum_{i=1}^n [  d_i + \exp(z_i^T \gamma) \log t_i \{ d_i - \exp(x_i^T \beta) t_i^{\exp(z_i^T \gamma)} \}] z_i - \theta/\sigma_0^2.
\end{aligned}
\]

\subsection{Generalized linear mixed model}
The joint distribution can be written as 
\begin{equation*}
h(\theta) = p(y, \theta) = p(\beta) p(\zeta) \prod_{i=1}^n  \left\{ p(b_i|\zeta) \prod_{j=1}^{n_i} p(y_{ij}|\beta, b_i)  \right\}.
\end{equation*}
Here $\beta$ is the $p\times 1$ fixed effect and $b_i$ is the $r\times 1$ random effect of the $i$th subject. The precision matrix of random effects is decomposed as $G=WW^{\top}$ with $W$ being the lower triangular Cholesky factor. Define $W^*$ such that $W_{ii}^* = \log(W_{ii})$ for diagonal elements and $W_{ij}^* = W_{ij}$ for off-diagonal elements. Let $\zeta = \text{vech}(W^*)$ and assume $\zeta \sim \N(0, \sigma_\zeta^2 I_{r(r+1)/2})$ where $\sigma_\zeta=10$. We assume a normal prior $\beta \sim \N(0, \sigma_\beta^2 I)$, where $\sigma_\beta=10$. Thus $\theta_G = (\beta^\top, \zeta^\top)^\top$. Focusing on GLMMs with canonical links, the log joint density is
\begin{equation*}
\begin{aligned}
\log h(\theta) & = \sum_{i=1}^n \sum_{j=1}^{n_i} \log p(y_{ij}|\beta, b_i) + \sum_{i=1}^n \log p(b_i|\zeta) + \log p(\beta) + \log p(\zeta)  \\
& = \sum_{i,j} \{y_{ij} \eta_{ij} - A(\eta_{ij})\}  + n\log|W| -\frac{1}{2} \sum_{i=1}^n b_i^T WW^{\top}b_i -\frac{\beta^T\beta}{2\sigma_\beta^2} -\frac{\zeta^T\zeta}{2\sigma_\zeta^2} +C, 
\end{aligned}
\end{equation*}
where $\eta_{ij} = X_{ij}^T \beta + Z_{ij}^T b_i$, $C$ is a constant independent of $\theta$, and $A(\cdot)$ is the log-partition function. For Bernoulli GLMMs, $A(\eta_{ij}) = \log\{ 1 + \exp(\eta_{ij})\}$ and for Poisson GLMMs, $A(\eta_{ij}) = \exp(\eta_{ij})$. 
Let $D^W$ be the diagonal matrix where the diagonal 
is given by $\vech(J^W)$, and $J^W$ is an $r\times r$ lower triangular matrix with the $i$th diagonal entry being $W_{ii}$ and all off-diagonal entries being 1. The gradient of $\log h(\theta)$ is given by $[\nabla_{b_1} \log h(\theta), \dots, \nabla_{b_n} \log h(\theta), \nabla_{\beta} \log h(\theta), \nabla_{\zeta} \log h(\theta)] $, where
\begin{equation*} 
\begin{aligned}
\nabla_{b_i} \log h(\theta) &= \sum_{j=1}^{n_i} \{y_{ij} - A'(\eta_{ij})  \}Z_{ij} -G b_i, \text{ for } i=1,\dots, n,\\
\nabla_{\beta} \log h(\theta) &= \sum_{i=1}^n \sum_{j=1}^{n_i} \{y_{ij} - A'(\eta_{ij}) \}X_{ij} -\frac{\beta}{\sigma_{\beta}^2},\\
\nabla_{\zeta} \log h(\theta) &=  -D^W\sum_{i=1}^{n}\vech(b_ib_i^TW) +n\vech(I_r)   -\frac{\zeta}{\sigma_\zeta^2}.
\end{aligned}
\end{equation*}

\spacingset{1.3} 

{
\footnotesize
\bibliographystyle{chicago}
\bibliography{ref}

\begin{thebibliography}{}

\bibitem[\protect\citeauthoryear{Amari}{Amari}{2016}]{Amari2016}
Amari, S. (2016).
\newblock {\em Information Geometry and Its Applications}.
\newblock Springer.

\bibitem[\protect\citeauthoryear{Anceschi, Fasano, Durante, and
  Zanella}{Anceschi et~al.}{2023}]{Anceshi2023}
Anceschi, N., A.~Fasano, D.~Durante, and G.~Zanella (2023).
\newblock Bayesian conjugacy in probit, tobit, multinomial probit and
  extensions: {A} review and new results.
\newblock {\em Journal of the American Statistical Association\/}~{\em 118},
  1451--1469.

\bibitem[\protect\citeauthoryear{Arellano-Valle and Azzalini}{Arellano-Valle
  and Azzalini}{2006}]{Arellano2006}
Arellano-Valle, R.~B. and A.~Azzalini (2006).
\newblock On the unification of families of skew-normal distributions.
\newblock {\em Scandinavian Journal of Statistics\/}~{\em 33}, 561--574.

\bibitem[\protect\citeauthoryear{Arellano-Valle and Azzalini}{Arellano-Valle
  and Azzalini}{2008}]{Arellano2008}
Arellano-Valle, R.~B. and A.~Azzalini (2008).
\newblock The centred parametrization for the multivariate skew-normal
  distribution.
\newblock {\em Journal of Multivariate Analysis\/}~{\em 99}, 1362--1382.

\bibitem[\protect\citeauthoryear{Attias}{Attias}{1999}]{Attias1999}
Attias, H. (1999).
\newblock Inferring parameters and structure of latent variable models by
  variational {B}ayes.
\newblock In K.~Laskey and H.~Prade (Eds.), {\em Proceedings of the 15th
  Conference on Uncertainty in Artificial Intelligence}, San Francisco, CA,
  pp.\  21--30. Morgan Kaufmann.

\bibitem[\protect\citeauthoryear{Azzalini and Capitanio}{Azzalini and
  Capitanio}{1999}]{Azzalini1999}
Azzalini, A. and A.~Capitanio (1999).
\newblock Statistical applications of the multivariate skew normal
  distribution.
\newblock {\em Journal of the Royal Statistical Society. Series B (Statistical
  Methodology)\/}~{\em 61}, 579--602.

\bibitem[\protect\citeauthoryear{Blei, Kucukelbir, and McAuliffe}{Blei
  et~al.}{2017}]{Blei2017}
Blei, D.~M., A.~Kucukelbir, and J.~D. McAuliffe (2017).
\newblock Variational inference: A review for statisticians.
\newblock {\em Journal of the American Statistical Association\/}~{\em 112},
  859--877.

\bibitem[\protect\citeauthoryear{Burroni, Domke, and Sheldon}{Burroni
  et~al.}{2023}]{Burroni2023}
Burroni, J., J.~Domke, and D.~Sheldon (2023).
\newblock Sample average approximation for black-box {VI}.
\newblock In {\em 5th Symposium on Advances in Approximate Bayesian Inference}.

\bibitem[\protect\citeauthoryear{Campbell and Li}{Campbell and
  Li}{2019}]{Campbell2019}
Campbell, T. and X.~Li (2019).
\newblock Universal boosting variational inference.
\newblock In H.~Wallach, H.~Larochelle, A.~Beygelzimer, F.~d\textquotesingle
  Alch\'{e}-Buc, E.~Fox, and R.~Garnett (Eds.), {\em Advances in Neural
  Information Processing Systems}, Volume~32. Curran Associates, Inc.

\bibitem[\protect\citeauthoryear{Caterini, Cornish, Sejdinovic, and
  Doucet}{Caterini et~al.}{2021}]{Caterini2021}
Caterini, A., R.~Cornish, D.~Sejdinovic, and A.~Doucet (2021).
\newblock Variational inference with continuously-indexed normalizing flows.
\newblock In C.~de~Campos and M.~H. Maathuis (Eds.), {\em Proceedings of the
  37th Conference on Uncertainty in Artificial Intelligence}, Volume 161, pp.\
  44--53. PMLR.

\bibitem[\protect\citeauthoryear{Challis and Barber}{Challis and
  Barber}{2012}]{Challis2012}
Challis, E. and D.~Barber (2012).
\newblock Affine independent variational inference.
\newblock In F.~Pereira, C.~Burges, L.~Bottou, and K.~Weinberger (Eds.), {\em
  Advances in Neural Information Processing Systems}, Volume~25. Curran
  Associates, Inc.

\bibitem[\protect\citeauthoryear{Daudel, Douc, and Portier}{Daudel
  et~al.}{2021}]{Daudel2021}
Daudel, K., R.~Douc, and F.~Portier (2021).
\newblock {Infinite-dimensional gradient-based descent for alpha-divergence
  minimisation}.
\newblock {\em The Annals of Statistics\/}~{\em 49}, 2250 -- 2270.

\bibitem[\protect\citeauthoryear{Dinh, Sohl-Dickstein, and Bengio}{Dinh
  et~al.}{2017}]{Dinh2017}
Dinh, L., J.~Sohl-Dickstein, and S.~Bengio (2017).
\newblock Density estimation using real {NVP}.
\newblock In {\em International Conference on Learning Representations}.

\bibitem[\protect\citeauthoryear{Doob}{Doob}{1949}]{Doob1949}
Doob, J.~L. (1949).
\newblock Application of the theory of martingales.
\newblock {\em {\it Le Calcul des Probabilites et ses Applications}\/}~{\em
  13}, 23--27.

\bibitem[\protect\citeauthoryear{Durante, Pozza, and Szabo}{Durante
  et~al.}{2023}]{Durante2023}
Durante, D., F.~Pozza, and B.~Szabo (2023).
\newblock Skewed {B}ernstein-von {M}ises theorem and skew-modal approximations.
\newblock arXiv: 2301.03038.

\bibitem[\protect\citeauthoryear{Durante and Rigon}{Durante and
  Rigon}{2019}]{Durante2019}
Durante, D. and T.~Rigon (2019).
\newblock Conditionally conjugate mean-field variational {B}ayes for logistic
  models.
\newblock {\em Statistical Science\/}~{\em 34}, 472 -- 485.

\bibitem[\protect\citeauthoryear{Durkan, Bekasov, Murray, and
  Papamakarios}{Durkan et~al.}{2019}]{Durkan2019}
Durkan, C., A.~Bekasov, I.~Murray, and G.~Papamakarios (2019).
\newblock Neural spline flows.
\newblock In H.~Wallach, H.~Larochelle, A.~Beygelzimer, F.~d\textquotesingle
  Alch\'{e}-Buc, E.~Fox, and R.~Garnett (Eds.), {\em Advances in Neural
  Information Processing Systems}, Volume~32. Curran Associates, Inc.

\bibitem[\protect\citeauthoryear{Faes, Ormerod, and Wand}{Faes
  et~al.}{2011}]{Faes2011}
Faes, C., J.~T. Ormerod, and M.~P. Wand (2011).
\newblock Variational {B}ayesian inference for parametric and nonparametric
  regression with missing data.
\newblock {\em Journal of the American Statistical Association\/}~{\em 106},
  959--971.

\bibitem[\protect\citeauthoryear{Fasano, Durante, and Zanella}{Fasano
  et~al.}{2022}]{Fasano2022}
Fasano, A., D.~Durante, and G.~Zanella (2022).
\newblock {Scalable and accurate variational Bayes for high-dimensional binary
  regression models}.
\newblock {\em Biometrika\/}~{\em 109}, 901--919.

\bibitem[\protect\citeauthoryear{Gart}{Gart}{1979}]{Gart1979}
Gart, J.~J. (1979).
\newblock Statistical analyses of the relative risk.
\newblock {\em Environmental Health Perspectives\/}~{\em 32}, 157--167.

\bibitem[\protect\citeauthoryear{Genton, Keyes, and Turkiyyah}{Genton
  et~al.}{2018}]{Genton2018}
Genton, M.~G., D.~E. Keyes, and G.~Turkiyyah (2018).
\newblock Hierarchical decompositions for the computation of high-dimensional
  multivariate normal probabilities.
\newblock {\em Journal of Computational and Graphical Statistics\/}~{\em 27},
  268--277.

\bibitem[\protect\citeauthoryear{Genton and Loperfido}{Genton and
  Loperfido}{2005}]{Genton2005}
Genton, M.~G. and M.~Loperfido (2005).
\newblock Generalized skew-elliptical distributions and their quadratic forms.
\newblock {\em Annals of the Institute of Statistical Mathematics\/}~{\em 57},
  389--401.

\bibitem[\protect\citeauthoryear{Gonz\'{a}lez-Far\'{\i}as,
  Dom\'{\i}nguez-Molina, and Gupta}{Gonz\'{a}lez-Far\'{\i}as
  et~al.}{2004}]{Gonzalez2004}
Gonz\'{a}lez-Far\'{\i}as, G., J.~A. Dom\'{\i}nguez-Molina, and A.~K. Gupta
  (2004).
\newblock The closed skew-normal distribution.
\newblock In M.~Genton (Ed.), {\em Skew-Elliptical Distributions and Their
  Applications: A Journey Beyond Normality}, pp.\  25--42. Chapman and
  Hall/CRC.

\bibitem[\protect\citeauthoryear{Gretton, Borgwardt, Rasch, Sch{{\"o}}lkopf,
  and Smola}{Gretton et~al.}{2012}]{Gretton2012}
Gretton, A., K.~M. Borgwardt, M.~J. Rasch, B.~Sch{{\"o}}lkopf, and A.~Smola
  (2012).
\newblock A kernel two-sample test.
\newblock {\em Journal of Machine Learning Research\/}~{\em 13}, 723--773.

\bibitem[\protect\citeauthoryear{Hallin and Ley}{Hallin and
  Ley}{2014}]{Hallin2014}
Hallin, M. and C.~Ley (2014).
\newblock {Skew-symmetric distributions and Fisher information: The double sin
  of the skew-normal}.
\newblock {\em Bernoulli\/}~{\em 20}, 1432 -- 1453.

\bibitem[\protect\citeauthoryear{Han, Liao, Dunson, and Carin}{Han
  et~al.}{2016}]{Han2016}
Han, S., X.~Liao, D.~B. Dunson, and L.~C. Carin (2016).
\newblock Variational {G}aussian copula inference.
\newblock In A.~Gretton and C.~C. Robert (Eds.), {\em Proceedings of the 19th
  International Conference on Artificial Intelligence and Statistics},
  Volume~51, pp.\  829--838. PMLR.

\bibitem[\protect\citeauthoryear{Hoffman and Blei}{Hoffman and
  Blei}{2015}]{Hoffman2015}
Hoffman, M. and D.~Blei (2015).
\newblock Stochastic structured variational inference.
\newblock In G.~Lebanon and S.~Vishwanathan (Eds.), {\em Proceedings of the
  18th International Conference on Artificial Intelligence and Statistics},
  Volume~38, pp.\  361--369. PMLR.

\bibitem[\protect\citeauthoryear{Hoffman, Blei, Wang, and Paisley}{Hoffman
  et~al.}{2013}]{Hoffman2013}
Hoffman, M.~D., D.~M. Blei, C.~Wang, and J.~Paisley (2013).
\newblock Stochastic variational inference.
\newblock {\em Journal of Machine Learning Research\/}~{\em 14}, 1303--1347.

\bibitem[\protect\citeauthoryear{Hosmer, Lemeshow, and Sturdivant}{Hosmer
  et~al.}{2013}]{Hosmer2013}
Hosmer, D.~W., S.~Lemeshow, and R.~X. Sturdivant (2013).
\newblock {\em Applied logistic regression\/} (3rd ed.).
\newblock Hoboken, NJ: John Wiley \& Sons, Inc.

\bibitem[\protect\citeauthoryear{Jaakkola and Jordan}{Jaakkola and
  Jordan}{1998}]{Jaakkola1998}
Jaakkola, T.~S. and M.~I. Jordan (1998).
\newblock {\em Improving the mean field approximation via the use of mixture
  distributions}, pp.\  163--173.
\newblock Dordrecht: Springer.

\bibitem[\protect\citeauthoryear{Kingma and Ba}{Kingma and
  Ba}{2015}]{Kingma2015}
Kingma, D.~P. and J.~Ba (2015).
\newblock Adam: {A} method for stochastic optimization.
\newblock In Y.~Bengio and Y.~LeCun (Eds.), {\em Proceedings of the 3rd
  International Conference on Learning Representations}.

\bibitem[\protect\citeauthoryear{Kingma and Welling}{Kingma and
  Welling}{2014}]{Kingma2014}
Kingma, D.~P. and M.~Welling (2014).
\newblock Auto-encoding variational {B}ayes.
\newblock In Y.~Bengio and Y.~LeCun (Eds.), {\em Proceedings of the 2nd
  International Conference on Learning Representations}.

\bibitem[\protect\citeauthoryear{Kucukelbir, Tran, Ranganath, Gelman, and
  Blei}{Kucukelbir et~al.}{2017}]{Kucukelbir2016}
Kucukelbir, A., D.~Tran, R.~Ranganath, A.~Gelman, and D.~M. Blei (2017).
\newblock Automatic differentiation variational inference.
\newblock {\em Journal of machine learning research\/}~{\em 18}, 1--45.

\bibitem[\protect\citeauthoryear{Lin, Khan, and Schmidt}{Lin
  et~al.}{2019}]{Lin2019b}
Lin, W., M.~E. Khan, and M.~Schmidt (2019).
\newblock Fast and simple natural-gradient variational inference with mixture
  of exponential-family approximations.
\newblock In K.~Chaudhuri and R.~Salakhutdinov (Eds.), {\em Proceedings of the
  36th International Conference on Machine Learning}, Volume~97, pp.\
  3992--4002. PMLR.

\bibitem[\protect\citeauthoryear{Loaiza-Maya, Smith, Nott, and
  Danaher}{Loaiza-Maya et~al.}{2022}]{Loaiza2022}
Loaiza-Maya, R., M.~S. Smith, D.~J. Nott, and P.~J. Danaher (2022).
\newblock Fast and accurate variational inference for models with many latent
  variables.
\newblock {\em Journal of Econometrics\/}~{\em 230}, 339--362.

\bibitem[\protect\citeauthoryear{Magnus and Neudecker}{Magnus and
  Neudecker}{1980}]{Magnus1980}
Magnus, J.~R. and H.~Neudecker (1980).
\newblock The elimination matrix: Some lemmas and applications.
\newblock {\em SIAM Journal on Algebraic Discrete Methods\/}~{\em 1}, 422--449.

\bibitem[\protect\citeauthoryear{Magnus and Neudecker}{Magnus and
  Neudecker}{1999}]{Magnus1999}
Magnus, J.~R. and H.~Neudecker (1999).
\newblock {\em Matrix differential calculus with applications in statistics and
  econometrics\/} (3rd ed.).
\newblock New York: John Wiley \& Sons.

\bibitem[\protect\citeauthoryear{Margossian and Blei}{Margossian and
  Blei}{2024}]{Margossian2024}
Margossian, C.~C. and D.~M. Blei (2024).
\newblock Amortized variational inference: When and why?
\newblock arXiv: 2307.11018.

\bibitem[\protect\citeauthoryear{M\'{a}rquez-Urbina and
  Gonz\'{a}lez-Far\'{\i}as}{M\'{a}rquez-Urbina and
  Gonz\'{a}lez-Far\'{\i}as}{2022}]{Urbina2022}
M\'{a}rquez-Urbina, J.~U. and G.~Gonz\'{a}lez-Far\'{\i}as (2022).
\newblock A flexible special case of the {CSN} for spatial modeling and
  prediction.
\newblock {\em Spatial Statistics\/}~{\em 47}, 100556.

\bibitem[\protect\citeauthoryear{Nocedal and Wright}{Nocedal and
  Wright}{1999}]{Nocedal1999}
Nocedal, J. and S.~J. Wright (1999).
\newblock {\em Numerical Optimization}.
\newblock Springer.

\bibitem[\protect\citeauthoryear{Ong, Nott, and Smith}{Ong
  et~al.}{2018}]{Ong2018}
Ong, V. M.-H., D.~J. Nott, and M.~S. Smith (2018).
\newblock Gaussian variational approximation with a factor covariance
  structure.
\newblock {\em Journal of Computational and Graphical Statistics\/}~{\em 27},
  465--478.

\bibitem[\protect\citeauthoryear{Opper and Archambeau}{Opper and
  Archambeau}{2009}]{Opper2009}
Opper, M. and C.~Archambeau (2009).
\newblock The variational {G}aussian approximation revisited.
\newblock {\em Neural Computation\/}~{\em 21}, 786--792.

\bibitem[\protect\citeauthoryear{Ormerod}{Ormerod}{2011}]{Ormerod2011}
Ormerod, J.~T. (2011).
\newblock Skew-normal variational approximations for {B}ayesian inference.
\newblock Technical report, School of Mathematics and Statistics, University of
  Sydney.

\bibitem[\protect\citeauthoryear{Papamakarios, Nalisnick, Rezende, Mohamed, and
  Lakshminarayanan}{Papamakarios et~al.}{2021}]{Papamakarios2021}
Papamakarios, G., E.~Nalisnick, D.~J. Rezende, S.~Mohamed, and
  B.~Lakshminarayanan (2021).
\newblock Normalizing flows for probabilistic modeling and inference.
\newblock {\em Journal of Machine Learning Research\/}~{\em 22}, 1--64.

\bibitem[\protect\citeauthoryear{Pinheiro and Bates}{Pinheiro and
  Bates}{1996}]{Pinheiro1996}
Pinheiro, J.~C. and D.~M. Bates (1996).
\newblock Unconstrained parametrizations for variance-covariance matrices.
\newblock {\em Statistics and Computing\/}~{\em 22}, 1573--1375.

\bibitem[\protect\citeauthoryear{Quiroz, Nott, and Kohn}{Quiroz
  et~al.}{2023}]{Quiroz2023}
Quiroz, M., D.~J. Nott, and R.~Kohn (2023).
\newblock Gaussian variational approximations for high-dimensional state space
  models.
\newblock {\em Bayesian Analysis\/}~{\em 18}, 989--1016.

\bibitem[\protect\citeauthoryear{Racine, Grieve, Fluhler, and Smith}{Racine
  et~al.}{1986}]{Racine1986}
Racine, A., A.~Grieve, H.~Fluhler, and A.~Smith (1986).
\newblock Bayesian methods in practice: {E}xperiences in the pharmaceutical
  industry.
\newblock {\em Applied Statistics\/}~{\em 35}, 93--150.

\bibitem[\protect\citeauthoryear{Ray and Szabo}{Ray and Szabo}{2022}]{Ray2022}
Ray, K. and B.~Szabo (2022).
\newblock Variational {B}ayes for high-dimensional linear regression with
  sparse priors.
\newblock {\em Journal of the American Statistical Association\/}~{\em 117},
  1270--1281.

\bibitem[\protect\citeauthoryear{Rezende and Mohamed}{Rezende and
  Mohamed}{2015}]{Rezende2015}
Rezende, D. and S.~Mohamed (2015).
\newblock Variational inference with normalizing flows.
\newblock In F.~Bach and D.~Blei (Eds.), {\em Proceedings of the 32nd
  International Conference on Machine Learning}, Volume~37, pp.\  1530--1538.
  PMLR.

\bibitem[\protect\citeauthoryear{Rezende, Mohamed, and Wierstra}{Rezende
  et~al.}{2014}]{Rezende2014}
Rezende, D.~J., S.~Mohamed, and D.~Wierstra (2014).
\newblock Stochastic backpropagation and approximate inference in deep
  generative models.
\newblock In E.~P. Xing and T.~Jebara (Eds.), {\em Proceedings of the 31st
  International Conference on Machine Learning}, Volume~32, pp.\  1278--1286.
  PMLR.

\bibitem[\protect\citeauthoryear{Robbins and Monro}{Robbins and
  Monro}{1951}]{Robbins1951}
Robbins, H. and S.~Monro (1951).
\newblock A stochastic approximation method.
\newblock {\em The Annals of Mathematical Statistics\/}~{\em 22}, 400--407.

\bibitem[\protect\citeauthoryear{Rosenbaum}{Rosenbaum}{2005}]{Rosenbaum2005}
Rosenbaum, P.~R. (2005).
\newblock An exact distribution-free test comparing two multivariate
  distributions based on adjacency.
\newblock {\em Journal of the Royal Statistical Society Series B: Statistical
  Methodology\/}~{\em 67}, 515--530.

\bibitem[\protect\citeauthoryear{Saha, Bharath, and Kurtek}{Saha
  et~al.}{2020}]{Saha2020}
Saha, A., K.~Bharath, and S.~Kurtek (2020).
\newblock A geometric variational approach to {B}ayesian inference.
\newblock {\em Journal of the American Statistical Association\/}~{\em 115},
  822--835.

\bibitem[\protect\citeauthoryear{Sahu, Dey, and Branco}{Sahu
  et~al.}{2003}]{Sahu2003}
Sahu, S.~K., D.~K. Dey, and M.~D. Branco (2003).
\newblock A new class of multivariate skew distributions with applications to
  {B}ayesian regression models.
\newblock {\em Canadian Journal of Statistics\/}~{\em 31}, 129--150.

\bibitem[\protect\citeauthoryear{Salimans and Knowles}{Salimans and
  Knowles}{2013}]{Salimans2013}
Salimans, T. and D.~A. Knowles (2013).
\newblock Fixed-form variational posterior approximation through stochastic
  linear regression.
\newblock {\em Bayesian Analysis\/}~{\em 8}, 837--882.

\bibitem[\protect\citeauthoryear{Salomone, Yu, Nott, and Kohn}{Salomone
  et~al.}{2024}]{Salomone2024}
Salomone, R., X.~Yu, D.~J. Nott, and R.~Kohn (2024).
\newblock Structured variational approximations with skew normal decomposable
  graphical models and implicit copulas.
\newblock {\em Journal of Computational and Graphical Statistics\/}~(to
  appear).

\bibitem[\protect\citeauthoryear{Scotto, Kopf, and Urbach}{Scotto
  et~al.}{1974}]{Scotto1974}
Scotto, J., A.~W. Kopf, and F.~Urbach (1974).
\newblock Non-melanoma skin cancer among caucasians in four areas of the united
  states.
\newblock {\em Cancer\/}~{\em 34}, 1333--1338.

\bibitem[\protect\citeauthoryear{Smith, Loaiza-Maya, and Nott}{Smith
  et~al.}{2020}]{Smith2020}
Smith, M.~S., R.~Loaiza-Maya, and D.~J. Nott (2020).
\newblock High-dimensional copula variational approximation through
  transformation.
\newblock {\em Journal of Computational and Graphical Statistics\/}~{\em 29},
  729--743.

\bibitem[\protect\citeauthoryear{Spall}{Spall}{2003}]{Spall2003}
Spall, J.~C. (2003).
\newblock {\em Introduction to stochastic search and optimization: estimation,
  simulation and control}.
\newblock New Jersey: Wiley.

\bibitem[\protect\citeauthoryear{Stimper, Liu, Campbell, Berenz, Ryll,
  Schölkopf, and Hernández-Lobato}{Stimper et~al.}{2023}]{Stimper2023}
Stimper, V., D.~Liu, A.~Campbell, V.~Berenz, L.~Ryll, B.~Schölkopf, and J.~M.
  Hernández-Lobato (2023).
\newblock normflows: A pytorch package for normalizing flows.
\newblock {\em Journal of Open Source Software\/}~{\em 8}, 5361.

\bibitem[\protect\citeauthoryear{Strack, DeShazo, Gennings, Olmo, Ventura,
  Cios, and Clore}{Strack et~al.}{2014}]{Strack2014}
Strack, B., J.~P. DeShazo, C.~Gennings, J.~L. Olmo, S.~Ventura, K.~J. Cios, and
  J.~N. Clore (2014).
\newblock Impact of {HbA1c} measurement on hospital readmission rates:
  {A}nalysis of 70,000 clinical database patient records.
\newblock {\em BioMed Research International\/}~{\em 2014}, 781670.

\bibitem[\protect\citeauthoryear{Tan}{Tan}{2021}]{Tan2021}
Tan, L. S.~L. (2021).
\newblock {Use of model reparametrization to improve variational Bayes}.
\newblock {\em Journal of the Royal Statistical Society Series B: Statistical
  Methodology\/}~{\em 83}, 30--57.

\bibitem[\protect\citeauthoryear{Tan}{Tan}{2024a}]{Tan2024a}
Tan, L. S.~L. (2024a).
\newblock Analytic natural gradient updates for cholesky factor in {G}aussian
  variational approximation.
\newblock arXiv: 2109.00375.

\bibitem[\protect\citeauthoryear{Tan}{Tan}{2024b}]{Tan2024b}
Tan, L. S.~L. (2024b).
\newblock Some properties of elimination matrices.
\newblock (In preparation).

\bibitem[\protect\citeauthoryear{Tan and Nott}{Tan and Nott}{2018}]{Tan2018}
Tan, L. S.~L. and D.~J. Nott (2018).
\newblock Gaussian variational approximation with sparse precision matrices.
\newblock {\em Statistics and Computing\/}~{\em 28}, 259--275.

\bibitem[\protect\citeauthoryear{Titsias and L{\'a}zaro-Gredilla}{Titsias and
  L{\'a}zaro-Gredilla}{2014}]{Titsias2014}
Titsias, M. and M.~L{\'a}zaro-Gredilla (2014).
\newblock Doubly stochastic variational {B}ayes for non-conjugate inference.
\newblock In E.~P. Xing and T.~Jebara (Eds.), {\em Proceedings of the 31st
  International Conference on Machine Learning}, Volume~32, pp.\  1971--1979.
  PMLR.

\bibitem[\protect\citeauthoryear{Turner and Sahani}{Turner and
  Sahani}{2011}]{Turner2011}
Turner, R.~E. and M.~Sahani (2011).
\newblock {\em Two problems with variational expectation maximisation for time
  series models}, pp.\  104–124.
\newblock Cambridge University Press.

\bibitem[\protect\citeauthoryear{van~der Vaart}{van~der
  Vaart}{2000}]{vandervaart2000}
van~der Vaart, A.~W. (2000).
\newblock {\em Asymptotic Statistics: 3}.
\newblock Cambridge University Press.

\bibitem[\protect\citeauthoryear{Wang and Blei}{Wang and Blei}{2019}]{Wang2019}
Wang, Y. and D.~M. Blei (2019).
\newblock Frequentist consistency of variational {B}ayes.
\newblock {\em Journal of the American Statistical Association\/}~{\em 114},
  1147--1161.

\bibitem[\protect\citeauthoryear{Xu, Quiroz, Kohn, and Sisson}{Xu
  et~al.}{2019}]{Xu2019}
Xu, M., M.~Quiroz, R.~Kohn, and S.~A. Sisson (2019).
\newblock Variance reduction properties of the reparameterization trick.
\newblock In K.~Chaudhuri and M.~Sugiyama (Eds.), {\em Proceedings of the 22nd
  International Conference on Artificial Intelligence and Statistics},
  Volume~89, pp.\  2711--2720. PMLR.

\bibitem[\protect\citeauthoryear{Yan and Genton}{Yan and
  Genton}{2019}]{Yan2019}
Yan, Y. and M.~G. Genton (2019).
\newblock The {T}ukey g-and-h distribution.
\newblock {\em Significance\/}~{\em 16}, 12--13.

\bibitem[\protect\citeauthoryear{Yeo and Johnson}{Yeo and
  Johnson}{2000}]{Yeo2000}
Yeo, I.-K. and R.~A. Johnson (2000).
\newblock A new family of power transformations to improve normality or
  symmetry.
\newblock {\em Biometrika\/}~{\em 87}, 954--959.

\bibitem[\protect\citeauthoryear{Yu and Bondell}{Yu and Bondell}{2023}]{Yu2023}
Yu, W. and H.~D. Bondell (2023).
\newblock Variational {B}ayes for fast and accurate empirical likelihood
  inference.
\newblock {\em Journal of the American Statistical Association\/}~(to appear).

\bibitem[\protect\citeauthoryear{Zareifard, Rue, Khaledi, and
  Lindgren}{Zareifard et~al.}{2016}]{Zareifard2016}
Zareifard, H., H.~Rue, M.~J. Khaledi, and F.~Lindgren (2016).
\newblock A skew {G}aussian decomposable graphical model.
\newblock {\em Journal of Multivariate Analysis\/}~{\em 145}, 58--72.

\bibitem[\protect\citeauthoryear{Zhang, B{\"u}tepage, Kjellstr{\"o}m, and
  Mandt}{Zhang et~al.}{2018}]{Zhang2018}
Zhang, C., J.~B{\"u}tepage, H.~Kjellstr{\"o}m, and S.~Mandt (2018).
\newblock Advances in variational inference.
\newblock {\em IEEE transactions on pattern analysis and machine
  intelligence\/}~{\em 41}, 2008--2026.

\bibitem[\protect\citeauthoryear{Zhou, Gao, Tran, and Gerlach}{Zhou
  et~al.}{2021}]{Zhou2021}
Zhou, B., J.~Gao, M.-N. Tran, and R.~Gerlach (2021).
\newblock Manifold optimization-assisted {G}aussian variational approximation.
\newblock {\em Journal of Computational and Graphical Statistics\/}~{\em 30},
  946--957.

\bibitem[\protect\citeauthoryear{Zhou, Grazian, and Ormerod}{Zhou
  et~al.}{2024}]{Zhou2024}
Zhou, J., C.~Grazian, and J.~T. Ormerod (2024).
\newblock Tractable skew-normal approximations via matching.
\newblock {\em Journal of Statistical Computation and Simulation\/}~{\em 94},
  1016--1034.

\bibitem[\protect\citeauthoryear{Zhou, Ormerod, and Grazian}{Zhou
  et~al.}{2023}]{Zhou2023}
Zhou, J., J.~T. Ormerod, and C.~Grazian (2023).
\newblock Fast expectation propagation for heteroscedastic, lasso-penalized,
  and quantile regression.
\newblock {\em Journal of Machine Learning Research\/}~{\em 24}, 1--39.

\end{thebibliography}
}

\end{document}